\documentclass[sn-mathphys-num]{sn-jnl}
\usepackage{lineno,hyperref}
\usepackage{url}
\usepackage{stmaryrd}
\usepackage{tikz-cd}
\usepackage{yfonts}
\usepackage[utf8]{inputenc}
\usepackage[T1]{fontenc}
\usepackage [autostyle]{csquotes}
\usepackage{caption}
\usepackage{amsfonts}
\usepackage{amsthm}
\usepackage{eqnarray}
\usepackage{mathtools}

\usepackage{float}
\usepackage{amssymb}
\usepackage{amsmath}
\usepackage{enumerate}
\usepackage{lipsum}
\theoremstyle{thmstyleone}
\newtheorem{thm}{Theorem}
\newtheorem{lem}{Lemma}
\newtheorem{prop}{Proposition}
\newtheorem{cor}{Corollary}
\newtheorem{defn}{Definition}
\newtheorem{exmp}{Example}

\newtheorem{rem}{Remark}
\newtheorem{note}{Note}

\begin{document}
	\title{Positive Instantial Neighbourhood logic: Typed Completeness and Admissible-Open Representation}{}
	\author*[1,2]{\fnm{Litan Kumar} \sur{Das}}\email{ld06iitkgp@gmail.com}
	\author[1,2]{\fnm{Anupam } \sur{Khanra}}\email{anupamk.math.rs@jadavpuruniversity.in}
	\equalcont{These authors contributed equally to this work.}
	\author[1,2]{\fnm{Sujit Kumar} \sur{Sardar}}\email{sujitk.sardar@jadavpuruniversity.in}
	\equalcont{These authors contributed equally to this work.}
	\affil*[1]{\orgdiv{Department of Mathematics}, \orgname{Mahishadal Raj College}, \orgaddress{\street{Mahishadal}, \city{}, \postcode{721628}, \state{West Bengal}, \country{India}}}
	
	\affil[2]{\orgdiv{Department of Mathematics}, \orgname{Jadavpur University}, \orgaddress{\street{Jadavpur}, \city{Kolkata}, \postcode{700032}, \state{West Bengal}, \country{India}}}
	
	\affil[3]{\orgdiv{Department of Mathematics}, \orgname{Jadavpur University}, \orgaddress{\street{Jadavpur}, \city{Kolkata}, \postcode{700032}, \state{West Bengal}, \country{India}}}
	
		
\abstract{Instantial neighbourhood logic is a modal language for neighbourhood frames in which formulas can express information about the kinds of worlds occurring inside a neighbourhood of a given world. In this paper, we study a positive, negation- and implication-free version of instantial neighbourhood logic with two primitive instantial modalities, one of \(\Box\)-type and one of \(\Diamond\)-type. Since classical negation is not available, the two modalities are treated independently. We introduce the language and proof system of positive instantial neighbourhood logic (PINL) and interpret it over persistent two-sided neighbourhood models. We then define a typed persistent neighbourhood semantics, used as an auxiliary canonical semantics to control witness and co-witness conditions. This yields a truth lemma and a typed completeness theorem for PINL. On the algebraic side, we introduce \(2\)-$\mathrm{DLIO}$s, bounded distributive lattices equipped with two families of instantial operations, as the algebraic semantics of PINL. We prove algebraic soundness and completeness via the Lindenbaum \(2\)-$\mathrm{DLIO}$. Finally, we construct the canonical bitopological PINL-space and show that the algebra of its admissible positive opens is isomorphic to the Lindenbaum \(2\)-$\mathrm{DLIO}$. Thus the paper establishes a canonical admissible-open representation of positive instantial neighbourhood logic, providing a first step toward a future duality theory.}

	\keywords{Positive modal logic, Instantial neighbourhood logic, Typed semantics, \(2\)-DLIO, Bitopological representation}
	\maketitle
\section{Introduction}\label{INT}
Positive logic is the negation-and implication free fragment of propositional logic, constructed from propositional variables, the constants \(\top\) and \(\bot\), and the lattice connectives $\wedge$ and $\vee$. It may also be viewed as the finitary part of geometric logic \cite{vickers1989topology}, where finite conjunctions and finite disjunctions are allowed. Algebraically, this positive base is naturally interpreted in bounded distributive lattices. Positive modal logic, initiated by Dunn \cite{dunn1995positive}, extends positive propositional logic by adding two normal modal operators, $\Box$ and $\Diamond$, while still avoiding classical negation. Subsequent research produced relational, algebraic, and duality-theoretic semantics for positive modal logics, including Priestley-style dualities and coalgebraic approaches; see, for example, \cite{celani1997new,celani1999priestley,celani2012note,gehrke2005sahlqvist,kikot2018strictly,palmigiano2004coalgebraic,sadrzadeh2010positive}). More recently, de Groot studied positive monotone modal logic \cite{de2021positive} as the positive fragment of monotone modal logic and proved a duality between suitable neighbourhood spaces and distributive lattices with monotone operators. These works show that positive modal logics are naturally connected with ordered spaces, distributive lattices, and coalgebraic duality.\\
Neighbourhood semantics for modal logic has its roots in the work of Scott and Montague and was later systematically developed in the study of non-normal modal logics; see, for example, \cite{scott1970advice,montague1970universal,chellas1980modal,hansen2009neighbourhood,pacuit2017neighborhood}. 
Instantial neighbourhood logic was introduced in \cite{van2017instantial} as a modal language for neighbourhood frames. Its modalities can express not only that a neighbourhood satisfies a scope condition, but also that it contains points satisfying specified instance formulas. A coalgebraic duality for INL was later developed in \cite{bezhanishvili2020duality}. In that work, descriptive INL-frames were shown to be dual to Boolean algebras with instantial operators($\mathrm{BAIO}$). The same paper also points to positive INL as a natural related formalism, where Boolean algebras are replaced by distributive lattices with instantial operators. It further suggests a geometric version of INL, guided by the slogan that geometric INL should arise from positive INL by adding Scott-continuity. This connects the present direction with coalgebraic geometric logic and with point-free topological approaches to modal logic \cite{bezhanishvili2022coalgebraic,vickers2004double,vickers1989topology}.\\
The above literature suggests a positive direction for INL, but it does not develop a full proof-theoretic and canonical semantic treatment of such a logic. In particular, what is still missing is a positive consequence system with two independent instantial modalities, a persistent two-sided neighbourhood semantics, a canonical completeness argument adapted to the absence of negation, and a representation of the resulting algebra by admissible opens of a canonical bitopological space. The present paper addresses this more specific gap.

We develop positive instantial neighbourhood logic, abbreviated PINL. The language has two primitive instantial modalities: one of $\Box$-type and one of $\Diamond$-type. Since negation is not available, the two modalities are not defined from one another. Instead, they are treated independently. This leads to a two-sided semantic and algebraic setting: the $\Box$-type modality is governed by witness conditions, while the $\Diamond$-type modality is governed by co-witness conditions.\\
The typed semantics used in this paper is an auxiliary canonical semantics. The typed semantics is used as an auxiliary tool for the canonical construction; it is not intended to be a separate final semantics for PINL. In particular, formula labels keep track of which modal formula a finite neighbourhood set is intended to witness or to refute as a co-witness. This prevents an extensional neighbourhood set chosen for one modal formula from being incorrectly used for another modal formula. The typed construction is then converted, in the final representation section, into an admissible-open representation.\\
The proposed logic is useful for reasoning about local positive information. In settings such as information states or observational models, one may want to say that a neighbourhood satisfies a general condition and also contains specified positive instances, without relying on negation or implication. The two primitive instantial modalities allow us to distinguish witness behaviour from its dual co-witness behaviour, while the distributive-lattice base keeps the logic compatible with order and persistence.\\
The main contributions of the paper are the following. First, we formulate a positive two-sided version of instantial neighbourhood logic with two primitive modalities, \(\Box\) and \(\Diamond\), and give its persistent neighbourhood semantics. Second, we show why the ordinary unlabelled neighbourhood semantics is not adequate for a direct canonical truth lemma, and we introduce a typed persistent neighbourhood semantics to control the witness and co-witness conditions. This yields the typed truth lemma and the corresponding completeness theorem. Third, we introduce \(2\)-\(\mathrm{DLIO}\)s as the algebraic counterpart of the independent \(\Box\)- and \(\Diamond\)-fragments. The Lindenbaum-algebra argument gives the expected algebraic soundness and completeness. Finally, we replace formula-labelled canonical neighbourhoods by admissible-open labels and construct the canonical bitopological \(\mathrm{PINL}\)-space. We prove that the algebra of admissible positive opens of this space is isomorphic to the Lindenbaum \(2\)-\(\mathrm{DLIO}\). This gives a canonical admissible-open representation theorem for \(\mathrm{PINL}\).

The scope of the paper is intentionally limited. We prove a canonical representation theorem, not a categorical duality theorem. A Priestley-style or Vietoris-style duality for $2$-$\mathrm{DLIO}$s would require a suitable category of descriptive PINL-spaces, appropriate morphisms, and a functorial treatment of the instantial neighbourhood structure. This is left for future work.\\
The paper is organised as follows. \\
Section \ref{PRE} recalls the basic order-theoretic and lattice-theoretic notions used in the paper, including posets, upward closed sets, and distributive lattices. It also reviews the classical language of instantial neighbourhood logic. Section \ref{PNSS} introduces the language of PINL and discusses the persistence of truth. Section \ref{PSPINL} presents the proof system of PINL, including the propositional axioms, modal axioms, and the main derivable principles used later. Section \ref{SCPINL} proves soundness of PINL with respect to persistent two-sided neighbourhood models. Section \ref{CPMTC} introduces typed persistent neighbourhood models. These models are used as an auxiliary semantics for the canonical construction. In this section we construct the canonical typed model and prove the truth lemma, completeness, and the separation result for PINL. Section \ref{ASOP} introduces \(2\)-$\mathrm{DLIO}$s and formulates the algebraic semantics of PINL. It also develops the Lindenbaum algebra and proves algebraic soundness and completeness. Section \ref{BPAOR} develops the bitopological view of PINL. It introduces bitopological PINL-spaces, constructs the canonical bitopological PINL-space, and proves the admissible-open representation theorem for the Lindenbaum \(2\)-$\mathrm{DLIO}$ of PINL. Section \ref{CAFW} concludes the paper and discusses directions for future work.

\section{Preliminaries}\label{PRE}
In this section, we collect the basic order-theoretic, algebraic, and semantic notions used throughout the paper. Since our aim is to develop a positive version of instantial neighbourhood logic, the propositional base is distributive-lattice logic rather than classical propositional logic. Accordingly, we work without classical negation and without implication as a primitive connective, and we formulate derivability in consequence form. The modal semantics is formulated over ordered two-sided neighbourhood structures. This choice is motivated by the positive direction suggested for INL in \cite{bezhanishvili2020duality}, where distributive lattices with instantial operators are indicated as the natural positive analogue of Boolean algebras with instantial operators.\\
For standard background on posets, up-sets, and distributive lattices, we refer to \cite{davey2002introduction}.
\subsection*{Posets and upward closed sets}
A partially ordered set is a pair $(W,\leq)$, where $\leq$ is reflexive, antisymmetric, and transitive. For $X\subseteq W$, we write upward closure of $X$ as 
\[
\uparrow X=\{w\in W:\exists x\in X \text{ such that } x\leq w\}.
\]
A subset $U\subseteq W$ is called an up-set or upward closed if $u\in U$ and $u\leq v$ then $v\in U$. Let $\mathrm{Up}(W,\leq)$ denote the set of all up-sets of $W$.
\subsection*{Distributive lattice background}
A bounded distributive lattice is an algebra 
\[
(D,\vee,\wedge,\bot,\top)
\]
such that $(D,\vee,\wedge)$ is a distributive lattice with least element $\bot$ and greatest element $\top$. The lattice order $\leq$ is defined as 
\[
a\leq b\iff a\wedge b=a\iff a\vee b=b.
\]
Since the logic is positive, the propositional fragment is based on distributive-lattice logic. Thus instead of taking all classical propositional logic axioms, we take as the propositional base the valid laws of bounded distributive lattices. This is exactly the positive analogue of the classical propositional basis, and it matches the Dunn's positive modal logic view point(\cite{dunn1995positive}), where the underlying algebraic semantics is distributive-lattice based rather than Boolean.
\subsection*{Background on Instantial Neighbourhood Logic}
We briefly recall the classical language of Instantial neighbourhood logic, following \cite{van2017instantial}. Let $\mathsf{Prop}$ be an arbitrary but fixed set of propositional letters. The language $\mathcal{L}_{INL}$ of INL is generated recursively by 
\[
\varphi::=\top\mid p\mid \neg\varphi\mid (\varphi\wedge\varphi)\mid \Box(\varphi_1,\ldots,\varphi_n;\varphi), \quad p\in\mathsf{Prop},\ n\in\omega. 
\]
When $n=0$, we write simply $\Box(\psi)$.
As usual, the connectives $\bot$, $\vee$, $\to$, and $\leftrightarrow$ are defined from $\neg$ and $\wedge$ in the standard way. \\

A neighbourhood frame is a pair $\mathfrak F=(W,N)$, where $W$ is a non-empty
set and $N:W\to \mathcal{P}(\mathcal{P}(W))$ is a neighbourhood function, where $\mathcal{P}(W)$ denotes the power set of $W$. A valuation on $W$ is a map $V:\mathsf{Prop}\to \mathcal{P}(W)$. A neighbourhood
model is a triple $\mathfrak M=(W,N,V)$.\\
The truth of INL-formulas is defined inductively in the usual way for the
propositional connectives. For the instantial modal formula, 
\[
\mathfrak M,w \models \Box(\psi_1,\ldots,\psi_n;\varphi)
\]
iff there exists $S\in N(w)$ such that every point of $S$ satisfies
$\varphi$, and for each $i\leq n$ there exists some $s_i\in S$ such that
\[
\mathfrak M,s_i \models \psi_i.
\]
Equivalently, if $\llbracket \chi \rrbracket^{\mathfrak M}$ denotes the truth
set of a formula $\chi$, then
\[
\mathfrak M,w \models \Box(\psi_1,\ldots,\psi_n;\varphi)
\]
iff there exists $S\in N(w)$ such that
\[
S \subseteq \llbracket \varphi \rrbracket^{\mathfrak M}
\quad\text{and}\quad
S \cap \llbracket \psi_i \rrbracket^{\mathfrak M}\neq \varnothing
\ \text{for each } i\leq n.
\]

Thus, the INL-modality expresses that there is a neighbourhood of the current
world whose elements all satisfy the scope formula $\varphi$, while each instance
formula $\psi_i$ is realized somewhere inside that neighbourhood.

Instantial neighbourhood logic \cite{van2017instantial} was introduced to enrich neighbourhood semantics by allowing formulas to express not only universal information about a neighbourhood, but also existential information about the different kinds of worlds occurring inside it. 
The INL study \cite{van2017instantial} emphasizes that ordinary neighbourhood semantics naturally supports a quantifier pattern of the form $\exists\forall$, while the instantial language refines this by tracking which kinds of worlds occur in the neighbourhood. It also explains that the completeness proof for classical INL proceeds by normal forms and matching canonical models.\\
This motivates the positive two-sided setting developed in the next section.

\section{Positive Instantial Neighbourhood Logic: Syntax and Semantics}\label{PNSS}
In this section, we introduce the positive version of instantial neighbourhood logic. Classical INL extends propositional logic by an instantial modality interpreted over neighbourhood structures. In the present setting, however, we work over a positive propositional base and treat the two instantial modalities as independent primitive modal connectives. More precisely, we replace the classical propositional language by the positive propositional language whose algebraic semantics is given by bounded distributive lattices, and we interpret the resulting
modal language over persistent two-sided neighbourhood structures. One
neighbourhood component governs witness behaviour for the $\Box$-type
connective, while the other governs co-witness behaviour for the
$\Diamond$-type connective. This separation is natural in the positive setting, since in the absence of negation the two modal connectives are no longer interdefinable.


\subsection{The language of Positive INL}
Let $\mathsf{Prop}$ be an arbitrary but fixed set of propositional variables. The language $\mathcal{L}_{\mathrm{PINL}}$ of positive instantial neighbourhood logic is defined recursively by
\[
\varphi::=\top\mid\bot \mid p\mid (\varphi\wedge\psi)\mid (\varphi\vee\psi)\mid \Box(\varphi_1,\ldots,\varphi_n;\chi) \mid \Diamond(\varphi_1,\ldots,\varphi_n;\chi),
\]
where $p\in\mathsf{Prop}$ and $n\in\omega$.
When $n=0$, we write simply $\Box(\chi)$ and $\Diamond(\chi)$.\\
Thus Positive INL is obtained from the classical instantial framework by replacing the negation-based propositional language with the positive language of bounded distributive lattices, and by taking two primitive instantial connectives, one of box-type and one of diamond-type, instead of relying on definability by negation.\\
For a modal formula of the form
\[
\Box(\varphi_1,\ldots,\varphi_n;\psi)
\quad\text{or}\quad
\Diamond(\varphi_1,\ldots,\varphi_n;\psi),
\]
where \(\varphi_1,\ldots,\varphi_n,\psi\in\mathcal{L}_{\mathrm{PINL}}\), the formulas \(\varphi_1,\ldots,\varphi_n\) are called the instance formulas, and \(\psi\) is called the scope formula. We use the terms witness and co-witness for the corresponding neighbourhood-set conditions in the semantics.
\subsection*{Witness and co-witness}
We recall the notion of witness and co-witness for neighbourhoods, following \cite{bezhanishvili2020duality}.\\
Let $A_1,....,A_n,B\subseteq W$, and let $U\subseteq W$. We say that $U$ witnesses $(A_1,\ldots,A_n;B)$ if 
\[
U\subseteq B \text{ and } U\cap A_i\neq\varnothing \;\text{ for each } i=1,\ldots, n.
\]
We say that $U$ co-witnesses $(A_1,\ldots,A_n;B)$ if
\[
U\subseteq A_i\; \text{ for some } i\in\{1,2,\ldots,n\} \;\text{ or } U\cap B\neq \varnothing.
\]
For \(n=0\), the witness condition reduces to \(U\subseteq B\), while the co-witness condition reduces to \(U\cap B\neq\varnothing\).
\subsection*{Persistent two-sided neighbourhood frames and models}
\begin{defn}\label{PNF}
A persistent two-sided neighbourhood frame is a structure 
\[
\mathfrak F=(W,\leq,N^\Box,N^\Diamond),
\]
where 
\begin{enumerate}[(i)]
	\item $W$ is a non-empty set and $(W,\leq)$ is a partially ordered set, and 
	\[
	N^\Box,N^\Diamond:W\to\mathcal{P}(\mathrm{Up}(W,\leq))
	\]
	are neighbourhood functions, where $\mathcal{P}(\mathrm{Up}(W,\leq))$ is the power set of $\mathrm{Up}(W,\leq)$. Thus, for each $w\in W$,
	\[
	N^\Box(w)\subseteq \mathrm{Up}(W,\leq) \text{ and } N^\Diamond(w)\subseteq \mathrm{Up}(W,\leq);
	\]
	\item  for every $w\leq v$,
	\[
	N^\Box(w)\subseteq N^\Box(v) \, \text{ and } \, N^\Diamond(v)\subseteq N^\Diamond(w).
	\]
\end{enumerate}
\end{defn}
Thus every $\Box$-neighbourhood and every $\Diamond$-neighbourhood is an upset, the $\Box$-side neighbourhood assignment is monotone upward and the $\Diamond$-side neighbourhood assignment is monotone downward.
\begin{defn}\label{PNM}
	A persistent two-sided neighbourhood model is a tuple 
	\[
	\mathfrak M=(W,\leq,N^\Box,N^\Diamond,V),
	\]
	where $(W,\leq,N^\Box,N^\Diamond)$ is a persistent two-sided neighbourhood frame and 
	\[
	V:\mathsf{Prop}\to \mathrm{Up}(W,\leq)
	\]
	is a valuation assigning to each propositional variable an upset of $W$.
\end{defn}

\subsection{Truth of PINL-formulas}
\begin{defn}[Truth of PINL-formula]\label{TD}
	Let $\mathfrak M=(W,\leq,N^\Box,N^\Diamond,V)$ be a persistent two-sided neighbourhood model, let $w\in W$, and let $\varphi \in\mathcal{L}_{\mathrm{PINL}}$. We recursively define when $\mathfrak M$ satisfies $\varphi$ at the state $w\in W$, written
	\[
	\mathfrak M, w\models\varphi
	\]
	as follows:
	\begin{enumerate}[(i)]
		\item for propositional variables and lattice connectives: 
		\[
		\mathfrak M,w\models p\iff w\in V(p),
		\]
	\[
	\mathfrak M, w\models\top\;\text{ always }, \mathfrak M, w\models\bot\;\text{ never },
	\]
	\[
	\mathfrak M,w\models\varphi\wedge\psi\iff\mathfrak M, w\models\varphi\;\text{ and }\; \mathfrak M, w\models\psi,
	\]
	\[
	\mathfrak M, w\models\varphi\vee\psi\iff\mathfrak M,w\models\varphi\;\text{ or }\;\mathfrak M,w\models\psi.
	\]
\item for the modal formulas:
\[
\mathfrak M, w\models\Box(\varphi_1,\ldots,\varphi_n;\varphi)
\]
\[
\quad\text{ iff}\quad\text{there exists } U\in N^\Box(w)\text{ such that for all } u\in U,\;\; \mathfrak M, u\models\varphi,
\]
\[
\qquad\qquad\text{and for all } i\in\{1,\ldots,n\}
\text{ there exists } u_i\in U \text{ such that }
\mathfrak M,u_i \models \varphi_i,
\]
\\
\[
\mathfrak M,w \models \Diamond(\varphi_1,\ldots,\varphi_n;\varphi)
\]
\[
\quad \text{iff} \quad
\text{for every } S\in N^\Diamond(w),
\text{ either there exists } i\in\{1,\ldots,n\} \text{ such that }
\]
\[
\qquad\qquad\text{for all } s\in S \text{ we have } \mathfrak M,s \models \varphi_i,
\text{ or there exists } s\in S \text{ such that } \mathfrak M,s \models \varphi.
\]
\end{enumerate}
\end{defn}
For each formula $\chi\in\mathcal{L}_{\mathrm{PINL}}$, we write 
\[
\llbracket\chi\rrbracket^{\mathfrak M}=\{w\in W:\mathfrak M, w\models\chi\}
\]
for the truth set of $\chi$ in $\mathfrak M$.\\
Thus 
\[
\mathfrak M, w\models\Box(\varphi_1,\ldots,\varphi_n;\varphi)\iff \exists\; U\in N^\Box(w) \text{ such that } U \text{ witnesses } (\llbracket\varphi_1\rrbracket^{\mathfrak M},\ldots,\llbracket\varphi_n\rrbracket^{\mathfrak M};\llbracket\varphi\rrbracket^{\mathfrak M}).
\]
Further, 
\[
\mathfrak M,w\models\Diamond(\varphi_1,\varphi_2,\ldots,\varphi_n;\varphi)\iff \forall U\in N^\Diamond(w)\text{ co-witnesses } (\llbracket\varphi_1\rrbracket^{\mathfrak M},\ldots,\llbracket\varphi_n\rrbracket^{\mathfrak M};\llbracket\varphi\rrbracket^{\mathfrak M}).
\]

\subsection*{Persistence of truth}
The persistence conditions imposed on the frame ensure that truth is monotone upward in the order.
\begin{lem}\label{POT}
	Let $\mathfrak M=(W,\leq,N^\Box,N^\Diamond,V)$ be a persistent two-sided neighbourhood model. Then for every formula $\varphi$, the truth set $\llbracket\varphi\rrbracket^{\mathfrak M}$ is an upset of $W$. Equivalently, if $w\leq v$ and $\mathfrak M, w\models\varphi$, then $\mathfrak M,v\models\varphi$.
	\begin{proof}
		The proof is by induction on the complexity of $\varphi$. The propositional cases follow from the fact that \(V(p)\) is an up-set for every propositional variable \(p\), and that up-sets are closed under finite intersections and finite unions.\\
		Suppose $w\leq v$ and 
		\[
		\mathfrak M,w\models\Box(\varphi_1,\ldots,\varphi_n;\psi).
		\]
		Then there exists $U\in N^\Box(w)$ such that 
		\[
		U\subseteq\llbracket\psi\rrbracket^{\mathfrak M}\text{ and } U\cap\llbracket\varphi_i\rrbracket^{\mathfrak M}\neq\varnothing \text{ for each } i=1,2,\ldots,n.
		\]
		Since \(w\leq v\), the persistence condition for \(N^\Box\) gives
		\[
		N^\Box(w)\subseteq N^\Box(v).
		\] 
		Hence $U\in N^\Box(v)$. 
		 Therefore, the same $U$ witnesses 
		\[
		(\llbracket\varphi_1\rrbracket^{\mathfrak M},\ldots,
		\llbracket\varphi_n\rrbracket^{\mathfrak M};
		\llbracket\psi\rrbracket^{\mathfrak M}).
		\]
		Hence
		\[
		\mathfrak M,v\models\Box(\varphi_1,\ldots,\varphi_n;\psi).
		\]
		Next suppose $w\leq v$ and 
		\[
		\mathfrak M,w\models\Diamond(\varphi_1,\ldots,\varphi_n;\psi).
		\]
		Let $U\in N^\Diamond(v)$. Since $w\leq v$, the persistence condition for  $N^\Diamond$ gives
		\[
		N^\Diamond(v)\subseteq N^\Diamond(w),
		\]
	Thus $U\in N^\Diamond(w)$. Because $\mathfrak M,w\models\Diamond(\varphi_1,\ldots,\varphi_n;\psi)$, the set $U$ co-witnesses 
		\[
		(\llbracket\varphi_1\rrbracket^{\mathfrak M},\ldots,
	\llbracket\varphi_n\rrbracket^{\mathfrak M};
	\llbracket\psi\rrbracket^{\mathfrak M}).
		\]
		Therefore the same condition is true when evaluated at $v$, and hence 
		\[
		\mathfrak M, v\models\Diamond(\varphi_1,\ldots,\varphi_n;\psi). 
		\]
		This completes the induction.
	\end{proof}
\end{lem}
\section{A Proof System for Positive INL}\label{PSPINL}
In this section, we introduce the deductive system for positive instantial neighbourhood logic. Since the propositional base is positive and implication is not taken as primitive, it is natural to formulate the system in entailment form rather than as a classical Hilbert-style system. Thus the formal objects of the calculus are sequents 
\[
\varphi\vdash\psi,
\]
to be read as saying that $\varphi$ entails $\psi$. We write $\varphi\dashv\vdash\psi$ to mean 
\[
\varphi\vdash\psi\text{ and } \psi\vdash\varphi. 
\]  
A PINL formula $\varphi$ is called a theorem if the sequent $\top\vdash\varphi$ is derivable in the system, denoted by $\vdash\varphi$.\\
The proof system consists of three parts:
\begin{enumerate}
	\item propositional axioms, given by the axioms of bounded distributive lattices;
	\item modal axioms for $\Box$;
	\item modal axioms for $\Diamond$.
\end{enumerate}
The system also contains the structural rules listed below.\\
\subsection*{Structural rules}
We take the following rules of inference.\\

\textbf{(CUT)}
\[
\frac{\varphi\vdash\psi\quad\;\psi\vdash\chi}{\varphi\vdash\chi}
\]
\textbf{(US) Uniform substitution}
\[
\frac{\varphi\vdash\psi}{\sigma(\varphi)\vdash\sigma(\psi)}
\]
where $\sigma$ is a uniform substitution of formulas for propositional variables.\\

\textbf{(RE) Replacement of equivalents}
\[
\frac{\alpha\dashv\vdash\beta\quad\;\varphi\vdash\psi}{\varphi[\beta/\alpha]\vdash\psi[\beta/\alpha]}
\]
where $\varphi[\beta/\alpha]$ denotes the result of replacing occurrences of $\alpha$ in $\varphi$ by $\beta$, and similarly for \(\psi[\beta/\alpha]\).
\subsection*{Propositional axioms}
The propositional base of the system is the positive consequence relation of bounded distributive lattices. Thus the system contains, as axioms schemata, all bounded distributive lattice entailments. In particular, the following are available:
\begin{enumerate}[(i)]
	\item reflexivity:
	\[
	\varphi\vdash\varphi;
	\]
\item commutativity:
\[
\varphi\wedge\psi\dashv\vdash\psi\wedge\varphi,\quad\; \varphi\vee\psi\dashv\vdash\psi\vee\varphi;
\]
\item associativity:
\[
(\varphi\wedge\psi)\wedge\chi\dashv\vdash\varphi\wedge(\psi\wedge\chi),\quad\;
(\varphi\vee\psi)\vee\chi\dashv\vdash\varphi\vee(\psi\vee\chi);
\]
\item idempotence:
\[
\varphi\wedge\varphi\dashv\vdash\varphi,\quad \varphi\vee\varphi\dashv\vdash\varphi;
\]
\item absorption:
\[
\varphi\wedge(\varphi\vee\psi)\dashv\vdash\varphi,\quad\; \varphi\vee(\varphi\wedge\psi)\dashv\vdash\varphi;
\]
\item  distributivity:
\[
\varphi\wedge(\psi\vee\chi)\dashv\vdash(\varphi\wedge\psi)\vee(\varphi\wedge\chi),\quad\;
\varphi\vee(\psi\wedge\chi)\dashv\vdash(\varphi\vee\psi)\wedge(\varphi\vee\chi);
\]
\item bounds:
\[
\begin{aligned}
\bot\vdash\varphi,\quad\;\varphi\vdash\top;\\
\varphi\wedge\top\dashv\vdash\varphi,\quad\;\varphi\vee\bot\dashv\vdash\varphi;\\
\varphi\wedge\bot\dashv\vdash\bot,\quad\;\varphi\vee\top\dashv\vdash\top;
\end{aligned}
\]
\item conjunction elimination:
\[
\varphi\wedge\psi\vdash\varphi,\quad\;\varphi\wedge\psi\vdash\psi;
\]
\item disjunction introduction:
\[
\varphi\vdash\varphi\vee\psi,\quad\; \psi\vdash\varphi\vee\psi.
\]
\end{enumerate}
The following inference rules are admissible in the bounded distributive lattice fragment.
\begin{enumerate}[(i)]
	\item Conjunction introduction$\colon$
	\[
	\frac{\alpha\vdash\beta\quad\;\alpha\vdash\gamma}{\alpha\vdash\beta\wedge\gamma};
	\]
	\item Disjunction elimination$\colon$
	\[
	\frac{\beta\vdash\alpha\quad\;\gamma\vdash\alpha}{\beta\vee\gamma\vdash\alpha};
	\]
	\item  Distributive cut$\colon$
	\[
	\frac{\varphi\wedge\alpha\vdash\psi\quad\;\varphi\vdash\psi\vee\alpha}{\varphi\vdash\psi}.
	\]
\end{enumerate}
\subsection*{Modal axioms for $\Box$}
The modal axioms for $\Box$ are the following.
\begin{enumerate}[(i)]
	\item (\(\Box1\)): for each $1\leq k\leq n$,
	\[
	\Box(\varphi_1,\ldots,\varphi_{k-1},\bot,\varphi_{k+1},\ldots,\varphi_n;\psi)\vdash\bot,
	\]
	\item ($\Box$2) Permutation invariance: for every permutation $\pi$ of $\{1,\ldots,n\}$,
	\[
	\Box(\varphi_1,\ldots,\varphi_n;\psi)\dashv\vdash\Box(\varphi_{\pi(1)},\ldots,\varphi_{\pi(n)};\psi),
	\]
	\item ($\Box$3) Join-preservation in each instance position: for each $1\leq k\leq n$,
	\[
	\begin{aligned}
	\Box(\varphi_1,\ldots,\varphi_{k-1},\alpha\vee\beta,\varphi_{k+1},\ldots,\varphi_n;\psi)\\
	\dashv\vdash\Box(\varphi_1,\ldots,\varphi_{k-1},\alpha,\varphi_{k+1},\ldots,\varphi_n;\psi)\vee\Box(\varphi_1,\ldots,\varphi_{k-1},\beta,\varphi_{k+1},\ldots,\varphi_n;\psi),
\end{aligned}
	\]
	\item ($\Box$4) Meet-subadditivity in the scope coordinate: 
	\[
	\Box(\varphi_1,\ldots,\varphi_n;\psi\wedge\chi)\vdash\Box(\varphi_1,\ldots,\varphi_n;\psi)\wedge\Box(\varphi_1,\ldots,\varphi_n;\chi),
	\]
	\item ($\Box$5) Scope absorption in an instance position:
	\[
	\Box(\varphi_1,\ldots,\varphi_k,\ldots,\varphi_n;\psi)\vdash\Box(\varphi_1,\ldots,\varphi_k\wedge\psi,\ldots,\varphi_n;\psi),\quad 1\leq k\leq n.,
	\]
	\item ($\Box$6) Positive cover axiom: if 
	\[
	\top\dashv\vdash\gamma\vee\delta,
	\]
	then
	\[
	\Box(\varphi_1,\ldots,\varphi_n;\psi)\vdash\Box(\varphi_1,\ldots,\varphi_n,\gamma;\psi)\vee\Box(\varphi_1,\ldots,\varphi_n;\psi\wedge\delta),
	\]
	\item ($\Box$7) Weakening in arity:
	\[
	\Box(\varphi_1,\ldots,\varphi_n,\varphi_{n+1};\psi)\vdash\Box(\varphi_1,\ldots,\varphi_n;\psi),
	\]
	\item ($\Box$8) Duplication: for $n\geq 1$,
	\[
	\Box(\varphi_1,\ldots,\varphi_n;\psi)\vdash\Box(\varphi_1,\ldots,\varphi_n,\varphi_n;\psi).
	\]
\end{enumerate}
\noindent\subsection*{Modal axioms for $\Diamond$}
The modal axioms for $\Diamond$ are adapted to the universal co-witness semantics.
\begin{enumerate}[(i)]
	\item ($\Diamond1$) $\top$ in an instance formula: for each $1\leq k\leq n$,
	\[
	\top\vdash\Diamond(\varphi_1,\ldots,\varphi_{k-1},\top,\varphi_{k+1},\ldots,\varphi_n;\psi)
	\]
	\item ($\Diamond2$) Permutation invariance: for every permutation $\pi$ of $\{1,2,\ldots,n\}$,
	\[
	\Diamond(\varphi_1,\ldots,\varphi_n;\psi)\dashv\vdash\Diamond(\varphi_{\pi(1)},\ldots,\varphi_{\pi(n)};\psi),
	\]
	\item ($\Diamond3$) Meet-preservation in each instance position: for each $1\leq k\leq n$,
	\[
		\Diamond(\varphi_1,\ldots,\varphi_{k-1},\alpha\wedge\beta,\varphi_{k+1},\ldots,\varphi_n;\psi)
		\]
		\[\dashv\vdash\Diamond(\varphi_1,\ldots,\varphi_{k-1},\alpha,\varphi_{k+1},\ldots,\varphi_n;\psi)\wedge\Diamond(\varphi_1,\ldots,\varphi_{k-1},\beta,\varphi_{k+1},\ldots,\varphi_n;\psi),
	\]
	\item ($\Diamond4$) Meet-subadditivity in the scope position: 
	\[
	\Diamond(\varphi_1,\ldots,\varphi_n;\psi\wedge\chi)\vdash\Diamond(\varphi_1,\ldots,\varphi_n;\psi)\wedge\Diamond(\varphi_1,\ldots,\varphi_n;\chi),
	\]
	\item ($\Diamond5$) Absorption in join: 
	\[
	\Diamond(\varphi_1,\ldots,\varphi_{n-1},\varphi_n\vee\psi;\psi)\vdash\Diamond(\varphi_1,\ldots,\varphi_n;\psi),\quad n\geq 1
	\]
	\item ($\Diamond6$) Dual cover axiom: if 
	\[
	\gamma\wedge\delta\dashv\vdash\bot,
	\]
	then
	\[
	\Diamond(\varphi_1,\ldots,\varphi_n,\gamma;\psi)\wedge\Diamond(\varphi_1,\ldots,\varphi_n;\psi\vee\delta)\vdash\Diamond(\varphi_1,\ldots,\varphi_n;\psi),
	\]
	\item ($\Diamond7$) Arity extension:
	\[
	\Diamond(\varphi_1,\ldots,\varphi_n;\psi)\vdash\Diamond(\varphi_1,\ldots,\varphi_n,\varphi_{n+1};\psi),
	\]
	\item ($\Diamond8$) Duplication invariance:
	\[
	\Diamond(\varphi_1,\ldots,\varphi_n,\varphi_n;\psi)\dashv\vdash\Diamond(\varphi_1,\ldots,\varphi_n;\psi),\quad n\geq 1.
	\]
\end{enumerate}
We also assume the following monotonicity rules for the two independent modal operators.
\begin{enumerate}[(i)]
	\item \[
	(\Box \mathrm{Mon})\qquad \frac{\varphi_i\vdash\varphi_i'\;(1\leq i\leq n)\quad\psi\vdash\psi'}{\Box(\varphi_1,\ldots,\varphi_n;\psi)\vdash\Box(\varphi_1',\ldots,\varphi_n';\psi')}.
	\]
	For the \(\Diamond\)-modality we take the analogous monotonicity rule:
	\item \[
		(\Diamond \mathrm{Mon})\qquad \frac{\varphi_i\vdash\varphi_i'\;(1\leq i\leq n)\quad\psi\vdash\psi'}{\Diamond(\varphi_1,\ldots,\varphi_n;\psi)\vdash\Diamond(\varphi_1',\ldots,\varphi_n';\psi')}.
	\]
\end{enumerate}
\begin{lem}\label{ADSD}
	In $\mathrm{PINL}$, the following schemata are derivable.\\
	\begin{enumerate}[(i)]
	\item  \[
		(\mathrm{R\text{-}Mon}_{\Box})\qquad
		\Box(\varphi_1,\ldots,\varphi_n;\psi)\vdash
		\Box(\varphi_1,\ldots,\varphi_n;\psi\vee\chi),
		\]
\item 	\[
		(\mathrm{L\text{-}Mon}_{\Box})\qquad
		\Box(\varphi_1,\ldots,\varphi_i,\ldots,\varphi_n;\psi)\vdash
		\Box(\varphi_1,\ldots,\varphi_i\vee\chi,\ldots,\varphi_n;\psi),
		\]
	\item  \[(\mathrm{R\text{-}Mon}_{\Diamond})\qquad
		\Diamond(\varphi_1,\ldots,\varphi_n;\psi)\vdash
		\Diamond(\varphi_1,\ldots,\varphi_n;\psi\vee\chi),
				\]
		\item 	 \[
				(\mathrm{L\text{-}Mon}_{\Diamond})\qquad
				\Diamond(\varphi_1,\ldots,\varphi_i,\ldots,\varphi_n;\psi)\vdash
				\Diamond(\varphi_1,\ldots,\varphi_i\vee\chi,\ldots,\varphi_n;\psi).
				\]

\end{enumerate}
\begin{proof}
	\begin{enumerate}[(i)]
		\item Since $\psi\vdash\psi\vee\chi$ is derivable in the propositional fragment, an application of  $(\Box\mathrm{Mon})$ in the scope formula yields
		\[
		\Box(\varphi_1,\ldots,\varphi_n;\psi)\vdash
		\Box(\varphi_1,\ldots,\varphi_n;\psi\vee\chi).
		\]
		\item Since \(\varphi_i\vdash \varphi_i\vee\chi\) is derivable in the propositional fragment, an application of \((\Box\mathrm{Mon})\) in the \(i\)-th position of instance formulas yields
		\[
		\Box(\varphi_1,\ldots,\varphi_i,\ldots,\varphi_n;\psi)
		\vdash
		\Box(\varphi_1,\ldots,\varphi_i\vee\chi,\ldots,\varphi_n;\psi).
		\]
		\item  Since $\psi\vdash \psi\vee\chi$ is derivable in the propositional fragment,
		an application of $(\Diamond\mathrm{Mon})$ in the scope position yields
		\[	\Diamond(\varphi_1,\ldots,\varphi_n;\psi)\vdash
		\Diamond(\varphi_1,\ldots,\varphi_n;\psi\vee\chi),
		\]
		\item Similarly, since $\varphi_i\vdash\varphi_i\vee\chi$  is derivable in the propositional fragment, applying $(\Diamond\mathrm{Mon})$ in the \( i \)-th position of instance formulas yields
		 \[
		\Diamond(\varphi_1,\ldots,\varphi_i,\ldots,\varphi_n;\psi)
		\vdash
		\Diamond(\varphi_1,\ldots,\varphi_i\vee\chi,\ldots,\varphi_n;\psi).
		\]
	\end{enumerate}
\end{proof}
\end{lem}
\subsection*{Relation with the original INL-style nomenclature}
It may be useful to point out explicitly how the present positive system relates to the axiom names used in the Boolean INL literature. The original Hilbert system for INL contains schemata usually denoted by $\mathrm{R}\text{-}\mathrm{Mon}, \mathrm{L}\text{-}\mathrm{Mon}, \mathrm{Inst}, \mathrm{Norm}, \mathrm{Case}, \mathrm{Weak}, \text{ and } \mathrm{Dupl}$. In the present positive system, the monotonicity principles \(\mathrm{R}\text{-}\mathrm{Mon}\) and \(\mathrm{L}\text{-}\mathrm{Mon}\) are represented by the corresponding \(\Box\)- and
\(\Diamond\)-monotonicity schemata derived in Lemma \ref{ADSD}. The axiom $(\Box 1)$ is the positive analogue of $\mathrm{Norm}$, $(\Box 5)$ is the positive analogue of $\mathrm{Inst}$, and $(\Box 7)$ and $(\Box 8)$ correspond to $\mathrm{Weak}$ and $\mathrm{Dupl}$, respectively. The axiom $(\Box 6)$ replaces the classical Boolean $\mathrm{Case}$ axiom by its positive cover form. The corresponding $\Diamond$-schemata form the independent dual package.\\
\noindent The system $\mathrm{PINL}$ introduced in this section will be the deductive basis for the subsequent soundness analysis and for the later algebraic treatment by means of $2$-$\mathrm{DLIO}$s. At this stage, we keep the two modal operators fully independent and do not impose any interaction axioms between them.
\section{Soundness of PINL}\label{SCPINL}
In this section, we study the relation between the proof system introduced in Section
\ref{PSPINL} and the persistent two-sided neighbourhood semantics introduced in Section \ref{PNSS}. We prove soundness of PINL with respect to this persistent semantics. A direct canonical completeness proof for the untyped persistent semantics is obstructed by the extensional nature of neighbourhoods. For this reason, the completeness argument will be carried out in the next section using an auxiliary typed persistent semantics. 

\subsection*{Derivability and semantic validity}
We begin by fixing the proof-theoretic and semantic notation that will be used throughout the section.
\begin{defn}
	For any two formulas $\varphi,\psi$ of $\mathcal{L}_{\mathrm{PINL}}$, we write 
	\[
	\varphi\vdash_{\mathrm{PINL}}\psi
	\]
	if the sequent $\varphi\vdash\psi$ is derivable in the proof system for PINL.
\end{defn}
\begin{defn}
	Let $\mathfrak M=(W,\leq,N^\Box,N^\Diamond,V)$ be a persistent two-sided neighbourhood model. We say the sequent 
	\[
	\varphi\vdash\psi
	\]
	is valid in $\mathfrak M$, and write $\mathfrak M\models (\varphi\vdash\psi)$, if 
	\[
	\llbracket\varphi\rrbracket^{\mathfrak M}\subseteq\llbracket\psi\rrbracket^{\mathfrak M}.
	\]
	Equivalently,
	\[
	\mathfrak M\models (\varphi\vdash\psi)
	\]
	iff for every $w\in W$,
	\[
	\mathfrak M,w\models\varphi\Longrightarrow\mathfrak M,w\models\psi.
	\]
\end{defn}
\begin{defn}
	We say that $\varphi\vdash\psi$ is valid in the persistent two-sided neighbourhood semantics, and write 
	\[
	\models(\varphi\vdash\psi),
	\]
	if 
	\[
	\mathfrak M\models (\varphi\vdash\psi)
	\]
	for every persistent two-sided neighbourhood model $\mathfrak M$.
	\end{defn}
\subsection{Soundness for persistent two-sided neighbourhood models}
We now verify that the proof system for $\mathrm{PINL}$ is sound for the persistent semantics.
\begin{thm}
	For all formulas $\varphi,\psi\in\mathcal{L}_{\mathrm{PINL}}$,
	\[
	\varphi\vdash_{\mathrm{PINL}}\psi\quad\Longrightarrow\quad \models (\varphi\vdash\psi).
	\]
	\begin{proof}
		The proof is by induction on the derivation of the sequent
		\[
		\varphi\vdash_{\mathrm{PINL}}\psi.
		\]
		Consider an arbitrary persistent two-sided neighbourhood model $\mathfrak M=(W,\leq,N^\Box,N^\Diamond,V)$. We show that 
		\[
		\mathfrak M\models (\varphi\vdash\psi).
		\]
		Since $\mathfrak M$ was arbitrary, we have $\models (\varphi\vdash\psi)$.\\ 
		Propositional axioms are valid because in every model:
		\[
		\llbracket\alpha\wedge\beta\rrbracket=\llbracket\alpha\rrbracket\cap\llbracket\beta\rrbracket,\; \llbracket\alpha\vee\beta\rrbracket=\llbracket\alpha\rrbracket\cup\llbracket\beta\rrbracket,\quad
		\llbracket\bot\rrbracket=\varnothing,\; \llbracket\top\rrbracket=W.
		\]
	Thus a sequent \(\alpha\vdash\beta\) is valid exactly when
	\[
	\llbracket\alpha\rrbracket^M\subseteq\llbracket\beta\rrbracket^M,
	\]
	and an equivalence \(\alpha\dashv\vdash\beta\) is valid exactly when
	\[
	\llbracket\alpha\rrbracket^M=\llbracket\beta\rrbracket^M.
	\]
	Therefore, all bounded distributive lattice axioms are valid in every persistent two-sided neighbourhood model. For example, the axiom $\bot\vdash\varphi$ is valid because $\varnothing\subseteq\llbracket\varphi\rrbracket^{\mathfrak M}$. Similarly, $\varphi\wedge\psi\vdash\varphi$ is valid because $\llbracket\varphi\rrbracket\cap\llbracket\psi\rrbracket^{\mathfrak M}\subseteq\llbracket\varphi\rrbracket^{\mathfrak M}$.\\
	We now consider the modal axioms. The axioms
	\[
	(\Box2),\ (\Box3),\ (\Box4),\ (\Box7),\ (\Box8)
	\]
	and
	\[
	(\Diamond2),\ (\Diamond3),\ (\Diamond4),\ (\Diamond7),\ (\Diamond8)
	\]
	follow directly from the witness and co-witness clauses. We verify only the non-trivial axioms:
	\[
	(\Box1),\quad(\Box5),\quad(\Box6),\quad(\Diamond1),\quad(\Diamond5),\quad(\Diamond6).
	\]
	For $\Box1$, suppose
	\[
	\mathfrak M,w\models \Box(\varphi_1,\ldots,\varphi_{k-1},\bot,\varphi_{k+1},\ldots,\varphi_n;\psi).
	\]
	Then there is $S\in N^\Box(w)$ such that $S\cap\llbracket\bot\rrbracket^{\mathfrak M}\neq\varnothing$. But $\llbracket\bot\rrbracket^{\mathfrak M}\neq\varnothing$, which is impossible. Hence
	\[
	\llbracket\Box(\varphi_1,\ldots,\varphi_{k-1},\bot,\varphi_{k+1},\ldots,\varphi_n;\psi)\rrbracket^M=\varnothing,
	\]
	and therefore \((\Box1)\) is valid.\\
	For \(\Box5\), suppose
	\[
	\mathfrak M,w\models \Box(\varphi_1,\ldots,\varphi_n;\psi).
	\]
	Then there exists \(S\in N^\Box(w)\) such that
	\[
	S\subseteq \llbracket\psi\rrbracket^M
	\]
	and
	\[
	S\cap\llbracket\varphi_i\rrbracket^{\mathfrak M}\neq\varnothing
	\quad\text{for each }i=1,\ldots,n.
	\]
	In particular, choose
	\[
	x\in S\cap\llbracket\varphi_n\rrbracket^{\mathfrak M}.
	\]
	Since $	S\subseteq \llbracket\psi\rrbracket^M$, we also have
	\[
	x\in\llbracket\psi\rrbracket^{\mathfrak M}.
	\]
	Thus
	\[
	x\in \llbracket\varphi_n\wedge\psi\rrbracket^{\mathfrak M}.
	\]
	Therefore the same \( S \) witnesses $(\llbracket\varphi_1\rrbracket^{\mathfrak M},\ldots,\llbracket\varphi_n\wedge\psi\rrbracket^{\mathfrak M};\llbracket\psi\rrbracket^{\mathfrak M})$. Thus
	\[
	\mathfrak M,w\models \Box(\varphi_1,\ldots,\varphi_{n-1},\varphi_n\wedge\psi;\psi).
	\]
	Hence \((\Box5)\) is valid.\\
	For \(\Box6\), assume 
	\[
	\top\dashv\vdash \gamma\vee\delta.
	\]
	By the soundness of the propositional part, for every persistent two-sided neighbourhood model \(\mathfrak M\), 
	\[
	\llbracket\top\rrbracket^M=\llbracket\gamma\vee\delta\rrbracket^{\mathfrak M}.
	\]
	Hence
	\[
	W=\llbracket\gamma\rrbracket^M\cup\llbracket\delta\rrbracket^{\mathfrak M}.
	\]
	Now suppose
	\[
	\mathfrak M,w\models \Box(\varphi_1,\ldots,\varphi_n;\psi).
	\]
	Then there exists \(S\in N^\Box(w)\) such that
	\[
	S\subseteq \llbracket\psi\rrbracket^{\mathfrak M}
	\]
	and
	\[
	S\cap\llbracket\varphi_i\rrbracket^{\mathfrak M}\neq\varnothing
	\quad\text{for each }i=1,\ldots,n.
	\]
	Now if 
	\[
	S\cap\llbracket\gamma\rrbracket^{\mathfrak M}\neq\varnothing,
	\]
	then the same \( S\) witnesses $(\llbracket\varphi_1\rrbracket^{\mathfrak M},\ldots,\llbracket\varphi_n\rrbracket^{\mathfrak M},\llbracket\gamma\rrbracket^{\mathfrak{M}};\llbracket\psi\rrbracket^{\mathfrak M})$. Hence
	\[
	\mathfrak M,w\models \Box(\varphi_1,\ldots,\varphi_n,\gamma;\psi).
	\]
	If $S\cap\llbracket\gamma\rrbracket^{\mathfrak M}=\varnothing$, then for every $x\in S$, we have $x\notin\llbracket\gamma\rrbracket^{\mathfrak M}$. Since $x\in W=\llbracket\gamma\rrbracket^{\mathfrak M}\cup\llbracket\delta\rrbracket^{\mathfrak M}$, we have $x\in\llbracket\delta\rrbracket^{\mathfrak M}$. Consequently, $S\subseteq\llbracket\delta\rrbracket^{\mathfrak M}$. Hence
	\[
	S\subseteq\llbracket\psi\rrbracket^{\mathfrak M}\cap\llbracket\delta\rrbracket^{\mathfrak M}=\llbracket\psi\wedge\delta\rrbracket^{\mathfrak M}.
	\]  
	So the same $S$ witnesses $(\llbracket\varphi_1\rrbracket^{\mathfrak M},\ldots,\llbracket\varphi_n\rrbracket^{\mathfrak M};\llbracket\psi\wedge\delta\rrbracket^{\mathfrak M})$. Therefore, $w\models\Box(\varphi_1,\ldots,\varphi_n;\psi\wedge\delta)$.\\
	Thus, in either case,
	\[
	\mathfrak M,w\models
	\Box(\varphi_1,\ldots,\varphi_n,\gamma;\psi)
	\vee
	\Box(\varphi_1,\ldots,\varphi_n;\psi\wedge\delta).
	\]
	Hence \((\Box6)\) is valid.\\
	We now verify the \(\Diamond\)-axioms. \\
	For \(\Diamond1\), suppose that the \(k\)-th instance formula is \(\top\). Let $A\in N^\Diamond(w)$. 
	Since
	\[
	\llbracket\top\rrbracket^M=W
	\]
	and \(A\subseteq W\), we have
	\[
	A\subseteq \llbracket\top\rrbracket^{\mathfrak M}.
	\]
	Thus \(A\) co-witnesses $(\llbracket\varphi_1\rrbracket^{\mathfrak M},\ldots,\llbracket\varphi_{k-1},\llbracket\top\rrbracket^{\mathfrak M},\llbracket\varphi_{k+1}\rrbracket^{\mathfrak M},\ldots,\llbracket\varphi_{n}\rrbracket^{\mathfrak M};\llbracket\psi\rrbracket^{\mathfrak M})$. Hence
	\[
	\mathfrak M, w\models
	\Diamond(\varphi_1,\ldots,\varphi_{k-1},\top,\varphi_{k+1},\ldots,\varphi_n;\psi).
	\]
	Therefore, \((\Diamond1)\) is valid.\\
	For \((\Diamond5)\), suppose
	\[
	\mathfrak M,w\models \Diamond(\varphi_1,\ldots,\varphi_{n-1},\varphi_n\vee\psi;\psi).
	\]
	Let \(T\in N^\Diamond(w)\). By the semantic clause for \(\Diamond\), either
	\[
	T\subseteq\llbracket\varphi_i\rrbracket^{\mathfrak M}
	\quad\text{for some }i< n,
	\]
	or
	\[
	T\subseteq\llbracket\varphi_n\vee\psi\rrbracket^\{\mathfrak M\},
	\]
	or
	\[
	T\cap\llbracket\psi\rrbracket^{\mathfrak M}\neq\varnothing.
	\]
If either $T\subseteq\llbracket\varphi_i\rrbracket^{\mathfrak M}
	\quad\text{for some }i< n$ or $T\cap\llbracket\psi\rrbracket^{\mathfrak M}\neq\varnothing$, then the same \(T\) co-witnesses $(\llbracket\varphi_1\rrbracket^{\mathfrak M},\ldots,\llbracket\varphi_n\rrbracket^{\mathfrak M};\llbracket\psi\rrbracket^{\mathfrak M})$. 
	Now consider the case: 
	\[
	T\subseteq\llbracket\varphi_n\vee\psi\rrbracket^\{\mathfrak M\}.
	\]
	If $T\cap\llbracket\psi\rrbracket^{\mathfrak M}\neq\varnothing$, then the required condition holds. Otherwise, 
	\[
		T\cap\llbracket\psi\rrbracket^{\mathfrak M}=\varnothing.
	\]
	Since 
	\[
	\llbracket\varphi_n\vee\psi\rrbracket^\{\mathfrak M\}=\llbracket\varphi_n\rrbracket^{\mathfrak M}\cup\llbracket\psi\rrbracket^{\mathfrak M},
	\]
	it follows s that $T\subseteq \llbracket\varphi_n\rrbracket^{\mathfrak M}$. Then again the required condition holds. Therefore every \(T\in N^\Diamond(w)\) co-witnesses 
	\[
	(\llbracket\varphi_1\rrbracket^{\mathfrak M},\ldots,\llbracket\varphi_n\rrbracket^{\mathfrak M};\llbracket\psi\rrbracket^{\mathfrak M}).
	\]
	Hence 
	\[
	\mathfrak M,w\models\Diamond(\varphi_1,\ldots,\varphi_n;\psi). 
	\]
	So \(\Diamond5\) is valid.\\
	For \((\Diamond6)\), assume
	\[
	\gamma\wedge\delta\dashv\vdash\bot.
	\]
	Then
	\[
	\llbracket\gamma\rrbracket^{\mathfrak M}\cap\llbracket\delta\rrbracket^{\mathfrak M}=\varnothing.
	\]
	Suppose
	\[
	\mathfrak M,w\models \Diamond(\varphi_1,\ldots,\varphi_n,\gamma;\psi)
	\wedge
	\Diamond(\varphi_1,\ldots,\varphi_n;\psi\vee\delta).
	\]
	Then we have both 
	\[
	w\models\Diamond(\varphi_1,\ldots,\varphi_n,\gamma;\psi),
	\]
	and 
	\[
	w\models \Diamond(\varphi_1,\ldots,\varphi_n;\psi\vee\delta).
	\]
	Let \(T\in N^\Diamond(w)\). From
	\[
	\mathfrak M,w\models \Diamond(\varphi_1,\ldots,\varphi_n;\psi\vee\delta),
	\]
	we have either
	\[
	T\subseteq\llbracket\varphi_i\rrbracket^{\mathfrak M}
	\quad\text{for some }i=1,\ldots,n,
	\]
	or
	\[
	T\cap\llbracket\psi\vee\delta\rrbracket^{\mathfrak M}\neq\varnothing.
	\]
	In the first case, the required co-witness condition is satisfied. In the second case,
	either
	\[
	T\cap\llbracket\psi\rrbracket^{\mathfrak M}\neq\varnothing,
	\]
	and again we are done,\\
	or
	\[
	T\cap\llbracket\delta\rrbracket^{\mathfrak M}\neq\varnothing\quad\text{and}\quad
	T\cap\llbracket\psi\rrbracket^{\mathfrak M}=\varnothing.
	\]
	In this last case, using 
	\[
	\mathfrak M,w\models\Diamond(\varphi_1,\ldots,\varphi_n,\gamma;\psi),
	\]
	we get either
	\[
	T\subseteq\llbracket\varphi_i\rrbracket^{\mathfrak M}
	\quad\text{for some }i=1,\ldots,n,
	\]
	or 
	\[
	T\subseteq\llbracket\gamma\rrbracket^{\mathfrak M},
	\]
	or
	\[
	T\cap\llbracket\psi\rrbracket^{\mathfrak M}\neq\varnothing.
	\]
	The last alternative is impossible by the assumption $T\cap\llbracket\psi\rrbracket^{\mathfrak M}=\varnothing$. If  
	\[
	T\subseteq\llbracket\gamma\rrbracket^{\mathfrak M},
	\]
	then together with 
	\[
	T\cap\llbracket\delta\rrbracket^{\mathfrak M}\neq\varnothing
	\]
	we obtain 
	\[
	\llbracket\gamma\rrbracket^{\mathfrak M}\cap\llbracket\delta\rrbracket^{\mathfrak M}\neq\varnothing,
	\]
	contradicting 
	\[
	\llbracket\gamma\rrbracket^{\mathfrak M}\cap\llbracket\delta\rrbracket^{\mathfrak M}=\varnothing.
	\]
	Hence the only possible remaining case is 
	\[
	T\subseteq\llbracket\varphi_i\rrbracket^{\mathfrak M}
	\quad\text{for some }i=1,\ldots,n.
	\]
	Thus \(T\) co-witnesses
	\[
	(\llbracket\varphi_1\rrbracket^{\mathfrak M},\ldots,\llbracket\varphi_n\rrbracket^{\mathfrak M};\llbracket\psi\rrbracket^{\mathfrak M}).
	\]
	Since \(T\in N^\Diamond(w)\) was arbitrary, we have
	\[
	\mathfrak M,w\models \Diamond(\varphi_1,\ldots,\varphi_n;\psi).
	\]
	Therefore \((\Diamond6)\) is valid.\\
	It remains to verify that the inference rules preserve validity. $\mathrm{CUT}$ preserves validity by transitivity of set inclusion.\\
	For the rule \((\Box\mathrm{Mon})\), assume that 
	\[
	\varphi_i\vdash\varphi_i'
	\quad (1\leq i\leq n),
	\qquad
	\psi\vdash\psi'
	\]
	are valid. Then 
	\[
	\llbracket\varphi_i\rrbracket^{\mathfrak M}\subseteq\llbracket\varphi_i'\rrbracket^{\mathfrak M}
	\quad (1\leq i\leq n),
	\qquad
	\llbracket\psi\rrbracket^{\mathfrak M}\subseteq\llbracket\psi'\rrbracket^{\mathfrak M}.
	\]
	Suppose
	\[
	\mathfrak{M},w\models \Box(\varphi_1,\ldots,\varphi_n;\psi).
	\]
	Then there exists \(S\in N^\Box(w)\) such that 
	\[
	S\subseteq\llbracket\psi\rrbracket^{\mathfrak M}
	\]
	and 
	\[
	S\cap\llbracket\varphi_i\rrbracket^{\mathfrak M}\neq\varnothing
	\quad\text{for each }i.
	\]
	By the above inclusions, 
	\[
	S\subseteq\llbracket\psi'\rrbracket^{\mathfrak M}
	\]
	and 
	\[
	S\cap\llbracket\varphi_i'\rrbracket^{\mathfrak M}\neq\varnothing
	\quad\text{for each }i.
	\]
	Therefore, we have
	\[
	\mathfrak{M},w\models \Box(\varphi_1',\ldots,\varphi_n';\psi').
	\]
	Thus \((\Box\mathrm{Mon})\) preserves validity.\\
	For the rule \((\Diamond\mathrm{Mon})\), assume again that
	\[
	\varphi_i\vdash\varphi_i'
	\quad (1\leq i\leq n),
	\qquad
	\psi\vdash\psi'
	\]
	are valid. Then
	\[
	\llbracket\varphi_i\rrbracket^{\mathfrak M}\subseteq\llbracket\varphi_i'\rrbracket^{\mathfrak M}
	\quad (1\leq i\leq n),
	\qquad
	\llbracket\psi\rrbracket^{\mathfrak M}\subseteq\llbracket\psi'\rrbracket^{\mathfrak M}.
	\]
	Suppose
	\[
	\mathfrak{M},w\models \Diamond(\varphi_1,\ldots,\varphi_n;\psi).
	\]
	Let \(T\in N^\Diamond(w)\). Then either
	\[
	T\subseteq\llbracket\varphi_i\rrbracket^{\mathfrak M}
	\quad\text{for some }i,
	\]
	or
	\[
	T\cap\llbracket\psi\rrbracket^{\mathfrak M}\neq\varnothing.
	\]
	In the first case,
	\[
	T\subseteq\llbracket\varphi_i'\rrbracket^{\mathfrak M}.
	\]
	In the second case,
	\[
	T\cap\llbracket\psi'\rrbracket^{\mathfrak M}\neq\varnothing.
	\]
	Hence \(T\) co-witnesses
	\[
	(\llbracket\varphi_1'\rrbracket^{M},\ldots,\llbracket\varphi_n'\rrbracket^{\mathfrak M};\llbracket\psi'\rrbracket^{\mathfrak M}).
	\]
	Since \(T\in N^\Diamond(w)\) was arbitrary, we get
	\[
	\mathfrak{M},w\models \Diamond(\varphi_1',\ldots,\varphi_n';\psi').
	\]
	Thus \((\Diamond\mathrm{Mon})\) preserves validity.\\
	For uniform substitution, suppose that the premise sequent
	\[
	\alpha\vdash\beta
	\]
	is valid in every persistent two-sided neighbourhood model. Let \(\sigma:\mathrm{Prop}\to\mathcal{L}_{\mathrm{PINL}}\) be a uniform substitution. We show that
	\[
	\sigma(\alpha)\vdash\sigma(\beta)
	\]
	is valid. Let
	\[
	\mathfrak M=(W,\leq,N^\Box,N^\Diamond,V)
	\]
	be any persistent two-sided neighbourhood model. Define a new valuation \(V^\sigma:\mathrm{Prop}\to\mathrm{UP}(W,\leq)\) by
	\[
	V^\sigma(p)=\llbracket\sigma(p)\rrbracket^{\mathfrak M}
	\]
	for each propositional variable \(p\). By Lemma \ref{POT}, truth is persistent in \(\mathfrak M\). Hence, for every formula \(\theta\), the truth set
	\[
	\llbracket\theta\rrbracket^{\mathfrak M}
	\]
	is an up-set of \((W,\leq)\). In particular,
	\[
	\llbracket\sigma(p)\rrbracket^{\mathfrak M}\in \operatorname{Up}(W,\leq).
	\]
	Therefore, $V^\sigma$ is a well-defined valuation. Consequently, 
	\[
	\mathfrak{M}^\sigma=(W,\leq,N^\Box,N^\Diamond,V^\sigma).
	\]
	is again a persistent two-sided neighbourhood model. We claim that, for every formula \(\theta\in\mathcal{L}_{\mathrm{PINL}}\) and every \(w\in W\), 
	\[
	\mathfrak M,w\models\sigma(\theta)
	\quad\Longleftrightarrow\quad
	\mathfrak M^\sigma,w\models\theta.
	\]
	The claim is proved by induction on the structure of \(\theta\). The atomic case follows from the definition of \(V^\sigma\). The cases for \(\top,\bot,\wedge,\vee\) are immediate. For the modal cases, the induction hypothesis gives equality of the truth sets of the instance formulas and of the scope formula in the two models; since \(\mathfrak M\) and \(\mathfrak M^\sigma\) have the same \(N^\Box\) and \(N^\Diamond\), the witness and co-witness conditions are preserved. This proves the claim.\\
	Using the claim, we obtain
	\[
	\llbracket\sigma(\alpha)\rrbracket^{\mathfrak M}
	=
	\llbracket\alpha\rrbracket^{\mathfrak M^\sigma},\qquad 	\llbracket\sigma(\beta)\rrbracket^{\mathfrak M}
	=
	\llbracket\beta\rrbracket^{\mathfrak M^\sigma}.
	\]
	Since $(\alpha\vdash\beta)$ is valid in every persistent two-sided neighbourhood model, it is valid in particular in \(\mathfrak M^\sigma\), we have
	\[
	\llbracket\alpha\rrbracket_{\mathfrak{M}^\sigma}\subseteq\llbracket\beta\rrbracket_{\mathfrak{M}^\sigma}.
	\]
	Thus
	\[
	\llbracket\sigma(\alpha)\rrbracket^{\mathfrak M}
	\subseteq
	\llbracket\sigma(\beta)\rrbracket^{\mathfrak M}.
	\]
	Therefore, \(\mathfrak M\models \sigma(\alpha)\vdash\sigma(\beta)\). Since \(\mathfrak M\) was arbitrary, \(\sigma(\alpha)\vdash\sigma(\beta)\) is valid in every persistent two-sided neighbourhood model. Hence uniform substitution preserves validity.\\
	For replacement of equivalents, suppose that the premises
	\[
	\alpha\dashv\vdash\beta
	\qquad\text{and}\qquad
	\varphi\vdash\psi
	\]
	are valid in every persistent two-sided neighbourhood model. So in particular, for the model $\mathfrak M$, we have 
	\[
	\llbracket\alpha\rrbracket^{\mathfrak M}=\llbracket\beta\rrbracket^{\mathfrak M}.\tag{*}
	\]
Also, since $\models(\varphi\vdash\psi)$, we have
	\[
	\llbracket\varphi\rrbracket^{\mathfrak M}
	\subseteq
	\llbracket\psi\rrbracket^{\mathfrak M}.
	\tag{**}
	\]
	We claim that if $\llbracket\alpha\rrbracket^{\mathfrak M}=\llbracket\beta\rrbracket^{\mathfrak M}$ then for every formula \(\theta\in\mathcal{L}_{\mathrm{PINL}}\)
	\[
	\llbracket\theta[\beta/\alpha]\rrbracket_{\mathfrak M}=\llbracket\theta\rrbracket_{\mathfrak M}.
	\]
The claim is proved by induction on the structure of \(\theta\). If \(\theta\) itself is the occurrence of \(\alpha\) being replaced, then
\[
\theta=\alpha
\quad\text{and}\quad
\theta[\beta/\alpha]=\beta.
\]
Hence the claim follows from \((*)\). If \(\theta\) is a propositional variable different from \(\alpha\), or if \(\theta=\top\) or \(\theta=\bot\), then \(\theta[\beta/\alpha]=\theta\), so the claim is immediate.
For compound formulas, the result follows from the induction hypothesis. In the cases of \(\wedge\) and \(\vee\), this is immediate from their interpretation as intersection and union. In the modal cases, the induction hypothesis says that the truth sets of the replaced instance formulas and of the replaced scope formula remain unchanged. Hence the witness condition for \(\Box\) and the co-witness condition for \(\Diamond\) remain unchanged. Therefore the whole formula has the same truth set after replacement.\\
Using the claim, \((**)\) becomes 
\[
\llbracket\varphi[\beta/\alpha]\rrbracket^{\mathfrak M}
\subseteq
\llbracket\psi[\beta/\alpha]\rrbracket^{\mathfrak M}.
\]
Thus
\[
\mathfrak M\models(\varphi[\beta/\alpha]\vdash\psi[\beta/\alpha]).
\]
Since \(\mathfrak M\) was arbitrary,
\[
\models(\varphi[\beta/\alpha]\vdash\psi[\beta/\alpha]).
\]
Hence the rule \((\mathrm{RE})\) preserves validity.\\
Thus we have shown that 
\begin{itemize}
	\item every axiom instance of \(\mathrm{PINL}\) is valid in every persistent two-sided neighbourhood model; and
	\item each inference rule, namely \((\mathrm{US})\), \((\mathrm{CUT})\), \((\mathrm{RE})\), \((\Box\mathrm{Mon})\), and \((\Diamond\mathrm{Mon})\), preserves validity.
\end{itemize}
Therefore, by induction on derivations, every sequent provable in \(\mathrm{PINL}\) is valid in every persistent two-sided neighbourhood model. Hence
\[
\varphi\vdash_{\mathrm{PINL}}\psi
\quad\Longrightarrow\quad
\models(\varphi\vdash\psi).
\]
This completes the proof of soundness.

			\end{proof}
\end{thm}
\begin{rem}
A direct canonical completeness proof for the untyped persistent two-sided neighbourhood semantics is technically difficult, because neighbourhoods in such models are unlabelled sets. Hence the same neighbourhood may witness, or co-witness, different modal formulas. This causes a difficulty in proving the truth lemma, especially in the direction from truth to membership. For this reason, the next section introduces typed persistent two-sided neighbourhood models, where neighbourhoods are recorded together with the modal formulas they are intended to witness or co-witness.
\end{rem}
\section{Canonical Pair Model and Typed Completeness}\label{CPMTC}
In this section we develop a canonical model construction for positive INL. The canonical strategy is inspired by Dunn’s methodology for positive modal logic \cite{dunn1995positive}. Thus the canonical states are taken to be prime theory pairs, rather than maximally consistent sets. A direct canonical proof for the untyped persistent two-sided neighbourhood semantics is technically obstructed by the fact that ordinary neighbourhoods are unlabelled sets. To make this obstruction clear, we first illustrate it in the \(\Box\)-case, and then introduce typed persistent two-sided neighbourhood models, where neighbourhoods are recorded together with the modal formula for which they are intended to serve as witnesses or co-witnesses.
\begin{exmp}
	We illustrate the obstruction in the \(\Box\)-case. Let $W^0$ be the set of all prime theory pairs. We write an element of \(W^0\) as
	\[
	a=(a_1,a_2),
	\]
	where \(a_1\) is the positive theory component and \(a_2\) is the counter-theory component. Define an order \(\leq^0\) on \(W^0\) by
	\[
	a\leq^0 b
	\quad\Longleftrightarrow\quad
	a_1\subseteq b_1.
	\]
	Equivalently, since every formula belongs to exactly one component of a prime theory pair,
	\[
	a\leq^0 b
\quad\Longleftrightarrow\quad
	b_2\subseteq a_2.
	\]
	Then $(W^0,\leq^0)$ is a poset.
	For each formula \(\theta\in\mathcal{L}_{\mathrm{PINL}}\), put
	\[
	\widehat{\theta}=\{a\in W^0:\theta\in a_1\}.
	\]
	Define a valuation $V^0:\mathrm{Prop}\to\mathrm{UP}(W,\leq^0)$ by
	\[
	V^0(p)=\widehat{p}.
	\]
	This is well-defined, since \(\widehat p\) is an up-set of \((W^0,\leq^0)\). Indeed, if \(a\in\widehat p\) and \(a\leq^0 b\), then \(p\in a_1\subseteq b_1\), so \(b\in\widehat p\). The canonical $\Box$-neighbourhood assignment is defined as 
	\[
		N^{\Box,0}(a)=\{S\in \mathrm{Up}(W^0,\leq^0):\text{ for some } \Box(\theta;\rho)\in a_1\text{ such that } S\subseteq\widehat{\rho}\text{ and } S\cap\widehat{\theta}\neq\varnothing\},
	\]
	and the canonical $\Diamond$-neighbourhood assignment is defined as
	\[
	N^{\Diamond,0}(a)=\{A\in\mathrm{Up}(W^0,\leq^0): \text{ for some } \Diamond(\varphi;\psi)\in a_2, A\not\subseteq\widehat{\varphi}\text{ and } A\cap\widehat{\psi}=\emptyset\}.
	\]
	It is straightforward to check that \(\mathfrak M^0=(W^0,\leq^0,N^{\Box,0},N^{\Diamond,0},V^0)\) is a persistent two-sided neighbourhood model. Indeed, by definition every member of \(N^{\Box,0}(a)\) and \(N^{\Diamond,0}(a)\) is an up-set of \((W^0,\leq^0)\). Moreover, if \(a\leq^0 b\), then \(a_1\subseteq b_1\) and hence \(b_2\subseteq a_2\). To show
	\[
	N^{\Box,0}(a)\subseteq N^{\Box,0}(b),
	\]
	take \(U\in N^{\Box,0}(a)\). By definition, there exists a formula \(\Box(\theta;\rho)\in a_1\) such that
	\[
	U\subseteq\widehat{\rho}
	\quad\text{and}\quad
	U\cap\widehat{\theta}\neq\emptyset.
	\]
	Since \(a_1\subseteq b_1\), we also have
	\[
	\Box(\theta;\rho)\in b_1.
	\]
	Thus the same set \(U\) satisfies the defining condition for membership in \(N^{\Box,0}(b)\). Hence
	\[
	U\in N^{\Box,0}(b).
	\]
	Therefore
	\[
	N^{\Box,0}(a)\subseteq N^{\Box,0}(b).
	\]
	Similarly, the definition of \(N^{\Diamond,0}\) gives
	\[
	N^{\Diamond,0}(b)\subseteq N^{\Diamond,0}(a).
	\]
	Now choose \(a,b,c\in W^0\) such that
	\[
	\Box(p;q)\in a_1
	\qquad\text{and}\qquad
	\Box(r;q)\notin a_1,
	\]
	and
	\[
	p,q\in b_1,\quad r\notin b_1,
	\qquad
	r,q\in c_1,\quad p\notin c_1.
	\]
	Then \(b\) and \(c\) are incomparable with respect to \(\leq^0\). Indeed, \(p\in b_1\) but \(p\notin c_1\), so \(b_1\nsubseteq c_1\), and hence \(b\nleq^0 c\). Similarly, \(r\in c_1\) but \(r\notin b_1\), so \(c_1\nsubseteq b_1\), and hence \(c\nleq^0 b\).
	
	Define
	\[
	S:=\uparrow\{b,c\}
	=
	\{x\in W^0:b\leq^0 x\text{ or }c\leq^0 x\}.
	\]
	Then \(S\) is an up-set of \((W^0,\leq^0)\), namely the smallest up-set containing \(b\) and \(c\).
	
	Since \(q\in b_1\) and \(q\in c_1\), we have
	\[
	b,c\in \widehat q.
	\]
	Also, \(\widehat q\) is an up-set of \((W^0,\leq^0)\). Hence, since \(S=\uparrow{b,c}\) is the smallest up-set containing \(b\) and \(c\), we get
	\[
	S\subseteq \widehat q.
	\]
	Also,
	\[
	S\cap\widehat p\neq\emptyset,
	\]
	because \(b\in S\) and \(p\in b_1\), so \(b\in S\cap\widehat p\). Similarly,
	\[
	S\cap\widehat r\neq\emptyset,
	\]
	because \(c\in S\) and \(r\in c_1\), so \(c\in S\cap\widehat r\).
	
	Therefore \(S\) satisfies the witness conditions for \(\Box(p;q)\), namely
	\[
	S\subseteq \widehat q
	\qquad\text{and}\qquad
	S\cap\widehat p\neq\emptyset.
	\]
	Since \(\Box(p;q)\in a_1\), it follows from the definition of \(N^{\Box,0}(a)\) that
	\[
	S\in N^{\Box,0}(a).
	\]
	Thus \(S\) is admitted as a $\Box$-neighbourhood of \(a\).
	
	However, the same unlabelled set \(S\) also satisfies
	\[
	S\subseteq \widehat q
	\qquad\text{and}\qquad
	S\cap\widehat r\neq\emptyset.
	\]
	Since \(q\) and \(r\) are propositional variables, their semantic truth sets in \(\mathfrak M^0\) are
	\[
	\llbracket q\rrbracket^{\mathfrak M^0}=\widehat q
	\qquad\text{and}\qquad
	\llbracket r\rrbracket^{\mathfrak M^0}=\widehat r.
	\]
	Hence, by the ordinary untyped \(\Box\)-clause,
	\[
	\mathfrak M^0,a\models \Box(r;q).
	\]
	But by assumption,
	\[
	\Box(r;q)\notin a_1.
	\]
	Thus the truth-to-membership direction of the canonical truth lemma may fail for this naive untyped canonical construction.

\end{exmp}

We now introduce typed persistent two-sided neighbourhood models. In the canonical typed model, worlds are prime theory pairs \(a=(a_1,a_2)\), and the truth lemma takes the form
\[
\mathfrak{M}^{\tau,c},a\models_\tau\varphi
\quad\Longleftrightarrow\quad
\varphi\in a_1.
\]
This yields a completeness theorem for Positive INL with respect to typed persistent two-sided neighbourhood models.


\begin{defn}\label{TPNM}
	A typed persistent two-sided neighbourhood model is a structure 
	\[
	\mathfrak{M}^{\tau}=(W,\leq,N_\tau^\Box,N_\tau^\Diamond,V)
	\]
	such that
	\begin{enumerate}[(i)]
		\item $(W,\leq)$ is a non-empty poset;
		\item $V:\mathrm{Prop}\to \mathrm{Up}(W,\leq)$ is a valuation;
		\item for each $w\in W$,
		\[
		\begin{aligned}
		N_\tau^\Box(w)&\subseteq &\mathrm{Form}^\Box\times\mathcal{P}(W),\\
		N_\tau^\Diamond(w)&\subseteq &\mathrm{Form}^\Diamond\times\mathcal{P}(W),
		\end{aligned}
		\]
		where $\mathrm{Form}^\Box$ denotes the set of all formulas of the form 
		\[
		\Box(\varphi_1,\ldots,\varphi_n;\psi),
		\]
		and $\mathrm{Form}^\Diamond$ denotes the set of all formulas of the form 
		\[
		\Diamond(\varphi_1,\ldots,\varphi_n;\psi);
		\]
		\item for any $w,v\in W$ whenever $w\leq v$,
		\[
		N_\tau^\Box(w)\subseteq N_\tau^\Box(v)\text{ and } N_\tau^\Diamond(v)\subseteq N_\tau^\Diamond(w).
		\]
	\end{enumerate}
\end{defn}
\begin{defn}\label{TPMT}
Let $\mathfrak{M}^\tau=(W,\leq,N_\tau^\Box,N_\tau^\Diamond,V)$ be a typed persistent model. The truth relation 
\[
\mathfrak{M}^\tau, w\models_\tau\varphi
\]
is defined inductively in the usual way for $\top,\bot,p,\wedge,\vee$, and for the modal formulas by:
\[
\mathfrak{M}^\tau, w\models_\tau\Box(\psi_1,\ldots,\psi_n;\chi)
\]
iff there exists 
\[
(\Box(\psi_1,\ldots,\psi_n;\chi),S)\in N_\tau^\Box(w)
\]
such that 
\[
S\subseteq\llbracket\chi\rrbracket^{\mathfrak{M}^\tau}\text{ and } S\cap\llbracket\psi_i\rrbracket^{\mathfrak{M}^\tau}\neq\emptyset\text{ for all } i=1,\ldots,n.
\]
Further,
\[
\mathfrak{M}^\tau, w\models_\tau\Diamond(\psi_1,\ldots,\psi_n;\chi)
\]
iff there is no pair
\[
(\Diamond(\psi_1,\ldots,\psi_n;\chi),A)\in N_\tau^\Diamond(w)
\]
such that 
\[
A\cap\llbracket\chi\rrbracket^{\mathfrak{M}^\tau}=\varnothing\text{ and } A\not\subseteq\llbracket\psi_i\rrbracket^{\mathfrak{M}^\tau}\text{ for all } i=1,\ldots,n.
\]
\end{defn}
\begin{defn}\label{TPTD}
	For typed persistent models, we write
	\[
	\mathfrak{M}^\tau\models_\tau (\varphi\vdash\psi)
	\]
	if
	\[
	\llbracket\varphi\rrbracket^{\mathfrak{M}^\tau}\subseteq\llbracket\psi\rrbracket^{\mathfrak{M}^\tau}.
	\]
	We write 
	\[
	\models_\tau(\varphi\vdash\psi)
	\]
	if this is holds in every typed persistent model.
\end{defn}
\begin{prop}[Truth preserving typed expansion of an ordinary model]\label{prop1}
	Every persistent two-sided neighbourhood model canonically induces a typed persistent model preserving truth of all formulas in the language of $\mathrm{PINL}$.
	\begin{proof}
		Let $\mathfrak{M}=(W,\leq,N^\Box,N^\Diamond,V)$ be a persistent two-sided neighbourhood model. Define $\mathfrak M^\tau=(W,\leq,N_\tau^\Box,N_\tau^\Diamond,V)$ by
		\[
		\begin{aligned}
		N_\tau^\Box(w)&=&\{(\alpha,S)\in\mathrm{Form}^\Box\times\mathcal{P}(W):S\in N^\Box(w)\},\\
		N_\tau^\Diamond(w)&=&\{(\delta,A)\in\mathrm{Form}^\Diamond\times\mathcal{P}(W):A\in N^\Diamond(w)\}.
		\end{aligned}
		\]
		Then we show that $\mathfrak M^\tau$ is a typed persistent two-sided neighbourhood model and for every formula $\varphi\in\mathcal{L}_{\mathrm{PINL}}$ and every $w\in W$,
		\[
		\mathfrak M,w\models\varphi\iff\mathfrak M^\tau,w\models_\tau\varphi.
		\]
		Since $N^\Box(w),N^\Diamond(w)\subseteq\mathrm{Up}(W,\leq)\subseteq\mathcal{P}(W)$, the two typed neighbourhood assignments have the required codomains. Furthermore, if $w\leq v$, then persistence of $\mathfrak M$ gives
		\[
		N^\Box(w)\subseteq N^\Box(v),\quad\; N^\Diamond(v)\subseteq N^\Diamond(w).
		\]
		Hence
		\[
		N_\tau^\Box(w)\subseteq N_\tau^\Box(v),\quad\; N_\tau^\Diamond(v)\subseteq N_\tau^\Diamond(w).
		\]
		Thus $\mathfrak M^\tau$ is a typed persistent model.\\
				
		\noindent We now prove the truth equivalence by induction on $\varphi$. \\
		The propositional cases are immediate because $\mathfrak M$ amd $\mathfrak M^\tau$ have the same valuation and interpret \(\top,\bot,\wedge,\vee\) in the same way. Let 
		\[
		\alpha=\Box(\psi_1,\ldots,\psi_n;\chi).
		\]
		Using the induction hypothesis for $\psi_1,\ldots,\psi_n,\chi$, we obtain
		\[
		\begin{aligned}
		\mathfrak M,w\models \alpha&\iff \exists S\in N^\Box(w)\Bigl [S\subseteq\llbracket\chi\rrbracket^{\mathfrak M}
		\text{ and }
		S\cap\llbracket\psi_i\rrbracket^{\mathfrak M}\neq\varnothing
		\text{ for every }i
		\Bigr]
		\\
		&\iff
		\exists (\alpha,S)\in N_{\tau}^{\Box}(w)\,
		\Bigl[
		S\subseteq\llbracket\chi\rrbracket^{\mathfrak M^{\tau}}
		\text{ and }
		S\cap\llbracket\psi_i\rrbracket^{\mathfrak M^{\tau}}
		\neq\varnothing
		\text{ for every }i
		\Bigr]
		\\
		&\iff
		\mathfrak M^{\tau},w\models_{\tau}\alpha.
	\end{aligned}
		\]
		Now let
		\[
		\delta=\Diamond(\varphi_{1},\ldots,\varphi_{n};\psi).
		\]
	The ordinary $\Diamond$-clause is equivalent to the non-existence of a neighbourhood which violates the co-witness condition. Thus, using the induction hypothesis and the definition of \(N_{\tau}^{\Diamond}\), we obtain
		\[
		\begin{aligned}
			\mathfrak M,w\models\delta
			&\Longleftrightarrow
			\neg\exists A\in N^{\Diamond}(w)\,
			\Bigl[
			A\cap\llbracket\psi\rrbracket^{\mathfrak M}=\varnothing
			\text{ and }
			A\nsubseteq\llbracket\varphi_i\rrbracket^{\mathfrak M}
			\text{ for every }i
			\Bigr]
		\\
		&\Longleftrightarrow
		\neg\exists (\delta,A)\in N_{\tau}^{\Diamond}(w)\,
		\Bigl[
		A\cap\llbracket\psi\rrbracket^{\mathfrak M^{\tau}}
		=\varnothing
		\text{ and }
		A\nsubseteq
		\llbracket\varphi_i\rrbracket^{\mathfrak M^{\tau}}
		\text{ for every }i
		\Bigr]
		\\
		&\Longleftrightarrow
		\mathfrak M^{\tau},w\models_{\tau}\delta.
		\end{aligned}
		\]

	\end{proof}
\end{prop}
\begin{rem}
	Proposition \ref{prop1} shows that every ordinary persistent two-sided neighbourhood model \(\mathfrak M\) induces a typed persistent two-sided neighbourhood model \(\mathfrak M^\tau\) preserving truth of all \(\mathrm{PINL}\)-formulas. Consequently,
	\[
	\models_\tau(\varphi\vdash\psi)
	\quad\Longrightarrow\quad
	\models(\varphi\vdash\psi).
	\]
	The converse implication is not established here. 
\end{rem}
In the sequel, typed semantics is used only as an auxiliary canonical semantics for proving the truth lemma and the typed completeness theorem.
\subsection{Canonical Preliminaries}
We now turn to the canonical construction for the typed persistent semantics. Since \(\mathrm{PINL}\) is a positive logic, we adapt Dunn's canonical-pair method for positive logics \cite{dunn1995positive}. Thus the canonical states are taken to be prime theory pairs rather than maximally consistent sets. Accordingly, we recall the notions of theory, counter-theory, disjoint theory pair, and prime theory pair, all taken with respect to the consequence relation \(\vdash_{\mathrm{PINL}}\). These notions will be used in the separation-extension lemma and in the construction of the canonical typed model.

\begin{defn}\label{thpinl}
	A theory is a set $T\subseteq\mathcal{L}_{\mathrm{PINL}}$ such that 
	\begin{enumerate}[(i)]
		\item $\top\in T$;
		\item if $\alpha,\beta\in T$, then $\alpha\wedge\beta\in T$;
		\item if $\alpha\in T$ and $\alpha\vdash_{\mathrm{PINL}}\gamma$, then $\gamma\in T$.
	\end{enumerate}
	\end{defn}
A theory $T$ is proper if $\bot\notin T$.
\begin{defn}\label{cthpinl}
	A counter-theory is a set $F\subseteq\mathcal{L}_{\mathrm{PINL}}$ such that 
	\begin{enumerate}[(i)]
		\item $\bot\in F$;
		\item if $\alpha,\beta\in F$ then $\alpha\vee\beta\in F$;
		\item if $\alpha\vdash_{\mathrm{PINL}}\gamma$ and $\gamma\in F$, then $\alpha\in F$.
	\end{enumerate}
\end{defn}
A counter-theory $F$ is proper if $\top\notin F$. The intuition is that a theory is closed upward under provable consequence, while a counter-theory is closed downward.
\begin{defn}
	A theory pair is a pair 
	\[
	a=(a_1,a_2)
	\]
	such that $a_1$ is a theory and $a_2$ is a counter-theory. \\
	It is called 
	\begin{itemize}
	\item disjoint if $a_1\cap a_2=\varnothing$,
	\item proper if $a_1$ and $a_2$ are both proper.
\end{itemize}
\end{defn}
\begin{defn}
	A proper theory $T$ is called prime if 
	\[
	\varphi\vee\psi\in T\Longrightarrow\varphi\in T\text{ or } \psi\in T.
	\]
	A proper counter-theory $F$ is called prime if 
	\[
	\varphi\wedge\psi\in F\Longrightarrow \varphi\in F\text{ or } \psi\in F.
	\]
	A prime theory pair is a disjoint proper pair
	\[
	a=(a_1,a_2)
	\]
	such that $a_1$ is a prime theory, $a_2$ is a prime counter-theory, and every formula belongs to exactly one of $a_1$ and $a_2$.
\end{defn}
Thus prime theory pairs are the positive counterparts of classical maximal consistent sets.
\begin{defn}
	A set $\Gamma\subseteq\mathcal{L}_{\mathrm{PINL}}$ is consistent if there is no $\gamma_1,\ldots,\gamma_n\in \Gamma$ such that 
	\[
	\gamma_1\wedge\gamma_2\wedge\ldots\wedge\gamma_n\vdash_{\mathrm{PINL}}\bot.
	\]
\end{defn}
\subsection*{Disjointness and prime-pair extension}
We now recall the basic seperation criterion from \cite{dunn1995positive}.
\begin{lem}\label{DP:1}
	A theory pair $a=(a_1,a_2)$ is not disjoint if and only if there exist $\alpha\in a_1$ and $\beta\in a_2$ such that 
	\[
	\alpha\vdash_{\mathrm{PINL}}\beta.
	\]
\end{lem}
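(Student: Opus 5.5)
The proof is a direct two-directional argument using only the closure conditions of Definitions \ref{thpinl} and \ref{cthpinl}, together with reflexivity of $\vdash_{\mathrm{PINL}}$.

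For the ``only if'' direction, suppose $a$ is not disjoint. Then there is some $\gamma\in a_1\cap a_2$. We take $\alpha=\beta=\gamma$. The reflexivity axiom gives $\gamma\vdash_{\mathrm{PINL}}\gamma$, so the required pair exists.

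For the ``if'' direction, suppose $\alpha\in a_1$, $\beta\in a_2$ and $\alpha\vdash_{\mathrm{PINL}}\beta$. There are two ways to conclude, and either suffices.
\begin{enumerate}[(i)]
\item Since $a_1$ is a theory, clause (iii) of Definition \ref{thpinl} (upward closure under $\vdash_{\mathrm{PINL}}$) applied to $\alpha\in a_1$ gives $\beta\in a_1$. Then $\beta\in a_1\cap a_2$.
\item Dually, since $a_2$ is a counter-theory, clause (iii) of Definition \ref{cthpinl} (downward closure under $\vdash_{\mathrm{PINL}}$) applied to $\beta\in a_2$ gives $\alpha\in a_2$. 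Then $\alpha\in a_1\cap a_2$.
\end{enumerate}
In either case $a_1\cap a_2\neq\varnothing$, so $a$ is not disjoint.

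There is no real obstacle. The only point to check is that the closure clauses are stated for single premises, which they are. The conjunction and disjunction clauses are not needed, because the lemma already speaks of a single $\alpha\in a_1$ and a single $\beta\in a_2$. In applications, a finite conjunction from $a_1$ or a finite disjunction from $a_2$ is first collapsed into one element using those clauses, so the lemma can be used in its single-formula form.
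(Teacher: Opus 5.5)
Your proof is correct. Reflexivity gives the forward direction, and upward closure of $a_1$ (or, dually, downward closure of $a_2$) under $\vdash_{\mathrm{PINL}}$ gives the converse. The paper gives no proof of its own here and only recalls the criterion from Dunn; your direct verification from Definitions \ref{thpinl} and \ref{cthpinl} is the standard argument that citation stands in for.
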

For a theory $T$ and a formula $\chi$, let $T+\chi$ denote the theory generated by $T\cup\{\chi\}$. Dually, for a counter-theory $F$, let $F+\chi$ denote the counter-theory generated by $F\cup\{\chi\}$.
\begin{lem}\label{DP:2}
	Let $a=(a_1,a_2)$ be a disjoint theory pair and let $\chi$ be any formula. Then at least one of the pairs 
	\[
	(a_1+\chi,a_2) \text{ or } (a_1,a_2+\chi)
	\]
	is disjoint.
\end{lem}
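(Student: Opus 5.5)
We argue by contradiction, using Lemma \ref{DP:1} together with distributivity and the distributive cut rule. Suppose neither $(a_1+\chi,a_2)$ nor $(a_1,a_2+\chi)$ is disjoint. Both are again theory pairs, since $a_1+\chi$ is a theory and $a_2+\chi$ is a counter-theory by construction. So Lemma \ref{DP:1} applies to each and yields formulas $\alpha'\in a_1+\chi$, $\beta\in a_2$ with $\alpha'\vdash_{\mathrm{PINL}}\beta$, and $\alpha\in a_1$, $\beta'\in a_2+\chi$ with $\alpha\vdash_{\mathrm{PINL}}\beta'$.

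The first step is to describe the generated sets explicitly. I will check that
\[
a_1+\chi=\{\gamma:\exists \alpha_0\in a_1,\ \alpha_0\wedge\chi\vdash_{\mathrm{PINL}}\gamma\}.
\]
This set contains $a_1\cup\{\chi\}$, using $\top\in a_1$ and $\alpha_0\wedge\chi\vdash\alpha_0$. It is closed under $\wedge$, by conjunction introduction and the fact that $a_1$ is closed under $\wedge$. It is closed upward under $\vdash$ by CUT. Dually,
\[
a_2+\chi=\{\gamma:\exists \beta_0\in a_2,\ \gamma\vdash_{\mathrm{PINL}}\beta_0\vee\chi\}.
\]
Hence there are $\alpha_1\in a_1$ with $\alpha_1\wedge\chi\vdash\beta$, and $\beta_1\in a_2$ with $\alpha\vdash\beta'\vdash\beta_1\vee\chi$.

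Next we combine these into a single formula on each side. Put $\alpha^*=\alpha\wedge\alpha_1\in a_1$ and $\beta^*=\beta\vee\beta_1\in a_2$; closure of $a_1$ under $\wedge$ and of $a_2$ under $\vee$ gives membership. Monotonicity of $\wedge$ and $\vee$, via conjunction elimination, disjunction introduction and CUT, gives
\[
\alpha^*\wedge\chi\vdash\beta^* \qquad\text{and}\qquad \alpha^*\vdash\beta^*\vee\chi.
\]
Distributive cut, with $\varphi=\alpha^*$, $\psi=\beta^*$ and the cut formula $\chi$, now yields $\alpha^*\vdash_{\mathrm{PINL}}\beta^*$. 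By Lemma \ref{DP:1}, the pair $(a_1,a_2)$ is therefore not disjoint, which contradicts the hypothesis.

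The main obstacle is the explicit description of $a_1+\chi$ and $a_2+\chi$. One has to make sure the sets displayed are exactly the generated theory and counter-theory, in particular that they satisfy the closure clauses of Definitions \ref{thpinl} and \ref{cthpinl}. The edge cases where $\chi$ is $\top$ or $\bot$ should also be checked. Once that is in place, the distributive cut rule does the real work. If one prefers not to rely on its admissibility, the same step can be derived directly: $\alpha^*\vdash(\alpha^*\wedge\beta^*)\vee(\alpha^*\wedge\chi)\vdash\beta^*$, using distributivity.
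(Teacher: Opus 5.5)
Your proof is correct. The explicit descriptions of $a_1+\chi$ and $a_2+\chi$ are justified exactly as you check: you only need that the displayed sets are a theory, resp.\ counter-theory, containing the generators. Merging the two failures into $\alpha^*\wedge\chi\vdash\beta^*$ and $\alpha^*\vdash\beta^*\vee\chi$ and then applying distributive cut (or distributivity directly) is the standard argument, and the cases $\chi=\top,\bot$ need no separate treatment.

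The paper states this lemma without proof, recalling it from Dunn \cite{dunn1995positive}. Your argument is the usual distributivity-based proof found there.
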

\begin{lem}\label{DP:3}
	Every disjoint theory pair extends to a prime theory pair.
\end{lem}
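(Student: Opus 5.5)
The plan is the standard Zorn's lemma argument in Dunn's style, with Lemma \ref{DP:2} as the engine. Fix a disjoint theory pair $a=(a_1,a_2)$. Let $\mathcal{E}$ be the set of all disjoint theory pairs $(b_1,b_2)$ with $a_1\subseteq b_1$ and $a_2\subseteq b_2$, ordered componentwise by inclusion. This set is non-empty, since it contains $a$.

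\textbf{Chains have upper bounds.} Given a chain $\{(b_1^j,b_2^j)\}_{j\in J}$ in $\mathcal{E}$, take componentwise unions $(\bigcup_j b_1^j,\bigcup_j b_2^j)$.
\begin{enumerate}[(i)]
\item The union of a chain of theories is a theory. It contains $\top$, it is closed under consequence, and any two members lie in a common $b_1^j$, so it is closed under $\wedge$.
\item Dually, the union of a chain of counter-theories is a counter-theory.
\item The union is disjoint. A common element would already lie in some $b_1^j\cap b_2^k$. Since the family is a chain, both memberships hold in the larger of the two indices, contradicting disjointness of that pair.
\end{enumerate}
By Zorn's lemma, $\mathcal{E}$ has a maximal element $m=(m_1,m_2)$.

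\textbf{Maximality forces $m$ to be prime.} The main step is that every formula $\chi$ lies in $m_1\cup m_2$. By Lemma \ref{DP:2}, either $(m_1+\chi,m_2)$ or $(m_1,m_2+\chi)$ is disjoint. Each of these pairs lies in $\mathcal{E}$ and extends $m$, so by maximality it equals $m$. Hence $\chi\in m_1$ or $\chi\in m_2$. Together with disjointness, every formula belongs to exactly one component.

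\textbf{Properness.} Since $\bot\in m_2$ and the pair is disjoint, $\bot\notin m_1$. Symmetrically, $\top\in m_1$ gives $\top\notin m_2$.

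\textbf{Primeness.} Suppose $\varphi\vee\psi\in m_1$ but $\varphi,\psi\notin m_1$. Then $\varphi,\psi\in m_2$. Since $m_2$ is a counter-theory, it is closed under $\vee$, so $\varphi\vee\psi\in m_2$, which contradicts disjointness. Dually, if $\varphi\wedge\psi\in m_2$ with $\varphi,\psi\notin m_2$, then $\varphi,\psi\in m_1$. Closure of $m_1$ under $\wedge$ gives $\varphi\wedge\psi\in m_1$, again a contradiction. So $m$ is a prime theory pair extending $a$.

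The argument is routine; the main obstacle is hidden in Lemma \ref{DP:2}, which we may assume. If it had to be verified, one would use Lemma \ref{DP:1} and distributivity. Suppose both extensions fail. Then there are $\alpha\wedge\chi\vdash\beta$ and $\alpha'\vdash\beta'\vee\chi$ with $\alpha,\alpha'\in a_1$ and $\beta,\beta'\in a_2$. Set $\alpha''=\alpha\wedge\alpha'$ and $\beta''=\beta\vee\beta'$. Then $\alpha''\wedge\chi\vdash\beta''$ and $\alpha''\vdash\beta''\vee\chi$. Distributive cut yields $\alpha''\vdash\beta''$, contradicting disjointness of $a$.

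This uses the description of generated theories as
\[
a_1+\chi=\{\gamma:\alpha\wedge\chi\vdash\gamma \text{ for some } \alpha\in a_1\},
\]
and dually for counter-theories. That description needs the conjunction-introduction and disjunction-elimination rules.
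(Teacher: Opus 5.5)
Your argument is correct. It is the standard Zorn's-lemma pair-extension proof: the paper gives no proof of Lemma~\ref{DP:3} and only cites Dunn~\cite{dunn1995positive}, but it places Lemmas~\ref{DP:1} and~\ref{DP:2} immediately before it so that maximality plus Lemma~\ref{DP:2} forces every formula into exactly one component, which is exactly your route. Your sketch of Lemma~\ref{DP:2}, via the explicit descriptions of $a_1+\chi$ and $a_2+\chi$ and the distributive cut rule, is also sound, and the empty chain is covered by taking $a$ itself as its upper bound, as your non-emptiness remark allows.
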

\subsection*{Canonical state space and valuation}
Let $W^c=\{a=(a_1,a_2): a \text{ is a prime theory pair} \}$. Define
\[
a\leq^c b\iff a_1\subseteq b_1.
\]
Since every formula belongs to exactly one side of a prime pair, this is equivalent to $b_2\subseteq a_2$.\\
For each formula $\varphi$, define 
\[
\llbracket\varphi\rrbracket^c=\{a\in W^c:\varphi\in a_1\},
\]
and for each variable $p$, define 
\[
V^c(p)=\llbracket p\rrbracket^c.
\]
\begin{lem}\label{L:1}
	For every formula $\varphi\in\mathcal{L}_{\mathrm{PINL}}$, the set $\llbracket\varphi\rrbracket^c$ is upward closed in $(W^c,\leq^c)$.
	\begin{proof}
		If $a=(a_1,a_2)\leq^c b=(b_1,b_2)$ and $a\in \llbracket\varphi\rrbracket^c$, then $\varphi\in a_1\subseteq b_1$. Hence $b\in \llbracket\varphi\rrbracket^c$.
	\end{proof}
\end{lem}
\subsection{Canonical typed $\Box$-neighbourhoods and typed $\Diamond$-neighbourhoods}
\begin{lem}\label{WA}
	Let $a=(a_1,a_2)\in W^c$, and suppose that
	\[
	\alpha=\Box(\psi_1,\ldots,\psi_n;\chi)\in a_1.
	\]  
	Then there exists a finite set 
	\[
	S\subseteq W^c
	\]
	such that 
		\[
	S\subseteq\llbracket\chi\rrbracket^c\text{ and } S\cap\llbracket\psi_i\rrbracket^c\neq\varnothing\text{ for every } i\leq n.
	\]
	\begin{proof}
		We construct, for each $i=1,\ldots,n$, a prime theory pair $u_i\in W^c$ such that 
	\[
	\psi_i\in (u_i)_1\text{ and } \chi\in (u_i)_1.
	\]
	Once this is done, the required finite set will be 
	\[
		S=\{u_1,\ldots,u_n\}.
		\]
		If $n=0$, take $S=\emptyset$. Then $S\subseteq\llbracket\chi\rrbracket^c$, and there are no intersection condition to verify. Hence suppose $n\geq 1$. Let $i\in\{1,\ldots,n\}$. We first show that $\psi_i\wedge\chi$ is consistent i.e., 
		\[
		\psi_i\wedge\chi\not\vdash_{\mathrm{PINL}}\bot.
		\]
		If this is false then 
	\[
	\psi_i\wedge\chi\vdash_{\mathrm{PINL}}\bot.
	\]
	By the axiom $(\Box 2)$ and closure of \(a_1\) under provable consequence, we obtain 
	\[
	\Box(\psi_1,\ldots,\psi_{i-1},\psi_{i+1},\ldots,\psi_n,\psi_i;\chi)\in a_1.
	\]
	Now applying the axiom $(\Box 5)$, we have 
	\[
	\Box(\psi_1\ldots,\psi_{i-1},\psi_{i+1},\ldots,\psi_n,\psi_i\wedge\chi;\chi)\in a_1.
	\]
	Next using the $(\Box\mathrm{Mon})$ and $\psi_i\wedge\chi\vdash_{\mathrm{PINL}}\bot$ together with reflexivity $\chi\vdash_{\mathrm{PINL}}\chi$, we derive that 
	\begin{equation}\label{WA:1}
	\Box(\psi_1\ldots,\psi_{i-1},\psi_{i+1},\ldots,\psi_n,\psi_i\wedge\chi;\chi)\vdash_{\mathrm{PINL}} \Box(\psi_1\ldots,\psi_{i-1},\psi_{i+1},\ldots,\psi_n,\bot;\chi).
\end{equation}
	
	By $(\Box 1)$, 
	
	\begin{equation}\label{WA:2}
		\Box(\psi_1\ldots,\psi_{i-1},\psi_{i+1},\ldots,\psi_n,\bot;\chi)\vdash_{\mathrm{PINL}}\bot.
	\end{equation}
	
	From (\ref{WA:1}) and (\ref{WA:2}), by $\mathrm{Cut}$,
	\[
	\Box(\psi_1\ldots,\psi_{i-1},\psi_{i+1},\ldots,\psi_n,\psi_i\wedge\chi;\chi)\vdash_{\mathrm{PINL}}\bot.
	\]
	Since $\Box(\psi_1\ldots,\psi_{i-1},\psi_{i+1},\ldots,\psi_n,\psi_i\wedge\chi;\chi)\in a_1$ and $a_1$ is a theory, closure under provable consequence yields
	\[
	\bot\in a_1.
	\]
	Since $a_1$ is a proper theory, this is impossible. Hence $\psi_i\wedge\chi$ is consistent.\\
	
	\noindent Define $T_i^0=Th(\{\psi_i,\chi\})$, the theory generated by the set $\{\psi_i,\chi\}$. We claim that $T_i^0$ is proper. If $T_i^0$ is not proper, then $\bot\in T_i^0$. Then we have 
	\[
	\psi_i\wedge\chi\vdash_{\mathrm{PINL}}\bot.
	\]
	But this contradicts the fact that $\psi_i\wedge\chi$ is consistent. Hence $T_i^0$ is a proper theory.\\
	\noindent Define $F^0=CTh(\{\bot\})$, the counter theory generated by $\bot$. Explicitly, $F^0=\{\theta\in\mathcal{L}_{\mathrm{PINL}}:\theta\vdash_{\mathrm{PINL}}\bot\}$. We claim that $F^0$ is proper. If $\top\in F^0$, then by definition 
	\[
	\top\vdash_{\mathrm{PINL}}\bot,
	\]
	which would make the whole positive consequence relation trivial. Since our system is non-trivial, this is impossible. Hence $F^0$ is proper.\\
	
	\noindent We now show that 
	\[
	(T_i^0,F^0) \text{ is a disjoint theory pair }.
	\]
	Otherwise, by Lemma \ref{DP:1} there exist formulas 
	\[
	\alpha\in T_i^0, \;\beta\in F^0
	\]
	such that $\alpha\vdash_{\mathrm{PINL}}\beta$. Since $\beta\in F^0$, we also have 
	\[
	\beta\vdash_{\mathrm{PINL}}\bot.
	\]
	Hence, by $\mathrm{CUT}$
	\[
	\alpha\vdash_{\mathrm{PINL}}\bot.
	\]
	As $\alpha\in T_i^0$ and $T_i^0$ is a theory, closure under consequence yields
	\[
	\bot\in T_i^0, 
	\]
	contradicting the properness of $T_i^0$. Therefore $(T_i^0,F^0)$ is disjoint.\\
		\noindent By Lemma \ref{DP:3}, there exists a prime theory pair
	\[
	u_i=((u_i)_1,(u_i)_2)\in W^c
	\]
	such that 
	\[
	T_i^0\subseteq (u_i)_1\;\text{ and }\; F^0\subseteq (u_i)_2.
	\]
	Since $\psi_i,\chi\in T_i^0$, we have 
		\begin{equation}\label{CDE:3}
	\psi_i\in (u_i)_1\;\text{ and }\; \chi\in (u_i)_1.
\end{equation}
	Since $i$ was arbitrary, we have constructed $u_1,\ldots,u_n\in W^c$ such that for each $i$,
	\[
	\psi_i,\chi\in(u_i)_1.
	\]
	Now define 
	\[
	S=\{u_1,\ldots,u_n\}\subseteq W^c.
	\]
	Let $u\in S$. Then $u=u_i$ for some $i$, and by (\ref{CDE:3}) we have $\chi\in (u_i)_1$. Hence 
	\[
	u\in\llbracket\chi\rrbracket^c.
	\]
	Therefore, $S\subseteq\llbracket\chi\rrbracket^c$. Also for each $i$, we have $\psi_i\in(u_i)_1$ by (\ref{CDE:3}). Hence
	\[
	u_i\in\llbracket\psi_i\rrbracket^c.
	\]
	Since $u_i\in S$, it follows that 
	\[
	u_i\in S\cap\llbracket\psi_i\rrbracket^c.
	\]
	Therefore
	\[
	S\cap\llbracket\psi_i\rrbracket^c\neq\varnothing \;\text{ for every }\; i=1,\ldots, n.
	\]
		\end{proof}
\end{lem}
\noindent Lemma \ref{WA} shows that, whenever a $\Box$-formula belongs to the positive component of a canonical state, there exists at least one finite set satisfying the required witness conditions. We therefore define the canonical typed $\Box$-neighbourhood assignment by collecting all such typed witness pairs.
\begin{defn}\label{WAL:1}
	For each canonical state \(a=(a_{1},a_{2})\in W^{c}\), define
	\[
	\begin{aligned}
		N_{\tau}^{\Box,c}(a)
		:=
		\bigl\{\, &
		(\alpha,S)\in
		\mathrm{Form}^{\Box}\times\mathcal{P}_{\mathrm{fin}}(W^{c})
		:
		\\
		&
		\alpha=\Box(\psi_{1},\ldots,\psi_{n};\chi)\in a_{1},
		\\
		&
		S\subseteq \llbracket \chi \rrbracket^{c},
		\\
		&
		S\cap \llbracket \psi_{i} \rrbracket^{c}\neq\varnothing
		\quad\text{for every } i=1,\ldots,n
		\,\bigr\},
	\end{aligned}
	\]
	where $\mathcal{P}_{\mathrm{fin}}(W^{c})$ denotes the set of all finite subsets of $W^c$.
	\end{defn}

\noindent\subsection*{Canonical typed $\Diamond$-neighbourhoods}
\begin{lem}\label{CTNSL}
	Let $a=(a_1,a_2)\in W^c$ and suppose that 
	\[
	\delta=\Diamond(\varphi_1,\ldots,\varphi_n;\psi)\in a_2.
	\]
	Then there exists a finite set $A\subseteq W^c$ such that 
	\[
	A\cap\llbracket\psi\rrbracket^c=\varnothing\text{ and } A\not\subseteq\llbracket\varphi_i\rrbracket^c\text{ for every } i\in\{1,\ldots,n\}.
	\]
	\begin{proof}
		If $n=0$, then we may simply take $A=\varnothing$. Then 
		\[
		A\cap\llbracket\psi\rrbracket^c=\varnothing,
		\]
		and the condition 
		\[
		A\not\subseteq\llbracket\varphi_i\rrbracket^c\quad\;(i=1,\ldots,n)
		\]
		is vacuous. So the conclusion is immediate.\\
		\noindent Assume $n\geq 1$. We shall construct, for each $i\in\{1,\ldots,n\}$, a point $b_i\in W^c$ such that 
		\[
		b_i\notin\llbracket\psi\rrbracket^c\text{ and } b_i\notin\llbracket\varphi_i\rrbracket^c.
		\]
		Then we shall set $A=\{b_1,\ldots,b_n\}$.\\
		\noindent Fix $i\in\{1,\ldots,n\}$. \\
		\noindent\textbf{Step 1. Define the relevant generated counter-theory}\\
		Define $F_i=CTh(\{\psi,\varphi_i\})$, the counter-theory generated by $\psi$ and $\varphi_i$. By definition of generated counter-theory, we have 
		\begin{equation}\label{F:1}
		F_i=\{\chi\in\mathcal{L}_{\mathrm{PINL}}: \chi\vdash_{\mathrm{PINL}}\psi\vee\varphi_i\}.
		\end{equation}
		Let $\mathrm{Thm}=\{\zeta\in\mathcal{L}_{\mathrm{PINL}}:\top\vdash_{\mathrm{PINL}}\zeta\}$ denote the collection of theorems in $\mathrm{PINL}$.
		Consider the pair $P_i=(\mathrm{Thm},F_i)$, where $\mathrm{Thm}$ is the theory. We claim that $P_i$ is disjoint. Suppose not. Then by Lemma \ref{DP:1} there exist $\alpha\in\mathrm{Thm}$ and $\beta\in F_i$ such that 
		\[
		\alpha\vdash_{\mathrm{PINL}}\beta.
		\]
		Since $\alpha\in \mathrm{Thm}$, we have 
		\[
		\top\vdash_{\mathrm{PINL}}\alpha.
		\]
		Combining this with $\alpha\vdash_{\mathrm{PINL}}\beta$ and applying $(\mathrm{Cut})$, we obtain
		\[
		\top\vdash_{\mathrm{PINL}}\beta.
		\]
		Now $\beta\in F_i$, so by (\ref{F:1}) we have
		\[
		\beta\vdash_{\mathrm{PINL}}\psi\vee\varphi_i.
		\]
		A further application of $(\mathrm{CUT})$ gives
		\begin{equation}\label{F:2}
		\top\vdash_{\mathrm{PINL}}\psi\vee\varphi_i.
		\end{equation}
		Now $(\Diamond 1)$ yields 
		\[
		\top\vdash_{\mathrm{PINL}}\Diamond(\varphi_1,\ldots,\varphi_{i-1},\top,\varphi_{i+1},\ldots,\varphi_n;\psi).
		\]
		Using $(\Diamond\mathrm{Mon})$ together with $(\ref{F:2})$, we obtain
		\begin{equation}\label{F:3}
		\top\vdash_{\mathrm{PINL}}\Diamond(\varphi_1,\ldots,\varphi_{i-1},\psi\vee\varphi_i,\varphi_{i+1},\ldots,\varphi_n;\psi).
		\end{equation}
		Now applying \((\Diamond2)\) and \((\Diamond5)\) to the formula on the right-hand side of (\ref{F:3}), this yields
		\[
		\top\vdash_{\mathrm{PINL}}\Diamond(\varphi_1,\ldots,\varphi_n;\psi).
		\]
		That is $\top\vdash_{\mathrm{PINL}}\delta$. Thus $\delta$ is a theorem. Since $\top\vdash_{\mathrm{PINL}}\delta$, and since $a_1$ is a theory containing $\top$, closure of $a_1$ under derivability implies that $\delta\in a_1$. But by assumption $\delta\in a_2$. This contradicts the fact that $a=(a_1,a_2)\in W^c$ is a prime theory pair, hence in particular a disjoint pair.\\
		\noindent Therefore our assumption was wrong, and $P_i=(\mathrm{Thm},F_i)$ is disjoint.\\
		Since $P_i$ is disjoint, by Lemma \ref{DP:3} it extends to a prime theory pair
		\[
		b_i=((b_i)_1,(b_i)_2)\in W^c
		\]
		 such that 
		 \[
		 \mathrm{Thm}\subseteq (b_i)_1\text{ and } F_i\subseteq(b_i)_2.
		 \]
		Since $\psi,\varphi_i\in F_i$, we obtain 
		\[
		\psi,\varphi_i\in (b_i)_2.
		\]
		Hence
		\[
		\psi\notin (b_i)_1\text{ and } \varphi_i\notin (b_i)_1.
		\]
		As a result,
		\[
		b_i\notin\llbracket\psi\rrbracket^c\text{ and } b_i\notin\llbracket\varphi_i\rrbracket^c.
		\]
		Now set $A=\{b_1,\ldots,b_n\}$. Then $A\cap\llbracket\psi\rrbracket^c=\varnothing$, and for every $i\in\{1,\ldots,n\}$, since $b_i\in A$ but $b_i\notin\llbracket\varphi_i\rrbracket^c$, hence $A\not\subseteq\llbracket\varphi_i\rrbracket^c$.
			\end{proof}
\end{lem}
By Lemma \ref{CTNSL}, whenever a $\Diamond$-formula 
\[
\delta=\Diamond(\varphi_1,\ldots,\varphi_n;\psi)
\]
belongs to the counter-theoretic component $a_2$ of a canonical state $a=(a_1,a_2)\in W^c$, there exists a finite set $C\subseteq W^c$ such that 
\[
C\cap\llbracket\psi\rrbracket^c=\varnothing
\]
and 
\[
C\nsubseteq\llbracket\varphi_i\rrbracket^c\text{ for every } i=1,\ldots,n.
\]
Equivalently, $C$ fails the co-witness condition associated with the formula $\delta$. We therefore define the canonical typed $\Diamond$-neighbourhood assignment by collecting all formula-labelled pairs $(\delta,C)$ satisfying these conditions. 
\begin{defn}\label{WAL:2}
	For each canonical state $a=(a_1,a_2)\in W^c$, define 
\[
\begin{aligned}
	N_{\tau}^{\Diamond,c}(a)
	:=
	\Bigl\{&
	\bigl(\Diamond(\varphi_{1},\ldots,\varphi_{n};\psi),C\bigr)
	\in
	\operatorname{Form}^{\Diamond}\times\mathcal{P}_{\mathrm{fin}}(W^{c})
	:\;
	\\
	&
	\Diamond(\varphi_{1},\ldots,\varphi_{n};\psi)\in a_{2},
	\\
	&
	C\cap \llbracket\psi\rrbracket^c=\varnothing,
	\\
	&
	C\nsubseteq \llbracket\varphi_i\rrbracket^c
	\quad \text{for every } i=1,\ldots,n
	\Bigr\},
\end{aligned}
\]
where \(\mathcal{P}_{\mathrm{fin}}(W^{c})\) denotes the collection of all
finite subsets of \(W^{c}\).
\end{defn}
\begin{lem}[Persistence of the canonical typed neighbourhood assignments]\label{PTYNA}
	Let \(a,b\in W^{c}\). If \(a\leq^{c} b\), then
	\[
	N_{\tau}^{\Box,c}(a)\subseteq N_{\tau}^{\Box,c}(b)
	\]
	and
	\[
	N_{\tau}^{\Diamond,c}(b)\subseteq N_{\tau}^{\Diamond,c}(a).
	\]
	\begin{proof}
		Let
		\[
		a=(a_{1},a_{2}), \qquad b=(b_{1},b_{2})\in W^{c},
		\]
		be canonical states such that 
		\[
		a\leq^{c} b.
		\]
		By the definition of the canonical order,
		\[
		a_{1}\subseteq b_{1}.
		\]
		Since \(a\) and \(b\) are prime theory pairs, this is equivalently expressed by
		\[
		b_{2}\subseteq a_{2}.
		\]
		
		We first prove the persistence property for the \(\Box\)-neighbourhood assignment. Let
		\[
		\bigl(\Box(\psi_{1},\ldots,\psi_{n};\chi),S\bigr)
		\in N_{\tau}^{\Box,c}(a).
		\]
		By Definition \ref{WAL:1}, we have
		\[
		\Box(\psi_{1},\ldots,\psi_{n};\chi)\in a_{1},
		\]
		and 
		\[
		S\in \mathcal{P}_{\mathrm{fin}}(W^{c}),
		\]
		satisfying
		\[
		S\subseteq \llbracket\chi\rrbracket^c,
		\]
		and
		\[
		S\cap \llbracket\psi_{i}\rrbracket^c\neq\varnothing
		\qquad \text{for every } i=1,\ldots,n.
		\]
		Since \(a_{1}\subseteq b_{1}\), it follows that
		\[
		\Box(\psi_{1},\ldots,\psi_{n};\chi)\in b_{1}.
		\]
		Moreover, the conditions
		\[
		S\subseteq \llbracket\chi\rrbracket^c
		\]
		and
		\[
		S\cap \llbracket\psi_{i}\rrbracket^c\neq\varnothing
		\qquad \text{for every } i=1,\ldots,n
		\]
		depend only on the set \(S\) and on the canonical positive sets
		$\llbracket\chi\rrbracket^c, \llbracket\psi_i\rrbracket^c$ and not on the state
		at which the typed neighbourhood is considered. Therefore,
		\[
		\bigl(\Box(\psi_{1},\ldots,\psi_{n};\chi),S\bigr)
		\in N_{\tau}^{\Box,c}(b).
		\]
		Hence
		\[
		N_{\tau}^{\Box,c}(a)\subseteq N_{\tau}^{\Box,c}(b).
		\]
		We now prove the persistence condition for the \(\Diamond\)-neighbourhood assignment. Let
		\[
		\bigl(\Diamond(\varphi_{1},\ldots,\varphi_{n};\psi),C\bigr)
		\in N_{\tau}^{\Diamond,c}(b).
		\]
		By Definition \ref{WAL:2}, we have
		\[
		\Diamond(\varphi_{1},\ldots,\varphi_{n};\psi)\in b_{2},
		\]
		and 
		\[
		C\in \mathcal{P}_{\mathrm{fin}}(W^{c}),
		\]
		satisfying
		\[
		C\cap \llbracket\psi\rrbracket^c=\varnothing,
		\]
		and
		\[
		C\nsubseteq \llbracket\varphi_i\rrbracket^c
		\qquad \text{for every } i=1,\ldots,n.
		\]
		Since \(b_{2}\subseteq a_{2}\), it follows that
		\[
		\Diamond(\varphi_{1},\ldots,\varphi_{n};\psi)\in a_{2}.
		\]
		Again, the conditions
		\[
	C\cap \llbracket\psi\rrbracket^c=\varnothing,
	\]
	and
	\[
	C\nsubseteq \llbracket\varphi_i\rrbracket^c
	\qquad \text{for every } i=1,\ldots,n.
	\]
		depend only on \(C\) and the canonical positive sets, and not on the
		particular canonical state. Consequently,
		\[
		\bigl(\Diamond(\varphi_{1},\ldots,\varphi_{n};\psi),C\bigr)
		\in N_{\tau}^{\Diamond,c}(a).
		\]
		Therefore,
		\[
		N_{\tau}^{\Diamond,c}(b)\subseteq N_{\tau}^{\Diamond,c}(a).
		\]
		
		Thus both persistence conditions are satisfied.
	\end{proof}
\end{lem}
\subsection{Canonical typed model and Truth Lemma}
Define the canonical typed model by
\[
\mathfrak{M}^{\tau,c}=(W^c,\leq^c,N_\tau^{\Box,c},N_\tau^{\Diamond,c},V^c).
\]
By Lemma \ref{PTYNA}, \(\mathfrak M^{\tau,c}\) is a typed-persistent two-sided neighbourhood model.
\begin{lem}\label{TTL}
	For every formula $\alpha\in\mathcal{L}_{\mathrm{PINL}}$ and every $a\in W^c$,
	\[
	\mathfrak{M}^{\tau,c},\; a\models_\tau\alpha\iff\alpha\in a_1.
	\]
	\begin{proof}
	We prove it by induction on the complexity of $\alpha$. The cases $\alpha=p,\top,\bot,\wedge,\vee$ are standard. For conjunction we use closure of $a_1$ under finite meets; for disjunction we use primeness of $a_1$.\\
	\noindent Now let $\alpha=\Box(\psi_1,\ldots,\psi_n;\chi)$. \\
\noindent First suppose $\alpha\in a_1$. We show $\mathfrak{M}^{\tau,c}, a\models_\tau\alpha$\\
 Since $\alpha\in a_1$, by Lemma \ref{WA}, there exists a finite set $S\subseteq W^c$ such that
	\begin{equation}\label{TTL:1}
	S\subseteq\llbracket\chi\rrbracket^c,\; S\cap\llbracket\psi_i\rrbracket^c\neq\varnothing\text{ for all } i\in\{1,\ldots,n\}.
	\end{equation}
	Since \(\alpha\in a_{1}\), Definition \ref{WAL:1} and (\ref{TTL:1}) yield
	\begin{equation}\label{TTL:2}
	(\alpha,S)\in N_{\tau}^{\Box,c}(a).
	\end{equation}
	By the induction hypothesis, for every $i=1,\ldots,n$ and every $u\in W^c$,
	\[
	\mathfrak{M}^{\tau,c}, u\models_\tau\psi_i\iff \psi_i\in u_1,\quad \mathfrak{M}^{\tau,c}, u\models_\tau\chi\iff\chi\in u_1.
	\]
	Therefore
	\[
	\begin{aligned}
	\llbracket\psi_i\rrbracket^{\tau,c}&=\{u\in W^c:\mathfrak{M}^{\tau,c},u\models_\tau\psi_i\}\\
	&=\{u\in W^c:\psi_i\in u_1\}\\
	&=\llbracket\psi_i\rrbracket^c.
	\end{aligned}
	\]
	and similarly
	\[
	\llbracket\chi\rrbracket^{\tau,c}=\llbracket\chi\rrbracket^c.
	\]
	Hence (\ref{TTL:1}) gives 
	\[
	S\subseteq\llbracket\chi\rrbracket^{\tau,c}, \quad\; S\cap\llbracket\psi_i\rrbracket^{\tau,c}\neq\varnothing\text{ for every } i=1,\ldots,n.
	\]
	Together with (\ref{TTL:2}), this is exactly the typed semantics clause for $\Box$. Thus
	\[
	\mathfrak{M}^{\tau,c}, a\models_\tau\alpha.
	\]
	\noindent Conversely, suppose that $\mathfrak{M}^{\tau,c},a\models_\tau\alpha$. We show $\alpha\in a_1$.\\
\noindent Then by the typed $\Box$-clause there exists \(S\subseteq W^{c}\) such that
	\[
	(\alpha,S)\in N_\tau^{\Box,c}(a)
	\]
	and \(S\) satisfies the required witness conditions. By Definition \ref{WAL:1}, membership of this pair in
	\(N_{\tau}^{\Box,c}(a)\) implies that
	\[
	\alpha\in a_{1}.
	\]
	This proves the \(\Box\)-case.\\
		\noindent Now suppose 
	\[
	\alpha=\Diamond(\varphi_1,\ldots,\varphi_n;\psi)=\delta.
	\]
	\noindent Assume $\delta\in a_1$. Since $a=(a_1,a_2)$ is a prime theory pair, every formula belongs to exactly one of $a_1$ and $a_2$. Hence 
	\[
	\delta\notin a_2.
	\]
	We show that $\mathfrak{M}^{\tau,c}, a\models_\tau\delta$. Assume, for contradiction, that
	\[
	\mathfrak M^{\tau,c},a\not\models_\tau\delta.
	\]
	Then, by the typed \(\Diamond\)-clause, there exists \(C\subseteq W^c\) such that
	\[
	(\delta,C)\in N^{\Diamond,c}_\tau(a)
	\]
	and 
	\[
	\begin{aligned}
	C\cap\llbracket\psi\rrbracket^{\mathfrak M^{\tau,c}}=\emptyset \text{ and }\\
	C\nsubseteq\llbracket\varphi_i\rrbracket^{\mathfrak M^{\tau,c}} \quad\text{for every }i=1,\ldots,n.
\end{aligned}
	\]
	But by Definition \ref{WAL:2}, the membership
	\[
	(\delta,C)\in N^{\Diamond,c}_\tau(a)
	\]
	implies \(\delta\in a_2\). This contradicts \(\delta\notin a_2\). Hence
	\[
	\mathfrak M_{\tau,c},a\models_\tau\delta.
	\]

	\noindent Conversely assume $\mathfrak{M}^{\tau,c}, a\models_\tau\delta$. We show that $\delta\in a_1$\\
	Suppose $\delta\notin a_1$. Since $a=(a_1,a_2)$ is a prime theory pair, it follows that $\delta\in a_2$.
	By Lemma \ref{CTNSL}, there exists a finite set \(A\subseteq W^{c}\) such that
	\begin{equation}\label{TTL:3}
	A\cap\llbracket\psi\rrbracket^{c}=\varnothing,\qquad
	A\nsubseteq\llbracket\varphi_{i}\rrbracket^{c}
	\qquad\text{for every }i=1,\ldots,n.
	\end{equation}
	Since \(\delta\in a_{2}\), Definition \ref{WAL:2} and the equation (\ref{TTL:3}) give
	\begin{equation}\label{TTL:4}
	(\delta,A)\in N_{\tau}^{\Diamond,c}(a).
	\end{equation}

By the induction hypothesis,
	\[
	\llbracket\psi\rrbracket^{\tau,c}=\llbracket\psi\rrbracket^c,\quad\;\llbracket\varphi_i\rrbracket^{\tau,c}=\llbracket\varphi_i\rrbracket^c\; \forall i\in\{1,\ldots,n\}.
	\]
	Hence (\ref{TTL:3}) becomes 
	\[
	A\cap\llbracket\psi\rrbracket^{\tau,c}=\varnothing\text{ and } A\not\subseteq\llbracket\varphi_i\rrbracket^{\tau,c}\;\forall i\in\{1,\ldots,n\}.
	\]
	Together with (\ref{TTL:4}), this contradicts
	\[
	\mathfrak M^{\tau,c},a\models_{\tau}\delta.
	\]
	Therefore
	\[
	\delta\in a_{1}.
	\]
	This completes the induction.

	\end{proof}
\end{lem}
\subsection{Typed completeness of $\mathrm{PINL}$}
\begin{thm}[Completeness with respect to typed persistent semantics]\label{TCOM}
	For all formulas $\varphi,\psi \in \mathcal{L}_{\mathrm{PINL}}$, 
	\[
	\models_\tau(\varphi\vdash\psi)\Rightarrow \varphi\vdash_{\mathrm{PINL}}\psi.
	\]
	\begin{proof}
		We prove the contrapositive. Suppose that
		\begin{equation}\label{TCOM:1}
		\varphi\not\vdash_{\mathrm{PINL}}\psi
		\end{equation}
		Consider the theory pair 
		\[
		\bigl(\operatorname{Th}(\{\varphi\}),
		\operatorname{CTh}(\{\psi\})\bigr).
		\]
		We claim that this pair is disjoint. If not then by Lemma \ref{DP:1} there exist formulas 
		\[
		\alpha\in\operatorname{Th}(\{\varphi\})
		\qquad\text{and}\qquad
		\beta\in\operatorname{CTh}(\{\psi\})
		\]
		such that 
		\begin{equation}\label{TCOM:2}
		\alpha\vdash_{\mathrm{PINL}}\beta.
		\end{equation}
		By the definitions of the generated theory and generated counter-theory,
		\begin{equation}\label{TCOM:3}
		\varphi\vdash_{\mathrm{PINL}}\alpha
		\qquad\text{and}\qquad
		\beta\vdash_{\mathrm{PINL}}\psi.
		\end{equation}
		From (\ref{TCOM:2}) and (\ref{TCOM:3}), two applications of $(\mathrm{CUT})$ yield
		\[
		\varphi\vdash_{\mathrm{PINL}}\psi.
		\]
		contradicting (\ref{TCOM:1}). Hence 
		\[
		\bigl(\operatorname{Th}(\{\varphi\}),
		\operatorname{CTh}(\{\psi\})\bigr)
		\]
		is disjoint. By Lemma \ref{DP:3}, we can extend this disjoint theory pair to a prime theory pair $a=(a_1,a_2)\in W^c$ such that 
		\[
		\operatorname{Th}(\{\varphi\})\subseteq a_{1}
		\qquad\text{and}\qquad
		\operatorname{CTh}(\{\psi\})\subseteq a_{2}.
		\]
		In particular,
		\begin{equation}\label{TCOM:4}
		\varphi\in a_{1}
		\qquad\text{and}\qquad
		\psi\in a_{2}.
		\end{equation}
		Since \(a\) is a prime theory pair, its two components are disjoint. Therefore,
		\begin{equation}\label{TCOM:5}
		\psi\notin a_{1}.
		\end{equation}
		Now we consider the canonical typed model 
		\[
		\mathfrak M^{\tau,c}=(W^c,\leq^c,N_\tau^{\Box,c},N_\tau^{\Diamond,c},V^c).
		\]
	By Lemma \ref{TTL}, and using (\ref{TCOM:4}), (\ref{TCOM:5}) we have 
	\[
	\mathcal M^{\tau,c},a\models_{\tau}\varphi
	\qquad\text{and}\qquad
	\mathcal M^{\tau,c},a\not\models_{\tau}\psi.
	\]
	Hence
		\[
		\llbracket\varphi\rrbracket^{\mathfrak M^{\tau,c}}
		\nsubseteq
		\llbracket\psi\rrbracket^{\mathfrak M^{\tau,c}}.
		\]
		Therefore
		\[
		\mathfrak M^{\tau,c}
		\not\models_{\tau}
		(\varphi\vdash\psi).
		\]
		Since \(\mathfrak M^{\tau,c}\) is a typed persistent two-sided model, it follows that
		\[
		\not\models_{\tau}(\varphi\vdash\psi).
		\]
		Thus
		\[
		\varphi\not\vdash_{\mathrm{PINL}}\psi
		\quad\Longrightarrow\quad
		\not\models_{\tau}(\varphi\vdash\psi).
		\]
		By contraposition,
		\[
		\models_\tau(\phi\vdash\psi)
		\Longrightarrow
		\phi\vdash_{\mathrm{PINL}}\psi.
		\]
		This completes the proof.
	\end{proof}
\end{thm}
		\begin{rem}
			Theorem \ref{TCOM} establishes the completeness of PINL with respect to the auxiliary typed persistent semantics used in the canonical construction. 
			This study introduces typed semantics to establish Lemma \ref{TTL}, but does not offer a general soundness theorem for all typed persistent models.
		\end{rem}

\section{Algebraic Semantics of Positive INL}\label{ASOP}
The algebraic counterpart of positive instantial neighbourhood logic is naturally formulated in terms of distributive lattices equipped with instantial operations. This viewpoint is already suggested by Bezhanishvili et al.\cite{bezhanishvili2020duality} in their coalgebraic treatment of instantial neighbourhood logic.\\
In the present paper, however, the two modalities \(\Box\) and \(\Diamond\) are treated as primitive and independent. We therefore isolate the algebraic content of the two modal fragments before adding any interaction principles between them. This leads to the following notion of a \(2\)-\(\mathrm{DLIO}\): a bounded distributive lattice equipped with two families of finitary operations, one interpreting the \(\Box\)-formulas and the other interpreting the \(\Diamond\)-formulas.\\
The purpose of this section is to formulate the corresponding algebraic semantics and to prove algebraic soundness and completeness for \(\mathrm{PINL}\) with respect to \(2\)-$\mathrm{DLIO}s$.

\begin{defn}\label{DLIO1}
	A $2$-$\mathrm{DLIO}$ is an algebra
	\[
	\mathcal{A}=(A,\wedge,\vee,0,1,(f_n)_{n\in\omega},(g_n)_{n\in\omega})
	\]
	such that $(A,\wedge,\vee,0,1)$ is a bounded distributive lattice and for every $n\in\omega$, the operations
	\[
	f_n,g_n:A^{n+1}\to A
	\]
	satisfy the following conditions.\\
	\noindent \textbf{The $f_n$-laws}:\\
	\noindent For all $a_1,\ldots,a_n,b,c,d,\gamma,\delta\in A$, and every $k\leq n$
	\begin{enumerate}[(F1)]
		\item $f_n(a_1,\ldots,a_{k-1},0,a_{k+1},\ldots,a_n;b)=0$,
		\item  for every permutation $\pi$ of $\{1,\ldots,n\}$, $f_n(a_1,\ldots,a_n;b)=f_n(a_\pi(1),\ldots,a_\pi(n);b)$, 
		\item $f_n(a_1,\ldots,a_{k-1},c\vee d,a_{k+1},\ldots,a_n;b)=f_n(a_1,\ldots,a_{k-1},c,a_{k+1},\ldots,a_n;b)\vee f_n(a_1,\ldots,a_{k-1},d,a_{k+1},\ldots,a_n;b)$,
		\item $f_n(a_1,\ldots,a_n;b\wedge c)\leq f_n(a_1,\ldots,a_n;b)\wedge f_n(a_1,\ldots,a_n;c)$,
		\item $f_n(a_1,\ldots,a_n;b)\leq f_n(a_1,\ldots,a_{n-1},a_n\wedge b;b)$,
		\item if $\gamma\vee\delta=1$, then $f_n(a_1,\ldots,a_n;b)\leq f_{n+1}(a_1,\ldots,a_n,\gamma;b)\vee f_n(a_1,\ldots,a_n;b\wedge \delta)$,
		\item $f_{n+1}(a_1,\ldots,a_n,c;b)\leq f_n(a_1,\ldots,a_n;b)$,
		\item $f_n(a_1,\ldots,a_n;b)\leq f_{n+1}(a_1,\ldots,a_n,a_n;b)$.
	\end{enumerate}
	\textbf{The $g_n$-laws}:
	\begin{enumerate}[(G1)]
		\item $g_n(a_1,\ldots,a_{k-1},1,a_{k+1},\ldots,a_n;b)=1$,
		\item for every permutation $\pi$ of $\{1,\ldots,n\}$, $g_n(a_1,\ldots,a_n;b)=g_n(a_\pi(1),\ldots,a_\pi(n);b)$, 
		\item $g_n(a_1,\ldots,a_{k-1},c\wedge d,a_{k+1},\ldots,a_n;b)=g_n(a_1,\ldots,a_{k-1},c,a_{k+1},\ldots,a_n;b)\wedge g_n(a_1,\ldots,a_{k-1},d,a_{k+1},\ldots,a_n;b)$,
		\item $g_n(a_1,\ldots,a_n;b\wedge c)\leq g_n(a_1,\ldots,a_n;b)\wedge g_n(a_1,\ldots,a_n;c)$,
		\item $g_n(a_1,\ldots,a_{n-1},a_n\vee b;b)\leq g_n(a_1,\ldots,a_n;b)$,
		\item if $\gamma\wedge\delta=0$, then $g_{n+1}(a_1,\ldots,a_n,\gamma;b)\wedge g_n(a_1,\ldots,a_n;b\vee\delta)\leq g_n(a_1,\ldots,a_n;b)$,
		\item $g_n(a_1,\ldots,a_n;b)\leq g_{n+1}(a_1,\ldots,a_n,c;b)$,
		\item $g_{n+1}(a_1,\ldots,a_n,a_n;b)=g_n(a_1,\ldots,a_n;b)$.
	\end{enumerate}
\end{defn}

\subsection*{Derived monotonicity laws}
The syntactic monotonicity rules from Section \ref{PSPINL} are reflected algebraically by the following monotonicity properties of the operations \(f_n\) and \(g_n\).
\begin{lem}\label{MONDLIO}
	In every $2$-$\mathrm{DLIO}$, the following hold.
	\begin{enumerate}
		\item if $a_k\leq a'_k$, then 
		\[
		f_n(a_1,\ldots,a_k,\ldots,a_n;b)\leq f_n(a_1,\ldots, a'_k,\ldots,a_n;b).
		\]
		\item if $b\leq b'$, then 
		\[
		f_n(a_1,\ldots,a_n;b)\leq f_n(a_1,\ldots,a_n;b').
		\]
		\item if $a_k\leq a'_k$, then 
		\[
		g_n(a_1,\ldots,a_k,\ldots,a_n;b)\leq g_n(a_1,\ldots, a'_k,\ldots,a_n;b).
		\]
		\item if $b\leq b'$, then 
		\[
		g_n(a_1,\ldots,a_n;b)\leq g_n(a_1,\ldots,a_n;b').
		\]
	\end{enumerate}
		
	
\end{lem}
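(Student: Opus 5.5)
The plan is to obtain each of the four monotonicity laws directly from the equational and inequational laws of Definition \ref{DLIO1}, using only the standard lattice facts that $x\leq y$ iff $x=x\wedge y$ iff $y=x\vee y$. No use of the distributivity of the base lattice beyond this is needed.

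For item 1, assume $a_k\leq a'_k$, so that $a'_k=a_k\vee a'_k$. Substituting this into the $k$-th instance position and applying the join-preservation law (F3) with $c=a_k$ and $d=a'_k$ gives
\[
f_n(a_1,\ldots,a'_k,\ldots,a_n;b)=f_n(a_1,\ldots,a_k,\ldots,a_n;b)\vee f_n(a_1,\ldots,a'_k,\ldots,a_n;b).
\]
The right-hand side is above its first joinand, which yields the claim. Item 3 is the order dual of this argument. We write $a_k=a_k\wedge a'_k$ and apply the meet-preservation law (G3), which gives
\[
g_n(\ldots,a_k,\ldots;b)=g_n(\ldots,a_k,\ldots;b)\wedge g_n(\ldots,a'_k,\ldots;b)\leq g_n(\ldots,a'_k,\ldots;b).
\]
When $n=0$ there are no instance positions, so items 1 and 3 are vacuous.

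For items 2 and 4, assume $b\leq b'$, so that $b=b\wedge b'$. Law (F4) gives
\[
f_n(\vec a;b)=f_n(\vec a;b\wedge b')\leq f_n(\vec a;b)\wedge f_n(\vec a;b')\leq f_n(\vec a;b').
\]
The same computation with (G4) in place of (F4) gives item 4.

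None of these steps poses a real obstacle. The only point requiring care is to choose the correct law for each item. Join-preservation (F3) is used for $f_n$ but meet-preservation (G3) for $g_n$ in the instance positions, while in the scope position both operations use only the meet-subadditivity laws (F4) and (G4). These laws suffice on their own, because subadditivity of a meet already forces monotonicity. This mirrors Lemma \ref{ADSD} on the syntactic side.
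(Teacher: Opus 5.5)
Your proof is correct. The paper states this lemma without any proof; it only introduces the lemma as the algebraic reflection of the rules $(\Box\mathrm{Mon})$ and $(\Diamond\mathrm{Mon})$.

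Monotonicity is not among the defining laws of a $2$-$\mathrm{DLIO}$, so it genuinely has to be derived. Your derivation supplies exactly the missing argument: you use (F3), (F4), (G3), (G4) together with the rewritings $a'_k=a_k\vee a'_k$, $a_k=a_k\wedge a'_k$ and $b=b\wedge b'$. The paper's algebraic soundness theorem relies on this lemma when it checks the two monotonicity rules. Chaining your four items covers simultaneous changes in several arguments.

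The only inaccuracy is your closing remark. Lemma \ref{ADSD} runs in the opposite direction: it derives the monotonicity schemata from the primitive (Mon) rules, not from join/meet laws. Your argument is better described as the algebraic counterpart of obtaining monotonicity from $(\Box3)$, $(\Box4)$, $(\Diamond3)$, $(\Diamond4)$.
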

\subsection*{Complex algebra of persistent two-sided neighbourhood models}
We now associate a $2$-$\mathrm{DLIO}$ to every persistent two-sided neighbourhood model.\\
Let $\mathfrak M=(W,\leq,N^\Box,N^\Diamond,V)$ be a perisistent model. Then
\[
\operatorname{Up}(W,\leq)=\{U\subseteq W: U \text{ is upward closed with respect to }\leq\}.
\]
For $U_1,\ldots, U_n,V\in \operatorname{Up}(W,\leq)$, define
\[
f_n^\mathfrak{M}(U_1,\ldots,U_n;V)=\{w\in W:\exists S\in N^\Box(w)\text{ s.t. } S\subseteq V \text{ and } S\cap U_i\neq\varnothing \forall i=1,\ldots,n\}.
\]
and 

\[
g_n^\mathfrak{M}(U_1,\ldots, U_n;V)=\Bigl\{w\in W:\forall T\in N^\Diamond(w),
	\bigl[(T\subseteq U_i\text{ for some }i=1,\ldots,n)	\text{ or }(T\cap V\neq\emptyset)\bigr]
	\Bigr\}.
\]
\begin{lem}\label{CALG}
	For all \(U_1,\ldots,U_n,V\in\operatorname{Up}(W,\leq)\),
	\[
	f_n^{\mathfrak M}(U_1,\ldots,U_n;V)
	\quad\text{and}\quad
	g_n^{\mathfrak M}(U_1,\ldots,U_n;V)
	\]
	are up-sets of \((W,\leq)\).
	
	\begin{proof}
	For \(f_n^{\mathfrak M}\), if $w\in f_n^\mathfrak{M}(U_1,\ldots,U_n;V)$ and $w\leq v$, then a witness  $S\in N^\Box(w)$ is also in \(N^\Box(v)\), since
		\[
		N^\Box(w)\subseteq N^\Box(v).
		\]
		Hence $v\in f_n^\mathfrak{M}(U_1,\ldots,U_n;V)$. Thus $f_n^\mathfrak{M}(U_1,\ldots,U_n;V)$ is an upset.\\
	\noindent Suppose $w\in g_n^\mathfrak{M}(U_1,\ldots, U_n;V)$ and $w\leq v$. Let $T\in N^\Diamond(v)$ be arbitrary. Since the frame is persistent and \(w\leq v\), we have \[N^\Diamond(v)\subseteq N^\Diamond(w)\]. Hence $T\in N^\Diamond(w)$. Since $w\in g_n^\mathfrak{M}(U_1,\ldots, U_n;V)$, it follows that either
	\[
	T\subseteq U_i
	\quad\text{for some }i=1,\ldots,n,
	\]
	or 
	\[
	T\cap V\neq\emptyset.
	\]
	Since \(T\in N^\Diamond(v)\) was arbitrary, this proves $v\in g_n^\mathfrak{M}(U_1,\ldots, U_n;V)$. Therefore, $g_n^\mathfrak{M}(U_1,\ldots, U_n;V)$ is an upset.
	\end{proof}
\end{lem}
\begin{prop}\label{EPD}
	For every persistent two-sided neighbourhood model $\mathfrak M=(W,\leq,N^\Box,N^\Diamond,V)$, the complex algebra $\mathbb{A}(\mathfrak M)=(\operatorname{Up}(W,\leq),\cap,\cup,\emptyset,W,(f^\mathfrak{M}_n)_{n\in \omega},(g^\mathfrak{M}_n)_{n\in\omega})$ of $\mathfrak M$ is a $2$-DLIO.
	\begin{proof}
		The lattice reduct
		\[
		(\operatorname{Up}(W,\leq),\cap,\cup,\emptyset,W)
		\]
		is a bounded distributive lattice. By Lemma \ref{CALG}, the operations
		\[
		f_n^{\mathfrak M},g_n^{\mathfrak M}
		\]
		are well-defined on \(\operatorname{Up}(W,\leq)\).
		
		It remains to verify the laws given in Definition \ref{DLIO1}. Each of these laws is the algebraic counterpart of the corresponding sound modal axiom from Section \ref{PSPINL}. Since valuations in persistent models assign arbitrary up-sets to propositional variables, the validity of those modal axioms yields the corresponding equations and inequations for arbitrary elements of \(\operatorname{Up}(W,\leq)\). Equivalently, these laws follow directly from the defining clauses of \(f_n^{\mathfrak M}\) and \(g_n^{\mathfrak M}\).
Thus all defining conditions of a \(2\)-\(\mathrm{DLIO}\) are satisfied. Thus $\mathbb{A}(\mathfrak M)$ is a $2$-$\mathrm{DLIO}$.
	\end{proof}
\end{prop}
\subsection*{Algebraic valuations and validity}
\begin{defn}
	Let $\mathbb A=(A,\wedge,\vee,0,1,(f_n){n\in\omega},(g_n){n\in\omega})$ be a $2$-$\mathrm{DLIO}$. A valuation in $\mathbb{A}$ is a map
	\[
	v:\mathrm{Prop}\to A.
	\]
	It extends uniquely to all positive INL formulas by the following clauses:
	\[
	\begin{aligned}
		v(\top)=1,\quad\; v(\bot)=0,\\
		v(\varphi\wedge\psi)= v(\varphi)\wedge v(\psi),\quad v(\varphi\vee\psi)= v(\varphi)\vee v(\psi),\\
		v(\Box(\varphi_1,\ldots,\varphi_n;\psi))= f_n(v(\varphi_1),\ldots,v(\varphi_n);v(\psi)),\\
		v(\Diamond(\varphi_1,\ldots,\varphi_n;\psi))= g_n(v(\varphi_1),\ldots,v(\varphi_n);v(\psi)).
	\end{aligned}
	\]
	For \(n=0\), these clauses read
	\[
	v(\Box(\psi))=f_0(v(\psi)),
	\qquad
	v(\Diamond(\psi))=g_0(v(\psi)).
	\]
\end{defn}
\begin{defn}
		Let $\mathbb A=(A,\wedge,\vee,0,1,(f_n){n\in\omega},(g_n){n\in\omega})$ be a $2$-$\mathrm{DLIO}$. The algebraic validity of a sequent $\varphi\vdash\psi$ is defined by:
		\[
		\mathbf{A}\models (\varphi\vdash\psi)
		\]
		if and only if for every valuation $v:\mathrm{Prop}\to A$,
		\[
		v(\varphi)\leq v(\psi).
		\]
\end{defn}
It is valid in the class of all $2$-$\mathrm{DLIOs}$ if it is valid in every $2$-$\mathrm{DLIO}$.
\subsection{Algebraic soundness and completeness of PINL}
\begin{thm}
	Let $\mathbf{A}=(A,\wedge,\vee,0,1,(f_n){n\in\omega},(g_n){n\in\omega})$ be a $2$-$\mathrm{DLIO}$. If 
	\[
	\varphi\vdash_{\mathrm{PINL}}\psi,
	\]
	then $\mathbf{A}\models (\varphi\vdash\psi)$.
	\begin{proof}
		Let \(v:\mathrm{Prop}\to A\) be an arbitrary algebraic valuation, extended recursively to all \(\mathrm{PINL}\)-formulas. We prove the result by induction on derivations.
		The propositional axioms are valid because
		\[
		(A,\wedge,\vee,0,1)
		\]
		is a bounded distributive lattice. The modal axiom schemata are valid by the defining \(2\)-\(\mathrm{DLIO}\) laws. In particular, the \(\Box\)-cover axiom $\Box6$ is interpreted by the corresponding \(f_n\)-cover law, using the side condition \(v(\gamma)\vee v(\delta)=1\); similarly, the dual \(\Diamond\)-cover axiom $\Diamond6$ is interpreted by the corresponding \(g_n\)-cover law, using the side condition \(v(\gamma)\wedge v(\delta)=0\).
		It remains to note that the inference rules preserve algebraic validity. The rule \((CUT)\) follows from transitivity of the lattice order. The rule \((US)\) is sound because, for every substitution \(\sigma\), the map
		\[
		v_\sigma(p)=v(\sigma(p))
		\]
		is again an algebraic valuation, and a routine induction gives
		\[
		v(\sigma(\theta))=v_\sigma(\theta)
		\]
		for every formula \(\theta\). The rule \((RE)\) is sound because the operations \(\wedge,\vee,f_n,g_n\) are extensional: if \(v(\alpha)=v(\beta)\), then replacing \(\alpha\) by \(\beta\) inside any formula does not change its value. Finally, the monotonicity rules \((\Box\mathrm{Mon})\) and \((\Diamond\mathrm{Mon})\) preserve validity by Lemma \ref{MONDLIO}.
		Hence every derivable sequent is valid in \(\mathbb A\). Since \(\mathbb A\) was arbitrary,
		\[
		\varphi\vdash_{\mathrm{PINL}}\psi
		\quad\Longrightarrow\quad
		\mathbb A\models(\varphi\vdash\psi).
		\]
	\end{proof}
\end{thm}

\subsection*{The Lindenbaum Algebra of Positive INL}
We now construct the Lindenbaum algebra of \(\mathrm{PINL}\) by identifying provably equivalent formulas. Define
\[
\varphi\equiv_{\mathrm{PINL}}\psi
\quad\Longleftrightarrow\quad
\varphi\vdash_{\mathrm{PINL}}\psi
\ \text{ and }
\psi\vdash_{\mathrm{PINL}}\varphi.
\]
Let \([\varphi]\) denote the equivalence class of \(\varphi\). Thus
\[
\mathcal L_{\mathrm{PINL}}/\equiv_{\mathrm{PINL}}
\]
is the set of all equivalence classes of \(\mathrm{PINL}\)-formulas under provable equivalence.

\begin{defn}
	The Lindenbaum algebra of $\mathrm{PINL}$ is the quotient
	\[
	\mathfrak{L}_{\mathrm{PINL}}=(\mathcal{L}_{\mathrm{PINL}}/\equiv_{\mathrm{PINL}},\wedge,\vee,[\bot],[\top],(f_n^\mathfrak{L})_{n\in\omega},(g_n^\mathfrak{L})_{n\in\omega}),
	\]
	where 
	\[
	\begin{aligned}
	[\varphi]\wedge [\psi]=[\varphi\wedge\psi], [\varphi]\vee[\psi]=[\varphi\vee\psi],\\
	f_n^\mathfrak{L}([\varphi_1],\ldots,[\varphi_n];[\psi])=[\Box(\varphi_1,\ldots,\varphi_n;\psi)],\\
	g_n^\mathfrak{L}([\varphi_1],\ldots,[\varphi_n];[\psi])=[\Diamond(\varphi_1,\ldots,\varphi_n;\psi)].
	\end{aligned}
	\]
	For \(n=0\), these clauses read
	\[
	f_0^L([\psi])=[\Box(\psi)],
	\qquad
	g_0^L([\psi])=[\Diamond(\psi)].
	\]
\end{defn}
 \begin{lem}\label{WDLIO}
 	The operations $f_n^\mathfrak{L}$ and $g_n^\mathfrak{L}$ are well-defined.
 	\begin{proof}
 	Suppose $[\varphi_i]=[\varphi_i']$ for $i=1,\ldots,n$ and $[\psi]=[\psi']$. Then
 		\[
 		\varphi_i\equiv_{\mathrm{PINL}}\varphi_i'\;(i=1,\ldots, n)\;\psi\equiv_{\mathrm{PINL}}\psi'.
 		\] 
 		Hence 
 		\[
 		\varphi_i\vdash_{\mathrm{PINL}}\varphi_i'\text{ and } \varphi_i'\vdash_{\mathrm{PINL}}\varphi_i\;(i=1,\ldots,n),
 		\]
 		and 
 		\[
 		\psi\vdash_{\mathrm{PINL}}\psi'\text{ and } \psi'\vdash_{\mathrm{PINL}}\psi.
 		\]
 		Using $(\Box\mathrm{Mon})$, we obtain
 		\[
 		\begin{aligned}
 			\Box(\varphi_1,\ldots,\varphi_n;\psi)\vdash_{\mathrm{PINL}}\Box(\varphi_1',\ldots,\varphi_n';\psi')\\
 			\Box(\varphi_1',\ldots,\varphi_n';\psi')\vdash_{\mathrm{PINL}}\Box(\varphi_1,\ldots,\varphi_n;\psi).
 		\end{aligned}
 		\]
 		Therefore 
 		\[
 		\Box(\varphi_1,\ldots,\varphi_n;\psi)\equiv_{\mathrm{PINL}} 	\Box(\varphi_1',\ldots,\varphi_n';\psi'),
 		\]
 		and hence
 		\[
 		[	\Box(\varphi_1,\ldots,\varphi_n;\psi)]=[\Box(\varphi_1',\ldots,\varphi_n';\psi')].
 		\]
 		Thus $f_n^{\mathfrak{L}}$ is well-defined.\\
 		The well-definedness of the operation $g_n^\mathfrak{L}$ is proved exactly the same way, by applying $(\Diamond\mathrm{Mon})$.
 		\end{proof}
 	\end{lem}
 \begin{prop}\label{LG}
 $\mathfrak{L}_{\mathrm{PINL}}$ is a $2$-$\mathrm{DLIO}$.
 \begin{proof}
 	The lattice reduct of $\mathfrak{L}_{\mathrm{PINL}}$ is $(\mathcal{L}_{\mathrm{PINL}}/\equiv_{\mathrm{PINL}},\wedge,\vee,[\bot],[\top])$. This is the usual Lindenbaum algebra of the distributive-lattice fragment, and hence it is a bounded distributive lattice. The operations
 	\[
 	f_n^{\mathfrak L}
 	\quad\text{and}\quad
 	g_n^{\mathfrak L}
 	\qquad(n\in\omega)
 	\]
 	are well-defined by Lemma \ref{WDLIO}. It remains only to verify the defining \(2\)-\(\mathrm{DLIO}\) laws. Each law is exactly the algebraic form of the corresponding modal axiom of \(\mathrm{PINL}\). For example, if $[\gamma]\vee[\delta]=[\top]$, then $\top\dashv\vdash_{\mathrm{PINL}}\gamma\vee\delta$. By the \(\Box\)-cover axiom $(\Box6)$,
 	\[
 	\Box(\varphi_1,\ldots,\varphi_n;\psi)
 	\vdash_{\mathrm{PINL}}
 	\Box(\varphi_1,\ldots,\varphi_n,\gamma;\psi)
 	\vee
 	\Box(\varphi_1,\ldots,\varphi_n;\psi\wedge\delta),
 	\]
 	which gives 
 	 \[
 	f_n^\mathfrak{L}([\varphi_1],\ldots,[\varphi_n];[\psi])\leq f_{n+1}^\mathfrak{L}([\varphi_1],\ldots,[\varphi_n],[\gamma];[\psi])\vee f_n^\mathfrak{L}([\varphi_1],\ldots,[\varphi_n];[\psi]\wedge[\delta]),
 	\]
 	the corresponding \(f_n^{\mathfrak L}\)-cover law $(F6)$ in \(\mathfrak{L}_{\mathrm{PINL}}\). The dual \(g_n^{\mathfrak{L}}\)-cover law \((G6)\) is obtained similarly from the \(\Diamond\)-cover axiom \((\Diamond6)\), using the condition $[\gamma]\wedge[\delta]=[\bot]$. All remaining \(f_n^{\mathfrak{L}}\)- and \(g_n^{\mathfrak L}\)-laws follow in the same way from the corresponding \(\Box\)- and \(\Diamond\)-axioms of \(\mathrm{PINL}\). Therefore $\mathfrak L_{\mathrm{PINL}}$ is a \(2\)-\(\mathrm{DLIO}\).

 \end{proof}
\end{prop}
\begin{lem}\label{OLA}
Let $\leq_{\mathfrak{L}_{\mathrm{PINL}}}$ denote the lattice order of the bounded distributive lattice reduct of $\mathfrak{L}_{\mathrm{PINL}}$. For all formulas $\alpha,\beta\in\mathcal{L}_{\mathrm{PINL}}$,
	\[
	[\alpha]\leq_{\mathfrak{L}_{\mathrm{PINL}}}[\beta]\iff\alpha\vdash_{\mathrm{PINL}}\beta.
	\]
	\begin{proof}
		In the lattice reduct of $\mathfrak{L}_{\mathrm{PINL}}$, the order is given by
		\[
		x\leq y\iff x\wedge y=x.
		\]
		Hence
		\[
		[\alpha]\leq_{\mathfrak{L}_{\mathrm{PINL}}}[\beta]\iff [\alpha]\wedge[\beta]=[\alpha].
		\]
		Since 
		\[
		[\alpha]\wedge[\beta]=[\alpha\wedge\beta], 
		\]
		this is equivalent to $[\alpha\wedge\beta]=[\alpha]$,
		That is
		\[
		[\alpha]\leq_{\mathfrak{L}_{\mathrm{PINL}}}[\beta]\iff\alpha\wedge\beta\equiv_{\mathrm{PINL}}\alpha.
		\]
		We shall show that 
		\[
		\alpha\wedge\beta\equiv_{\mathrm{PINL}}\alpha\iff \alpha\vdash_{\mathrm{PINL}}\beta.
		\]
		First consider $\alpha\vdash_{\mathrm{PINL}}\beta$. \\
		By reflexivity, $\alpha\vdash_{\mathrm{PINL}}\alpha$. Using conjunction introduction, we get
		\[
		\alpha\vdash_{\mathrm{PINL}}\alpha\wedge\beta.
		\]
		Also, by conjunction elimination, 
		\[
		\alpha\wedge\beta\vdash_{\mathrm{PINL}}\alpha.
		\]
		Hence $\alpha\wedge\beta\equiv_{\mathrm{PINL}}\alpha$. Therefore, $[\alpha]\leq_{\mathfrak{L}_{\mathrm{PINL}}}[\beta]$.\\
		Conversely, suppose 
		\[
		\alpha\wedge\beta\equiv_{\mathrm{PINL}}\alpha.
		\]
		Then, in particular, $\alpha\vdash_{\mathrm{PINL}}\alpha\wedge\beta$. By conjunction elimination, 
		\[
		\alpha\wedge\beta\vdash_{\mathrm{PINL}}\beta.
		\]
		Therefore, by $(\mathrm{CUT})$, $\alpha\vdash_{\mathrm{PINL}}\beta$. Thus
		\[
		[\alpha]\leq_{\mathfrak{L}_{\mathrm{PINL}}}[\beta]\iff\alpha\vdash_{\mathrm{PINL}}\beta.
		\]
	\end{proof}
	
\end{lem}

\subsection*{Algebraic Completeness}
The following algebraic completeness theorem is obtained by the standard Lindenbaum-algebra argument. Its role here is to show that the algebraic semantics based on \(2\)-\(\mathrm{DLIO}\)s exactly captures the proof system of \(\mathrm{PINL}\).
 \begin{thm}
 	For all $\mathrm{PINL}$ formulas $\varphi,\psi$, 
 	\[
 	\models_{\text{$2$-$\mathrm{DLIO}$}} (\varphi\vdash\psi)\Rightarrow \varphi\vdash_{\mathrm{PINL}}\psi,
 	\]
 	where $\models_{\text{$2$-$\mathrm{DLIO}$}} (\varphi\vdash\psi)$ means validity of the sequent $\varphi\vdash\psi$ in every $2$-$\mathrm{DLIO}$ under every algebraic valuation.
 	\begin{proof}
 		Suppose $\varphi\not\vdash_{\mathrm{PINL}}\psi$. Consider the Lindenbaum algebra $\mathfrak{L}_{\mathrm{PINL}}=\mathcal{L}_{\mathrm{PINL}}/\equiv_{\mathrm{PINL}}$. By Proposition \ref{LG}, $\mathfrak{L}_{\mathrm{PINL}}$ is a $2$-$\mathrm{DLIO}$. Let $v_0$ be the canonical valuation defined by 
 		\[
 		v_0(p)=[p].
 		\]
 		By induction on formulas, we obtain 
 		\[
 		v_0(\theta)=[\theta], 
 		\]
 		for every $\mathrm{PINL}$-formula $\theta$. Hence 
 		\[
 		v_0(\varphi)=[\varphi], v_0(\psi)=[\psi].
 		\]
 	By Lemma \ref{OLA},
 		\[
 		[\alpha]\leq_{\mathfrak{L}_{\mathrm{PINL}}}[\beta]\iff\alpha\vdash_{\mathrm{PINL}}\beta.
 		\]
 		Since $\varphi\not\vdash_{\mathrm{PINL}}\psi$, we have 
 		\[
 		[\varphi]\not\leq_{\mathfrak{L}_{\mathrm{PINL}}}[\psi].
 		\]
 		Therefore
 		\[
 		v_0(\varphi)\not\leq_{\mathfrak{L}_{\mathrm{PINL}}} v_0(\psi).
 		\]
 		Thus the sequent $\varphi\vdash\psi$ is not valid in all $2$-$\mathrm{DLIOs}$.  Hence
 		\[
 		\models_{2\text{-}\mathrm{DLIO}}(\varphi\vdash\psi)
 		\quad\Longrightarrow\quad
 		\varphi\vdash_{\mathrm{PINL}}\psi.
 		\]
 	\end{proof}
 \end{thm}
\section{Bitopological PINL-spaces and Admissible-open Representation}\label{BPAOR}
The typed canonical semantics introduced in Section \ref{CPMTC} was used to prove completeness.
Bitopological spaces, introduced by Kelly \cite{kelly1963bitopological}, are spaces equipped with two topologies. In the present canonical construction, these two topologies are generated by the positive and negative opens determined by prime theory pairs.
However, the typed canonical model is still partly syntactic, because its neighbourhoods are labelled by formulas. \\
For a representation-theoretic treatment, this syntactic dependence should be removed. If two formulas are provably equivalent, then they determine the same element of the Lindenbaum $2$-$\mathrm{DLIO}$, and therefore they should determine the same data on the space side. The aim of this section is to replace formula labels by admissible positive opens and to show that the resulting canonical bitopological structure recovers the Lindenbaum $2$-$\mathrm{DLIO}$.

\subsection{The canonical bitopological setting for Positive INL}
Let $W^c$ be the set of prime theory pairs $a=(a_1,a_2)$. For each formula $\varphi$, define
\[
U_\varphi=\{a\in W^c:\varphi\in a_1\},
\]
and 
\[
V_\varphi=\{a\in W^c:\varphi\in a_2\}.
\]
For each formula $\varphi$ in positive instantial neighbourhood language, the set $U_\varphi$ consists of those prime pairs whose positive component contains $\varphi$, while $V_\varphi$ consists of those prime pairs whose negative component contains $\varphi$.

\begin{defn}
	Define two topologies on $W^c$ by 
	\[
	\tau^+=\text{ the topology generated by } \{U_\varphi:\varphi\in \mathrm{Fm}\},
	\]
	and 
	\[
	\tau^-=\text{ the topology generated by } \{V_\varphi:\varphi\in\mathcal{L}_{\mathrm{PINL}}\}.
	\]
	Then $(W^c,\tau^+,\tau^-)$ is a bitopological space.
\end{defn}
\begin{note}\label{ACDLIO}
The collection $\{U_\varphi:\varphi\in \mathcal{L}_{\mathrm{PINL}}\}$ forms the lattice of admissible positive opens, and this lattice will serve as the underlying lattice for the algebraic semantics.\\
We call $A^c=\{U_\varphi:\varphi\in\mathcal{L}_{\mathrm{PINL}}\}$ the lattice of admissible positive opens. It is closed under finite unions and finite intersections, since
\[
\begin{aligned}
U_{\varphi\wedge\psi}=U_\varphi\cap U_\psi,\\
U_{\varphi\vee\psi}=U_\varphi\cup U_\psi,
\end{aligned}
\]
and
\[
U_\bot=\varnothing,\; U_\top=W^c.
\]
Hence, $A^c$ satisfies the basis condition and also $(A^c,\cap,\cup,\emptyset,W^c)$ is a bounded distributive lattice of admissible positive opens.\\
Let $B^c=\{V_\varphi:\varphi\in\mathcal{L}_{\mathrm{PINL}}\}$. Now,
\[
V_\bot=W^c,
\]
because $\bot\in a_2$ for every prime counter-theory $a_2$. Hence $B^c$ covers $W^c$. Since $V_{\varphi\vee\psi}=V_\varphi\cap V_\psi$, $B^c$ is closed under finite intersection. Therefore, $B^c$ is a basis for $\tau^-$. \\
We have 
\[
V_{\varphi\wedge \psi}=V_\varphi\cup V_\psi,\;V_{\varphi\vee\psi}=V_\varphi\cap V_\psi, \; V_\bot=W^c,\; V_\top=\emptyset.
\]
Thus $B^c$ is closed under finite intersections and finite unions and contains the bottom and top elements. Hence $(B^c,\cap,\cup,\emptyset,W^c)$ is a bounded distributive lattice.
\end{note}
\subsection*{The separation lemma for admissible opens}
\begin{lem}\label{SLAO}
	For all formulas $\varphi,\psi$ in the language of $\mathrm{PINL}$, 
	\[
	U_\varphi\subseteq U_\psi
	\]
	if and only if
	\[
	\varphi\vdash_{\mathrm{PINL}}\psi.
	\]
	Consequently,
	\[
	U_\varphi=U_\psi
	\]
	if and only if
	\[
	\varphi\dashv\vdash_{\mathrm{PINL}}\psi.
	\]
	\begin{proof}
		Suppose $\varphi\vdash_{\mathrm{PINL}}\psi$. Let $a=(a_1,a_2)\in U_\varphi$. Then $\varphi\in a_1$. Since $a_1$ is a theory and is closed under provable consequence, $\psi\in a_1$. Hence $a\in U_\psi$. Therefore $U_\varphi\subseteq U_\psi$.\\
		Conversely, suppose that $U_\varphi\subseteq U_\psi$. We prove that $\varphi\vdash_{\mathrm{PINL}}\psi$. Assume, for contradiction, that 
		\[
		\varphi\not\vdash_{\mathrm{PINL}}\psi.
		\]
		Then $Th(\{\varphi\})$, the theory generated by $\varphi$ and $CTh(\{\psi\})$, the counter-theory generated by $\psi$ form a disjoint theory pair. Then by Lemma \ref{DP:3}, there exists a prime theory pair
		\[
		a=(a_1,a_2)\in W^c
		\]
		such that 
		\[
		\varphi\in a_1\;\text{ and }\;\psi\in a_2.
		\]
		Since the pair is disjoint, $\psi\notin a_1$. Hence $a\in U_\varphi$, but $a\notin U_\psi$. This contradicts $U_\varphi\subseteq U_\psi$. Therefore $\varphi\vdash_{\mathrm{PINL}}\psi$.\\
		The equivalence $U_\varphi=U_\psi$ if and only if $\varphi\dashv\vdash_{\mathrm{PINL}}\psi$ follows immediately by applying the first part in both directions.
	\end{proof}
\end{lem}
\subsection{Typed bitopological $\mathrm{PINL}$-spaces}
The prime-pair typed canonical model constructed in Section \ref{CPMTC} suggests the data needed on space side. Since canonical states contain both positive and counter-theoretic information, we start with a bitopological space $(X,\tau^+,\tau^-)$.  We need a distinguished bounded distributive lattice $\mathcal{A}\subseteq\tau^+$ of admissible positive opens, since the \(2\)-\(\mathrm{DLIO}\) operations will be interpreted on these opens. Finally, we need typed neighbourhood assignments $N_\tau^\Box$ and $N_\tau^\Diamond$. These are required because the instantial operators are not determined by the two topologies alone. They also depend on the neighbourhood data associated with each tuple $(A_1,\ldots,A_n;B)$ of admissible positive opens. 
The labels are therefore taken from $\mathcal{A}$, rather than from formulas. This makes the structure depend only on the algebraic meaning of formulas: by Lemma \ref{SLAO}, provably equivalent formulas determine the same admissible positive open. This motivates the following definition.

\begin{defn}\label{TBPS}
	A typed bitopological PINL-space is a structure 
	\[
	\mathfrak{X}=(X,\tau^+,\tau^-,\mathcal{A},N_\tau^\Box,N_\tau^\Diamond)
	\]
	such that 
	\begin{enumerate}[(i)]
		\item $(X,\tau^+,\tau^-)$ is a bitopological space;
		\item $\mathcal{A}\subseteq\tau^+$ is a bounded distributive lattice of admissible positive opens, with bottom $\emptyset$ and top $X$;
		\item for each $x\in X$, 
		\[
		N_\tau^\Box(x)\subseteq\bigcup_{n\in\omega}(\mathcal{A}^{n+1}\times \mathcal{P}(X));
		\]
		\item for each $x\in X$, 
		\[
		N_\tau^\Diamond(x)\subseteq\bigcup_{n\in\omega }(\mathcal{A}^{n+1}\times\mathcal{P}(X)).
		\]
	\end{enumerate}
	\end{defn}
An element of $N_\tau^\Box(x)$ has the form
\[
((A_1,\ldots,A_n;B),S),
\]
where $A_1,\ldots,A_n, B\in \mathcal{A}$ and $S\subseteq X$. The tuple 
\[
(A_1,\ldots,A_n;B)
\]
is the admissible-open label attached to the neighbourhod $S$. Similarly, an elements of $N_\tau^\Diamond(x)$ has the form
\[
((A_1,\ldots,A_n;B),C),
\]
where $C\subseteq X$, is the neighbourhood set, while the tuple $(A_1,\ldots,A_n;B)$ is the admissible-open label attached to $C$.
\subsection*{The algebra of admissible opens}
Let $\mathfrak{X}=(X,\tau^+,\tau^-,\mathcal{A},N_\tau^\Box,N_\tau^\Diamond)$ be a typed bitopological PINL-space.\\
For $A_1,\ldots,A_n,B\in \mathcal{A}$, define 
\[
f_n^\mathfrak{X}(A_1,\ldots,A_n;B)
\]
to be the set of all $x\in X$ such that there exists $S\subseteq X$ satisfying 
\[
((A_1,\ldots,A_n;B),S)\in N_\tau^\Box(x),\\
\]
\[
S\subseteq B,\quad \text{ and }\quad S\cap A_i\neq\emptyset\;\text{ for every } i=1,\ldots,n.
\]
Thus $f_n^\mathfrak{X}$ is the operation induced by the typed $\Box$-clause. \\
Similarly, define $g_n^\mathfrak{X}(A_1,\ldots,A_n;B)$ to be the set of all $x\in X$ such that there is no $C\subseteq X$ satisfying 
\[
((A_1,\ldots,A_n;B),C)\in N_\tau^\Diamond(x),\\
\]
\[
C\cap B=\emptyset,\text{ and } C\not\subseteq A_i\;\text{ for every } i=1,\ldots,n.
\]
Equivalently, $x\in g_n^\mathfrak{X}(A_1,\ldots,A_n;B)$ if and only if for every $C\subseteq X$ with 
\[
((A_1,\ldots,A_n;B),C)\in N_\tau^\Diamond(x)
\]
satisfies 
\[
C\cap B\neq\emptyset, \text{ or } C\subseteq A_i\;\text{ for some } i=1,\ldots,n.
\]
Thus $g_n^\mathfrak{X}$ is the operation induced by the typed $\Diamond$-clause.
\begin{defn}\label{TBOC}
	A typed bitopological PINL-space $\mathfrak{X}=(X,\tau^+,\tau^-,\mathcal{A},N_\tau^\Box,N_\tau^\Diamond)$ is called operation-closed if, for every \(n\in\omega\) and for all $A_1,\ldots,A_n,B\in\mathcal{A}$, we have
	\[
	f_n^\mathfrak{X}(A_1,\ldots,A_n;B)\in \mathcal{A}
	\]
	and 
	\[
	g_n^\mathfrak{X}(A_1,\ldots,A_n;B)\in\mathcal{A}.
	\]
	It is called algebraically admissible if it is operation-closed and the induced operations
	\[
	(f_n^\mathfrak{X})_{n\in\omega},\; (g_n^\mathfrak{X})_{n\in\omega}
	\]
	satisfy the defining $2$-$\mathrm{DLIO}$ laws.
\end{defn}
\begin{prop}\label{TBPA2}
	If the typed bitopological PINL-space $\mathfrak{X}=(X,\tau^+,\tau^-,\mathcal{A},N_\tau^\Box,N_\tau^\Diamond)$ is algebraically admissible, then 
	\[
	\mathbb{A}(\mathfrak{X})=(\mathcal{A},\cap,\cup,\emptyset,X,(f_n^\mathfrak{X})_{n\in\omega}, (g_n^\mathfrak{X})_{n\in\omega})
	\]
	is a $2$-$\mathrm{DLIO}$.
	\begin{proof}
		By Definition \ref{TBPS}, $(\mathcal{A},\cap,\cup,\emptyset,X)$ is a bounded distributive lattice of admissible positive opens.
		Since $\mathfrak X$ is algebraically admissible, the operations $f_n^\mathfrak{X}$ and $g_n^\mathfrak{X}$ are closed on $\mathcal{A}$ and satisfy the defining $2$-$\mathrm{DLIO}$ laws. Hence $\mathbb{A}(\mathfrak{X})$ is a $2$-$\mathrm{DLIO}$.
	\end{proof}
\end{prop}
\subsection{The canonical bitopological PINL-space}
We now define a bitopological PINL-space from the prime-pair typed canonical model developed in Section \ref{CPMTC}. First we require the following basic fact.
\begin{cor}[Independence of modal open labels from formula representatives]\label{IRC}
	Let $n\in\omega$, and let
	\[
	\varphi_1,\ldots,\varphi_n,\theta_1,\ldots,\theta_n,\chi,\rho\in\mathcal{L}_{\mathrm{PINL}}.
	\]
	Suppose that 
	\[
	U_{\varphi_i}=U_{\theta_i}\text{ for every } i=1,\ldots,n,
	\]
	and
	\[
	U_\chi=U_\rho.
	\]
	Then
	\[
	\Box(\varphi_1,\ldots,\varphi_n;\chi)\dashv\vdash_{\mathrm{PINL}}\Box(\theta_1,\ldots,\theta_n;\rho),
	\]
	and
	\[
	\Diamond(\varphi_1,\ldots,\varphi_n;\chi)\dashv\vdash_{\mathrm{PINL}}\Diamond(\theta_1,\ldots,\theta_n;\rho).
	\]
	Consequently, for every prime theory pair $a=(a_1,a_2)\in W^c$,
	\[
	\Box(\varphi_1,\ldots,\varphi_n;\chi)\in a_1\iff\Box(\theta_1,\ldots,\theta_n;\rho)\in a_1,
	\]
	and
	\[
	\Diamond(\varphi_1,\ldots,\varphi_n;\chi)\in a_2\iff\Diamond(\theta_1,\ldots,\theta_n;\rho)\in a_2.
	\]
	\begin{proof}
		By Lemma \ref{SLAO}, the equality $U_{\varphi_i}=U_{\theta_i}$ implies 
		\[
		\varphi_i\dashv\vdash_{\mathrm{PINL}}\theta_i\text{ for every } i=1,\ldots,n.
		\]
		Similarly, $U_\chi=U_\rho$ implies
		\[
		\chi\dashv\vdash_{\mathrm{PINL}}\rho.
		\]
		By repeated applications of replacement of equivalents \((RE)\) in the instance coordinates and in the scope coordinate, we obtain
		\[
		\Box(\varphi_1,\ldots,\varphi_n;\chi)\dashv\vdash_{\mathrm{PINL}}\Box(\theta_1,\ldots,\theta_n;\rho),
		\]
		and 
		\[
		\Diamond(\varphi_1,\ldots,\varphi_n;\chi)\dashv\vdash_{\mathrm{PINL}}\Diamond(\theta_1,\ldots,\theta_n;\rho).
		\]
		Now let $a=(a_1,a_2)\in W^c$. Since $a_1$ is a theory and is closed under provable consequence, we have 
		\[
		\Box(\varphi_1,\ldots,\varphi_n;\chi)\in a_1\iff\Box(\theta_1,\ldots,\theta_n;\rho)\in a_1.
		\]
		Similarly, since $a_2$ is a counter-theory and is downward closed under provable consequence, we have 
		\[
		\Diamond(\varphi_1,\ldots,\varphi_n;\chi)\in a_2\iff\Diamond(\theta_1,\ldots,\theta_n;\rho)\in a_2.
		\]
		This proves the corollary.
	\end{proof}
\end{cor}
Corollary \ref{IRC} ensures that a modal label given by a tuple of admissible positive opens is independent of the choice of formulas representing those opens. Hence the following open-labelled neighbourhood assignments are well-defined.
\begin{defn}[Canonical Bitopological PINL-space]\label{CBP}
	The canonical bitopological PINL-space is the structure
	\[
	\mathfrak X^{c}_{\mathrm{PINL}}
	=
	\bigl(
	W^{c},
	\tau^{+,c},
	\tau^{-,c},
	A^{c},
	\mathcal N^{\Box,c},
	\mathcal N^{\Diamond,c}
	\bigr),
	\]
	where
	\[
	A^{c}=\{U_{\varphi}:\varphi\in \mathcal{L}_{\mathrm{PINL}}\},
	\qquad
	U_{\varphi}
	=
	\{a=(a_{1},a_{2})\in W^{c}:\varphi\in a_{1}\},
	\]
	and \(\tau^{+,c}\) is the topology generated by \(A^{c}\). Similarly,
	\[
	B^{c}=\{V_{\varphi}:\varphi\in \mathcal{L}_{\mathrm{PINL}}\},
	\qquad
	V_{\varphi}
	=
	\{a=(a_{1},a_{2})\in W^{c}:\varphi\in a_{2}\},
	\]
	and \(\tau^{-,c}\) is the topology generated by \(B^{c}\). Thus $(W^{c},\tau^{+,c},	\tau^{-,c})$ is a bitopological space.\\
	
	\noindent For each \(a=(a_{1},a_{2})\in W^{c}\), define
	\[
	\begin{aligned}
		\mathcal N^{\Box,c}(a)
		:=
		\Bigl\{&
		\bigl(
		(U_{\psi_{1}},\ldots,U_{\psi_{n}};U_{\chi}),
		S
		\bigr):
		\\
		&
		\Box(\psi_{1},\ldots,\psi_{n};\chi)\in a_{1},
		\\
		&
		S\in\mathcal P_{\mathrm{fin}}(W^{c}),
		\\
		&
		S\subseteq U_{\chi},\quad
		S\cap U_{\psi_{i}}\neq\varnothing
		\quad\text{for every } i=1,\ldots,n
		\Bigr\},
	\end{aligned}
	\]
	and
	\[
	\begin{aligned}
		\mathcal N^{\Diamond,c}(a)
		:=
		\Bigl\{&
		\bigl(
		(U_{\varphi_{1}},\ldots,U_{\varphi_{n}};U_{\psi}),
		C
		\bigr):
		\\
		&
		\Diamond(\varphi_{1},\ldots,\varphi_{n};\psi)\in a_{2},
		\\
		&
		C\in\mathcal P_{\mathrm{fin}}(W^{c}),
		\\
		&
		C\cap U_{\psi}=\varnothing,\quad
		C\nsubseteq U_{\varphi_{i}}
		\quad\text{for every } i=1,\ldots,n
		\Bigr\}.
	\end{aligned}
	\]
	For \(n=0\), the instance conditions are understood vacuously.
\end{defn}
\subsection*{Induced instantial operations on admissible positive opens}
First, we define two maps $f_n^c$ and $g_n^c$ whose arguments are admissible positive opens from $A^c$ and whose values are subsets of $W^c$, using the open-labelled neighbourhood assignments of the canonical bitopological PINL-space. \\

\noindent Let $n\in\omega$ and let $D_1,\ldots,D_n,E\in A^c$. Define $f_n^c:(A^c)^n\times A^c\to\mathcal{P}(W^c)$ by 
\[
\begin{aligned}
	f_n^c(D_1,\ldots,D_n;E)
	=\bigl\{a\in W^c :\;&
	\text{there exists }S\in\mathcal P_{\mathrm{fin}}(W^c) \text{ such that}\\
	&
	\bigl((D_1,\ldots,D_n;E),S\bigr)\in\mathcal N^{\Box,c}(a)\bigr\}.
\end{aligned}
\]
Similarly, define $g_n^c:(A^c)^n\times A^c\to\mathcal{P}(W^c)$ by
\[
\begin{aligned}
	g_n^c(D_1,\ldots,D_n;E)
	=
	\bigl\{a\in W^c :\;&
	\text{there is no }C\in\mathcal P_{\mathrm{fin}}(W^c) \text{ such that}\\
	&
	\bigl((D_1,\ldots,D_n;E),C\bigr)\in\mathcal N^{\Diamond,c}(a)\bigr\}.
\end{aligned}
\]
Thus $(f_n^c)$ records the existence of a finite labelled witness set for the box-type operation, while $(g_n^c)$ records the absence of a finite labelled set failing the co-witness condition for the diamond-type operation.
\begin{prop}\label{MRAPO}
	Let $n\in\omega$. 
	\begin{enumerate}[(i)]
		\item For all formulas $\psi_1,\ldots,\psi_n,\chi\in\mathcal L_{\mathrm{PINL}}$,
		\[
		f_n^c(U_{\psi_1},\ldots,U_{\psi_n};U_\chi)=U_{\Box(\psi_1,\ldots,\psi_n;\chi)}.
		\]
		\item For all formulas $\varphi_1,\ldots,\varphi_n,\psi\in\mathcal L_{\mathrm{PINL}}$,
		\[
		g_n^c(U_{\varphi_1},\ldots,U_{\varphi_n};U_\psi)=U_{\Diamond(\varphi_1,\ldots,\varphi_n;\psi)}.
		\]
	\end{enumerate}
	\begin{proof}
		Let $\alpha=\Box(\psi_1,\ldots,\psi_n;\chi)$. We first prove that 
		\[
		U_\alpha= f_n^c(U_{\psi_1},\ldots,U_{\psi_n};U_\chi).
		\]
		Let $a=(a_1,a_2)\in U_\alpha$. Then
		\[
		\alpha\in a_1.
		\]
		By Lemma \ref{WA}, there exists \(S\in\mathcal P_{\mathrm{fin}}(W^c)\) such that
		\[
		S\subseteq U_\chi
		\]
		and 
		\[
		S\cap U_{\psi_i}\neq\varnothing\text{ for every } i=1,\ldots,n.
		\]
		Therefore, by the definition of \(\mathcal N^{\Box,c}(a)\),
		\[
		\bigl((U_{\psi_1},\ldots,U_{\psi_n};U_\chi),S\bigr)\in \mathcal{N}^{\Box,c}(a).
		\]
		Hence, 
		\[
		a\in f_n^c(U_{\psi_1},\ldots,U_{\psi_n};U_\chi).
		\]
		Conversely, suppose
		\[
		a\in f_n^c(U_{\psi_1},\ldots,U_{\psi_n};U_\chi).
		\]
		Then there exists a finite set $S\subseteq W^c$ such that 
		\[
		\bigl((U_{\psi_1},\ldots,U_{\psi_n};U_\chi),S\bigr)\in \mathcal{N}^{\Box,c}(a).
		\]
		By the definition of $\mathcal N^{\Box,c}(a)$, there exists formulas $\theta_1,\ldots,\theta_n,\rho\in\mathcal L_{\mathrm{PINL}}$ such that 
		\[
		\begin{aligned}
			U_{\theta_i}&=U_{\psi_i}\text{ for every } i=1,\ldots,n,\\
			U_\rho&=U_\chi,
		\end{aligned}
		\]
		and 
		\[
		\Box(\theta_1,\ldots,\theta_n;\rho)\in a_1.
		\]
		By Corollary \ref{IRC}, 
		\[
		\Box(\theta_1,\ldots,\theta_n;\rho)\dashv\vdash_{\mathrm{PINL}}\Box(\psi_1,\ldots,\psi_n;\chi).
		\]
		Since $a_1$ is theory and is closed under provable consequence, it follows that
		\[
		\Box(\psi_1,\ldots,\psi_n;\chi)\in a_1.
		\]
		Thus $a\in U_\alpha$. Hence $f_n^c(U_{\psi_1},\ldots,U_{\psi_n};U_\chi)=U_{\Box(\psi_1,\ldots,\psi_n;\chi)}$.\\
				  
		  \noindent Now we show that for all formulas $\varphi_1,\ldots,\varphi_n,\psi\in\mathcal L_{\mathrm{PINL}}$,
		  \[
		  g_n^c(U_{\varphi_1},\ldots,U_{\varphi_n};U_\psi)=U_{\Diamond(\varphi_1,\ldots,\varphi_n;\psi)}.
		  \]
	Let $\delta=\Diamond(\varphi_1,\ldots,\varphi_n;\psi)$. We first show that 
	\[
	U_\delta
	\subseteq
	g_n^c(U_{\varphi_1},\ldots,U_{\varphi_n};U_\psi).
	\]
	Let $a=(a_1,a_2)\in U_\delta$. Then $\delta\in a_1$. Assume, for contradiction, that 
	\[
	a\notin g_n^c(U_{\varphi_1},\ldots,U_{\varphi_n};U_\psi).
	\]
	Then by definition of $g_n^c$ there exists \(C\in\mathcal P_{\mathrm{fin}}(W^c)\) such that 
	\[
	\bigl((U_{\varphi_1},\ldots,U_{\varphi_n};U_\psi),C\bigr)
	\in\mathcal N^{\Diamond,c}(a).
	\]
	By the definition of $N^{\Diamond,c}(a)$, there exist formulas \(\theta_1,\ldots,\theta_n,\rho\in\mathcal{L}_{\mathrm{PINL}}\) such that 
	\[
	\begin{aligned}
	U_{\theta_i}&=U_{\varphi_i}
	\qquad\text{for every }i=1,\ldots,n,\\
	U_\rho&=U_\psi,
\end{aligned}
	\]
	and 
	\[
	\Diamond(\theta_1,\ldots,\theta_n;\rho)\in a_2.
	\]
	By Corollary \ref{IRC}, 
	\[
	\Diamond(\theta_1,\ldots,\theta_n;\rho)
	\dashv\vdash_{\mathrm{PINL}}
	\Diamond(\varphi_1,\ldots,\varphi_n;\psi).
	\]
	Since $a_2$ is a counter-theory and downward closed under provable consequence, it follows that
	\[
	\Diamond(\varphi_1,\ldots,\varphi_n;\psi)\in a_2.
	\]
	Thus $\delta\in a_1\cap a_2$, contradicting the disjointness of the prime theory pair $a=(a_1,a_2)$. Therefore, 
	\[
	a\in g_n^c(U_{\varphi_1},\ldots,U_{\varphi_n};U_\psi).
	\]
Conversely, suppose
	\[
	a\in g_n^c(U_{\varphi_1},\ldots,U_{\varphi_n};U_\psi).
	\]
	We claim that \(a\in U_\delta\), equivalently \(\delta\in a_1\). Suppose, for contradiction, that $\delta\notin a_1$. Since $a=(a_1,a_2)$ is a prime theory pair, we have $\delta\in a_2$. By Lemma \ref{CTNSL}, there exists a finite set \(C\subseteq W^c\) such that 
	\[
	C\cap U_\psi=\varnothing
	\] 
	and 
	\[
	C\nsubseteq U_{\varphi_i}
	\qquad\text{for every }i=1,\ldots,n.
	\]
 Therefore, by the definition of \(\mathcal N^{\Diamond,c}(a)\),
	\[
	\bigl((U_{\varphi_1},\ldots,U_{\varphi_n};U_\psi),C\bigr)
	\in\mathcal N_\Diamond^c(a).
	\]
	This contradicts
	\[
	a\in g_n^c(U_{\varphi_1},\ldots,U_{\varphi_n};U_\psi).
	\]
	Hence, $\delta\in a_1$, and consequently $a\in U_\delta$. \\
	\noindent Thus we have 
	\[
		g_n^c(U_{\varphi_1},\ldots,U_{\varphi_n};U_\psi)=U_{\Diamond(\varphi_1,\ldots,\varphi_n;\psi)}.
	\]
	\end{proof}
\end{prop}
Proposition \ref{MRAPO} shows that, for every $n\in \omega$, the maps $f_n^c$ and $g_n^c$ take values in $A^c$. Hence they determine $(n+1)$-ary operations 
\[
f_n^c,g_n^c:(A^c)^n\times A^c\longrightarrow A^c.
\]
Therefore by Definition \ref{TBOC}, the canonical bitopological PINL-space $\mathfrak X^{c}_{\mathrm{PINL}}$ is operation-closed. It remains to verify that these induced operations satisfy the defining laws of a $2$-$\mathrm{DLIO}$.
\begin{cor}\label{AACBPS}
	The canonical bitopological PINL-space 
	\[
	\mathfrak X^{c}_{\mathrm{PINL}}
	=
	\bigl(
	W^{c},
	\tau^{+,c},
	\tau^{-,c},
	A^{c},
	\mathcal N^{\Box,c},
	\mathcal N^{\Diamond,c}
	\bigr),
	\]
	is algebraically admissible. Consequently,
	\[
	\mathbb{A}(	\mathfrak X^{c}_{\mathrm{PINL}})=\bigl(
	A^c,\cap,\cup,\varnothing,W^c,
	(f_n^c)_{n\in\omega},
	(g_n^c)_{n\in\omega}
	\bigr)
	\]
	is a $2$-$\mathrm{DLIO}$.
	\begin{proof}
		By Note \ref{ACDLIO}, 
		\[
		(\mathcal A^c,\cap,\cup,\varnothing,W^c)
		\]
		is a bounded distributive lattice. By Proposition \ref{MRAPO}, the operations
		$f_n^c,g_n^c:(A^c)^n\times A^c\longrightarrow A^c$ are well-defined for every \(n\in\omega\). Hence the canonical bitopological \(\mathrm{PINL}\)-space is operation-closed. It remains to verify the defining \(2\)-\(\mathrm{DLIO}\) laws. Since every element of \(A^c\) is of the form \(U_\varphi\), each law follows from the corresponding modal axiom of \(\mathrm{PINL}\), together with Lemma \ref{SLAO} and Proposition \ref{MRAPO}.\\
		 For example, we verify the \((F6)\)-law. Let 	\[
		 D_i=U_{\varphi_i},\qquad E=U_\psi,\qquad G=U_\gamma,\qquad H=U_\delta .
		 \]
		 and suppose $G\cup H=W^c$. Then $U_\gamma\cup U_\delta=U_{\gamma\vee\delta}=U_\top$, and hence by Lemma \ref{SLAO}, 
		\[
		\gamma\vee\delta\dashv\vdash_{\mathrm{PINL}}\top.
		\]
		The modal axiom $(\Box 6)$ gives
		\[
		\Box(\varphi_1,\ldots,\varphi_n;\psi)
		\vdash_{\mathrm{PINL}}
		\Box(\varphi_1,\ldots,\varphi_n,\gamma;\psi)
		\vee
		\Box(\varphi_1,\ldots,\varphi_n;\psi\wedge\delta).
		\]
		Using Lemma \ref{SLAO}, we get
		\[
		U_{\Box(\varphi_1,\ldots,\varphi_n;\psi)}\subseteq U_{\Box(\varphi_1,\ldots,\varphi_n,\gamma;\psi)}\cup U_{	\Box(\varphi_1,\ldots,\varphi_n;\psi\wedge\delta)}.
		\]
		Using Proposition \ref{MRAPO}, we get 
		\[
		\begin{aligned}
			f_n^c(D_1,\ldots,D_n;E)
			\subseteq&
			f_{n+1}^c(D_1,\ldots,D_n,G;E)
			\cup
			f_n^c(D_1,\ldots,D_n;E\cap H).
		\end{aligned}
		\]
		Similarly, if
		\[
		G\cap H=\emptyset,
		\]
		then the \(\Diamond\)-cover axiom \((\Diamond6)\) gives
		\[
		g_{n+1}^c(D_1,\ldots,D_n,G;E)
		\cap
		g_n^c(D_1,\ldots,D_n;E\cup H)
		\subseteq
		g_n^c(D_1,\ldots,D_n;E).
		\]
		The remaining \(f_n^c\)- and \(g_n^c\)-laws follow in the same way from the corresponding \(\Box\)- and \(\Diamond\)-axioms of \(\mathrm{PINL}\). Therefore the induced operations satisfy all defining \(2\)-\(\mathrm{DLIO}\) laws. Hence \(\mathfrak X^c_{\mathrm{PINL}}\) is algebraically admissible, and by Proposition \ref{TBPA2}, $\mathbb A(\mathfrak X^c_{\mathrm{PINL}})$ is a \(2\)-\(\mathrm{DLIO}\).

	\end{proof}
\end{cor}
\begin{defn}\label{IBL}
	Let $\mathbb{A}=(A,\wedge_A,\vee_A,0_A,1_A,(f_n^A)_{n\in\omega},(g_n^A)_{n\in\omega})$ and $\mathbb{B}=(B,\wedge_B,\vee_B,0_B,1_B,(f_n^B)_{n\in\omega},(g_n^B)_{n\in\omega})$ be two $2$-$\mathrm{DLIO}$s. A map
	\[
	h:\mathbb A\to\mathbb B
	\]
	is a $2$-$\mathrm{DLIO}$ homomorphism, if for all \(x,y,x_1,\ldots,x_n,z\in A\) and every $n\in \omega$, the following conditions hold:
	\begin{enumerate}[(i)]
		\item \(h\) preserves finite meets:
		\[
		h(x\wedge_A y)=h(x)\wedge_B h(y);
		\]
		\item \(h\) preserves finite joins:
		\[
		h(x\vee_A y)=h(x)\vee_B h(y);
		\]
		\item \(h\) preserves the bounds:
		\[
		h(0_A)=0_B,
		\qquad
		h(1_A)=1_B;
		\]
		\item \(h\) preserves the $\Box$-type instantial operations:
		\[
		h(f_n^A(x_1,\ldots,x_n;z))
		=
		f_n^B(h(x_1),\ldots,h(x_n);h(z));
		\]
		\item \(h\) preserves the $\Diamond$-type instantial operations:
		\[
		h(g_n^A(x_1,\ldots,x_n;z))
		=
		g_n^B(h(x_1),\ldots,h(x_n);h(z)).
		\]
	\end{enumerate}
	A $2$-$\mathrm{DLIO}$ homomorphism $h:\mathbb A\to\mathbb B$ is called a $2$-$\mathrm{DLIO}$ isomorphism if \(h\) is bijective. In this case, \(\mathbb A\) and \(\mathbb B\) are said to be isomorphic.
\end{defn}
By Corollary \ref{AACBPS}, the admissible-open algebra \(\mathbb A(\mathfrak X_{\mathrm{PINL}}^c)\) is a 2-$\mathrm{DLIO}$. We now show that it is isomorphic to the Lindenbaum $2$-$\mathrm{DLIO}$ of $\mathrm{PINL}$ under the natural map sending each element of Lindenbaum algebra of PINL to its canonical positive open.
\begin{thm}\label{CRL2}
	Let $\mathfrak L_{\mathrm{PINL}}$ be the Lindenbaum $2$-$\mathrm{DLIO}$ of PINL. Define a map 
	\[
	\eta:\mathfrak L_{\mathrm{PINL}}\longrightarrow
	\mathbb A(\mathfrak X_{\mathrm{PINL}}^c)
	\]
	by $\eta([\varphi])=U_\varphi$. Then $\eta$ is an isomorphism of $2$-$\mathrm{DLIO}$s.
	\begin{proof}
		First, we show that $\eta$ is well-defined. Suppose $[\varphi]=[\psi]$. Then $\varphi\dashv\vdash_{\mathrm{PINL}}\psi$. By Lemma \ref{SLAO}, $U_\varphi=U_\psi$. Hence,
		\[
		\eta([\varphi])=\eta([\psi]).
		\]
		So, $\eta$ is well-defined. Next, we show that \(\eta\) preserves the bounded distributive lattice structure. We have
		\[
		\begin{aligned}
			\eta([\varphi]\wedge[\psi])
			&=U_{\varphi\wedge\psi}\\
			&=U_\varphi\cap U_\psi\\
			&=\eta([\varphi])\cap\eta([\psi]).
		\end{aligned}
		\]
		Similarly, $\eta([\varphi]\vee[\psi])=\eta([\varphi])\cup\eta([\psi])$. Also, $\eta([\bot])=U_\bot=\varnothing$ and $\eta([\top])=U_\top=W^c$.\\
		Now we show that $\eta$ preserves the $\Box$-type instantial operations. Let $n\in\omega$, and let
		\[
		[\varphi_1],\ldots,[\varphi_n],[\psi]\in \mathfrak L_{\mathrm{PINL}}.
		\] 
		By definition of $f_n^{\mathfrak L}$, $f_n^{\mathfrak L}([\varphi_1],\ldots, [\varphi_n]; [\psi]) = [\Box(\varphi_1,\ldots, \varphi_n; \psi)]$. Then
		\[
		\begin{aligned}
			\eta\bigl(f_n^{\mathfrak L}([\varphi_1],\ldots,[\varphi_n];[\psi])\bigr)
			=
			\eta([\Box(\varphi_1,\ldots,\varphi_n;\psi)])
			=
			U_{\Box(\varphi_1,\ldots,\varphi_n;\psi)}.
		\end{aligned}
		\]
		Now
		\[
		\begin{aligned}
			f_n^c(\eta[\varphi_1],\ldots,\eta([\varphi_n]);\eta([\psi]))
			&=f_n^c(U_{\varphi_1},\ldots,U_{\varphi_n};U_\psi).
		\end{aligned}
		\]
		By Proposition \ref{MRAPO}, 
		\[
		\begin{aligned}
		f_n^c(U_{\varphi_1},\ldots,U_{\varphi_n};U_\psi)
		&=U_{\Box(\varphi_1,\ldots,\varphi_n;\psi)}.
	\end{aligned}
		\]
		Thus
		\[
		\begin{aligned}
			\eta\bigl(f_n^{\mathfrak L}([\varphi_1],\ldots,[\varphi_n];[\psi])\bigr)
			&=	
			f_n^c\bigl(\eta[\varphi_1],\ldots,\eta([\varphi_n]);\eta([\psi])\bigr).
		\end{aligned}
		\]
		Therefore, $\eta$ preserves the operation $f_n^{\mathfrak L}$, for every $n\in \omega$.\\
		We next verify that $\eta$ preserves the $\Diamond$-type instantial operations. By definition of $g_n^{\mathfrak L}$, 
		\[
		g_n^{\mathfrak L}([\varphi_1],\ldots, [\varphi_n]; [\psi]) = [\Diamond(\varphi_1,\ldots, \varphi_n; \psi)].
		\]
		Therefore,
		\[
		\begin{aligned}
			\eta\bigl(g_n^{\mathfrak L}([\varphi_1],\ldots, [\varphi_n]; [\psi])\bigr)
			&=
			\eta([\Diamond(\varphi_1,\ldots, \varphi_n; \psi)])
			=
			U_{\Diamond(\varphi_1,\ldots,\varphi_n;\psi)}.
		\end{aligned}
		\]
		Now 
		\[
		\begin{aligned}
				g_n^c(\eta([\varphi_1]),\ldots,\eta([\varphi_n]);\eta([\psi]))
			&=g_n^c(U_{\varphi_1},\ldots,U_{\varphi_n};U_\psi).
		\end{aligned}
		\]
		By Proposition \ref{MRAPO}, 
			\[
		\begin{aligned}
			g_n^c(U_{\varphi_1},\ldots,U_{\varphi_n};U_\psi)
			&=U_{\Diamond(\varphi_1,\ldots,\varphi_n;\psi)}.
		\end{aligned}
		\]
		Thus
		\[
		\begin{aligned}
			\eta\bigl(g_n^{\mathfrak L}([\varphi_1],\ldots,[\varphi_n];[\psi])\bigr)
			&=	
			g_n^c\bigl(\eta[\varphi_1],\ldots,\eta([\varphi_n]);\eta([\psi])\bigr).
		\end{aligned}
		\]
		Therefore, $\eta$ preserves the $\Diamond$-type instantial operation $g_n^{\mathfrak L}$, for every $n\in \omega$.\\
		It remains to show that $\eta$ is bijective. To prove injectivity, suppose $\eta([\varphi])=\eta([\psi])$. Then $U_\varphi=U_\psi$. By Lemma \ref{SLAO}, 
		\[
		\varphi\dashv\vdash_{\mathrm{PINL}}\psi.
		\]
		Hence, $[\varphi]=[\psi]$. Thus $\eta$ is injective.\\
		To prove surjectivity, let $D\in A^c$. By definition of $A^c$, there exists a formula \(\varphi\in\mathcal L_{\mathrm{PINL}}\) such that 
		\[
		D=U_\varphi.
		\]
		Therefore, $D=\eta([\varphi])$. Hence $\eta$ is surjective. Thus \(\eta\) is a bijective \(2\)-\(\mathrm{DLIO}\)-homomorphism. Therefore, \(\eta\) is an isomorphism of \(2\)-$\mathrm{DLIO}$s. 
	\end{proof}
\end{thm}
The preceding theorem is a canonical admissible-open representation result. It shows that the Lindenbaum $2$-$\mathrm{DLIO}$ of PINL is recovered as the algebra of admissible positive opens of the canonical bitopological PINL-space. This may be viewed as a first step toward a duality theory for PINL.
\section{Conclusion and future work}\label{CAFW}
In this study, we have developed Positive Instantial Neighbourhood Logic with two primitive modalities, \(\Box\) and \(\Diamond\), over a distributive-lattice propositional base. We have introduced the syntax and proof system of the logic and proved soundness with respect to persistent two-sided neighbourhood semantics.
A direct canonical proof for the original persistent two-sided neighbourhood semantics is difficult, because neighbourhoods are not labelled by the modal formulas for which they are used. We have introduced an auxiliary typed persistent neighbourhood semantics, which yields the required truth lemma and typed completeness theorem.\\
We also formulated the algebraic semantics of PINL by means of \(2\)-\(\mathrm{DLIO}\)s. These algebras have a bounded distributive lattice reduct together with two families of instantial operations corresponding to the $\Box$- and \(\Diamond\)-modalities. We have proved that the Lindenbaum algebra of formulas modulo provable equivalence is a \(2\)-\(\mathrm{DLIO}\), and that the deductive system is algebraically sound and complete with respect to the class of \(2\)-\(\mathrm{DLIO}\)s.\\
Finally, we have constructed the canonical bitopological PINL-space associated with the prime-pair construction. In this space, the positive opens \(U_\varphi\) record membership of \(\varphi\) in the positive component of prime pairs, while the negative opens \(V_\varphi\) record membership of \(\varphi\) in the negative component. The admissible positive opens carry the algebraic operations induced by the typed instantial neighbourhood structure. We have proved that the algebra of admissible positive opens of the canonical bitopological PINL-space is isomorphic to the Lindenbaum \(2\)-$\mathrm{DLIO}$ of PINL. Thus the final topological result of the paper is a canonical admissible-open representation theorem.\\
\noindent There are several directions for future work. First, we can develop a Priestley-style or bitopological duality for $2$-$\mathrm{DLIO}$s. This direction is motivated by Priestley duality and its modal extensions \cite{priestley1970representation,celani1999priestley}, by the duality between bitopological spaces and $d$-frames \cite{jakl2018d}, and by the existing coalgebraic duality theory for INL \cite{bezhanishvili2020duality}. To develop such duality, it will be necessary to specify a suitable category of descriptive PINL-spaces, appropriate morphisms, and stability criteria ensuring that the instantial neighbourhood structure is represented functorially. \\
\noindent Second, one may study additional interaction axioms between the two modal operations $\Box$ and $\Diamond$, and compare the resulting algebras with the DLIO framework arising in the duality theory of INL \cite{bezhanishvili2020duality}.\\
Third, one may develop a geometric version of PINL. In that direction, the bounded distributive lattice of admissible positive opens should be replaced by a frame of opens, and the instantial operations should satisfy suitable Scott-continuity conditions. This would provide a basis for future work on the hypothesis that a geometric version of INL can be obtained from positive INL by requiring frame-theoretic continuity conditions.\\
Another direction, motivated by recent developments in inquisitive neighbourhood logic in  \cite{ciardelli2025inquisitive}, is to investigate whether PINL can be extended in an inquisitive setting. Inquisitive neighbourhood logic studies neighbourhood models with a language that can express not only statements but also questions. Since PINL is a positive instantial logic over neighbourhood structures, it would be interesting to study whether PINL can be extended in an inquisitive direction, so that its instantial modalities can also interact with questions or information states. It would also be useful to investigate whether the witness and co-witness conditions of PINL have a natural bisimulation-theoretic analysis.







\begin{thebibliography}{99}
	\bibitem{davey2002introduction} Davey, Brian A., Priestley, Hilary A.$\colon$ Introduction to lattices and order, CUP, 2002.
	\bibitem{kelly1963bitopological} Kelly, J. C.$\colon$ Bitopological Spaces, Proceedings of the London Mathematical Society, 3(1), 71-89, 1963.
	\bibitem{chellas1980modal} Chellas, Brian F.$\colon$ Modal logic: an introduction, CUP, 1980.
	\bibitem{van2017instantial} Van Benthem, Johan., Bezhanishvili, Nick.,Enqvist, Sebastian., Yu, Junhua.$\colon$ Instantial neighbourhood logic. The Review of Symbolic Logic, CUP, 10(1), 116-144, 2017.
	\bibitem{bezhanishvili2020duality} Bezhanishvili, Nick., Enqvist, Sebastian., De Groot, Jim.$\colon$ Duality for instantial neighbourhood logic via coalgebra. International Workshop on Coalgebraic Methods in Computer Science, Springer, 32-54, 2020.
	\bibitem{dunn1995positive} Dunn, J Michael.$\colon$ Positive modal logic, Studia Logica, 55(2), 301-317, 1995.
	\bibitem{priestley1970representation} Priestley, Hilary A.$\colon$ Representation of distributive lattices by means of ordered Stone spaces, Bulletin of the London Mathematical Society, 2(2), 186-190, 1970.
	\bibitem{celani1999priestley} Celani, Sergio., Jansana, Ramon. $\colon$ Priestley duality, a Sahlqvist theorem and a Goldblatt-Thomason theorem for positive modal logic, Logic Journal of the IGPL, OUP, 7(6), 683-715, 1999.
	\bibitem{jakl2018d} Jakl, Tom{\'a}{\v{s}}.$\colon$ d-Frames as algebraic duals of bitopological spaces, PhD thesis, Univerzita Karlova, Matematicko-fyzik{\'a}ln{\'\i} fakulta., 2018.
	\bibitem{celani1997new} Celani, Sergio., Jansana, Ramon.$\colon$ A new semantics for positive modal logic, Notre Dame Journal of Formal Logic, 38(1), 1-18, 1997.
	\bibitem{celani2012note} Celani, Sergio., Jansana, Ramon.$\colon$ A note on the model theory for positive modal logic, Fundamenta Informaticae, 114(1), 31-54, 2012.
	\bibitem{gehrke2005sahlqvist} Gehrke, Mai., Nagahashi, Hideo., Venema, Yde.$\colon$ A Sahlqvist theorem for distributive modal logic, Annals of pure and applied logic, 131(1-3), 65-102, 2005.
	\bibitem{kikot2018strictly} Kikot, Stanislav., Kurucz, Agi., Wolter, Frank., Zakharyaschev, Michael.$\colon$ On Strictly Positive Modal Logics with S4. 3 Frames., Advances in Modal Logic, 12, 399-418, 2018.
	\bibitem{palmigiano2004coalgebraic} Palmigiano, Alessandra.$\colon$ A coalgebraic view on positive modal logic, 327(1-2), 175-195, 2004.
	\bibitem{sadrzadeh2010positive} Sadrzadeh, Mehrnoosh., Dyckhoff, Roy.$\colon$ Positive logic with adjoint modalities: Proof theory, semantics, and reasoning about information, The Review of Symbolic Logic, 3(3), 351-373, 2010.
	\bibitem{de2021positive} De Groot, Jim.$\colon$ Positive monotone modal logic, Studia Logica, 109(4), 829-857, 2021.
	\bibitem{bezhanishvili2022coalgebraic} Bezhanishvili, Nick., De Groot, Jim., Venema, Yde.$\colon$ Coalgebraic geometric logic: basic theory, Logical Methods in Computer Science, 18, 2022.
	\bibitem{scott1970advice} Scott, Dana.$\colon$ Philosophical problems in logic: Some recent developments, 143-173, 1970.
	\bibitem{montague1970universal} Montague, Richard.$\colon$ Universal grammar, Theoria, 36, 373-398, 1970.
	\bibitem{hansen2009neighbourhood} Hansen, Helle Hvid., Kupke, Clemens., Pacuit, Eric.$\colon$ Neighbourhood structures: Bisimilarity and basic model theory, Logical Methods in Computer Science, 5(2), 2009.
	\bibitem{pacuit2017neighborhood} Pacuit, Eric.$\colon$ Neighborhood semantics for modal logic, Springer, 2017.
	\bibitem{ciardelli2025inquisitive} Ciardelli, Ivano.$\colon$ Inquisitive Neighborhood Logic, Journal of Logic, Language and Information, 34(5), 419-461, 2025.
	\bibitem{vickers2004double} Vickers, Steven.$\colon$ The double powerlocale and exponentiation: a case study in geometric logic, Theory and Applications of Categories, 12(13), 372-422, 2004.
	\bibitem{vickers1989topology} Vickers, Steven.$\colon$ Topology via logic, Cambridge University Press, 1989.
\end{thebibliography}
\end{document}